\documentclass[letterpaper,twocolumn,10pt]{article}
\usepackage[top=0.75in,bottom=0.75in,left=0.75in,right=0.75in]{geometry}
\usepackage[T1]{fontenc}
\usepackage[utf8]{inputenc}
\usepackage{mathptmx}
\usepackage[kerning,spacing]{microtype}
\usepackage{amsmath}
\usepackage{amssymb}
\usepackage{amsthm}
\usepackage{graphicx}
\usepackage{booktabs}
\usepackage{array}
\usepackage{colortbl}
\usepackage{multirow}
\usepackage{tcolorbox}
\usepackage{cite}
\usepackage{float}
\usepackage{cuted}
\usepackage{caption}
\usepackage[hidelinks]{hyperref}
\usepackage{xurl}
\hypersetup{
  pdftitle={The Safeguard Worked. Is the LLM System Safer?},
  pdfauthor={Pingyu Wu, Weiming Zhang, Nenghai Yu}
}

\newtcolorbox{takeaway}{%
  colback=white,
  colframe=violet!55!black,
  boxrule=1.2pt,
  arc=1.5pt,
  left=5pt,right=5pt,top=4pt,bottom=4pt,
  before upper={{\normalfont\bfseries Takeaway}\hspace{0.6em}\itshape}}

\newtheorem{theorem}{Theorem}
\newtheorem{proposition}[theorem]{Proposition}
\newtheorem{corollary}[theorem]{Corollary}

\begin{document}

\date{}

\title{The Safeguard Worked. Is the LLM System Safer?}

\author{%
  Pingyu Wu\textsuperscript{1,2},\quad
  Weiming Zhang\textsuperscript{1}\thanks{Corresponding author.},\quad
  Nenghai Yu\textsuperscript{1}\\[0.6ex]
  \small \textsuperscript{1}University of Science and Technology of China\\
  \small \textsuperscript{2}Hefei AiDA Lab\\
  \small \texttt{wupingyu@mail.ustc.edu.cn},
  \texttt{zhangwm@ustc.edu.cn}
}

\maketitle

\begin{abstract}
Safeguards in deployed LLM services are evaluated by refusal, attack success,
and policy violation rates.  Those rates characterize how a control performed
on the requests it was tested on.  A deployment has to answer a different
question:
how much help with harmful tasks the service still gives an attacker who keeps
adapting or finds another way in.  We determine what each reported result
implies for that question, allowing results from different safeguard families
to be compared under one deployment criterion.  The evidence requirements are
strongly asymmetric.  One attack that obtains harmful help from the deployed
service suffices to establish that such help remains, and such attacks appear
repeatedly in the coded record.  Establishing that little
remains cannot follow from the safeguard's own numbers alone; it also requires
evidence about what the surrounding system still allows after the safeguard
performs its local function.  Such evidence is supported
or derived in only a small minority of the depth-coded claims, and one such
claim bounds its scoped residual.  A better local score is therefore not, by
itself, a stronger claim about the deployment.  Safeguard research cannot
stop at raising local scores; a gain has to be judged by whether it makes a
deployed system any safer.

\end{abstract}

\section{Introduction}
\label{sec:introduction}

Production LLM services use safeguards because the same capabilities
that support legitimate work can also assist prohibited or harmful
activity~\cite{solaiman2019release,li2024wmdp,DBLP:conf/uss/ZhanCSS25,OpenAI2026GPT56SystemCard}.
Deployed and proposed systems combine model-level refusal
behavior~\cite{bai2022constitutional,dai2024saferlhf,arditi2024refusal,zou2024circuitbreakers}
with runtime
classifiers~\cite{inan2023llamaguard,han2024wildguard,Cunningham2026ConstitutionalClassifiersPP},
account
monitoring~\cite{Anthropic2025MisuseReport,OpenAI2025DisruptingMaliciousUses},
access controls, tool policies, and
permissions~\cite{Debenedetti2025CaMeL,costa2025fides,shi2025progent}, and
execution containment~\cite{wu2025isolategpt,Anthropic2026Containment}.
Model-level refusal has failure modes of its own: capability
objectives compete with refusal objectives, and safety training generalizes
less broadly than the capabilities it
constrains~\cite{wei2023jailbroken}.
Evaluations report direct outcomes such as refusal and attack-success
rates~\cite{mazeika2024harmbench,chao2024jailbreakbench}, policy
violations~\cite{Cunningham2026ConstitutionalClassifiersPP}, and benign-task
utility~\cite{rottger2024xstest,debenedetti2024agentdojo}.  These measures
answer useful questions: whether a control rejected a request, whether an
attack crossed a specified boundary, and what capability the control removed
from benign users.

The conclusion supported by these measures changes when the attacker,
interaction history, or measured outcome changes.  Nasr et al. evaluated 12 jailbreak and prompt-injection
defenses using adaptive, defense-aware attacks.  Their attacks exceeded 90\%
success against most defenses, although a majority of the original
evaluations had reported rates near
zero~\cite{Nasr2026AttackerMovesSecond}.  Holding scenarios, attackers,
defenders, and scoring fixed, Jain et al. measured 0 to 1\% attack success on
the first turn and 5.4 to 14.0\% after 15 rounds of adaptation to defender
feedback~\cite{Jain2026AdaptiveAdversaries}.  The adapting attacker need not
be a person: Hagendorff et al. had reasoning models act as autonomous
jailbreak agents against other models, so the adaptation behind the two
results above requires no human in the
loop~\cite{hagendorff2026autonomous}.  FragFuse changed the system path
by distributing a prohibited request across agent memory; it achieved 86.3\%
access-control bypass but 41.1\% end-to-end harmful-task
success~\cite{Rao2026FragFuse}.  A near-zero result against a fixed or
first-turn attack therefore cannot be reused unchanged for a deployment
facing adaptive attackers, and a bypass rate cannot be read as the rate of
completed harm.  Doing so can favor a control whose reported advantage
disappears under the deployed attack
process~\cite{andriushchenko2025adaptive,zhan2025adaptive}.

These findings improve individual measurements but expose a problem that
additional measurements of the same kind cannot solve.  A deployment must
decide how much
harmful assistance the guarded service still
supplies~\cite{peppin2025reality,lukosiute2025cyberrisk}.  Surveys and SoKs
make safeguard techniques and benchmark
configurations comparable~\cite{dong2024convsafety,wang2026sokguardrails,hong2025promptsok,xu2026robustnesssok};
guidance asks evaluators to declare threat actors, requirements, and
supporting
evidence~\cite{AISISafeguardsTeam2025Principles,casper2024blackbox,qi2025durability};
and safety-case and uplift work connects particular measurements to broader
models under explicit
assumptions~\cite{Clymer2025SafetyCase,Mahato2026SafeguardConditionedUplift,Vaccaro2026HarmfulCapabilityUplift}.
Each of these lines of work handles one step from a safeguard mechanism to a deployment decision.  They stop at a comparable
reported quantity or at a single deployment argument
(Section~\ref{sec:prior-systematizations}).

We supply the conversion between them: we read each reported result against two coordinates, the outcome and attacker class it was measured under.  We derive
and prove the strongest deployment conclusion that result supports.  The
conversion changes what an evaluation result can justify: if a reported
quantity supports no nontrivial deployment conclusion, reanalysis cannot
produce one.  The evaluator must instead measure a different quantity or
establish a missing system property.  A common coding instrument records
which coordinates a source
supplies and keeps each conclusion auditable back to the reported evidence.

Applied to 198 distinct papers at two levels of detail, the instrument yields
an asymmetric coded record.  Establishing how little harmful assistance
remains through a safeguard check requires three deployment facts together.
The one supplied least often is what remains possible after the check
succeeds: five of the 24 depth-coded claims supply it.  Across both coding
strata, one coded claim rules out the worst case.  The opposite direction
needs one number, not a conjunction: of the 152 wide-coded claims, 108 report
an adverse value putting the residual above zero.
Section~\ref{sec:implications} treats these counts as hypotheses about what an
evaluation should report.

Our contributions are:

\begin{itemize}
  \item \textbf{From results to deployment conclusions.}  We determine the
    strongest conclusion each reported safeguard result supports about
    remaining harmful assistance, and prove when no stronger conclusion
    follows from that result alone.  This shows when reanalysis can help and
    when a different measurement is necessary
    (Section~\ref{sec:deployment-theory}).

  \item \textbf{Missing evidence made explicit.}  We turn those
    determinations into an auditable coding instrument that records the source
    facts each conclusion requires.  Applied to a paper, it identifies the
    precise missing fact that prevents a deployment conclusion without
    rerunning the safeguard
    (Section~\ref{sec:theory-guided-systematization}).

  \item \textbf{A case-based stress test.}  Across the coded claims, the
    supported conclusion tracks the evidence reported rather than the
    technique category, which makes a benchmark gain a hypothesis about
    deployment, not a guarantee (Section~\ref{sec:deployment}).
\end{itemize}

\section{Definitions and Scope}
\label{sec:definitions-scope}

\subsection{Deployment Safety}

We define deployment safety as the governance of the harmful assistance a
focal service still supplies once its safeguards have acted, subject to
constraints protecting basic rights and legitimate use.

\subsection{Evaluation Anchor and Scope}

A concrete deployment assessment must fix the conditions under which its
decision is made.  Research papers may supply evidence for only a subset;
Section~\ref{sec:theory-guided-systematization} defines the anchor
coordinates recoverable from a source.  The full anchor has seven
coordinates:
\begin{itemize}
  \item $S$: the focal LLM service without the evaluated safeguard
    intervention.
  \item $D$: the evaluated intervention; applying it yields $S[D]$, the
    same service with the intervention deployed, all other parts
    unchanged.
  \item $\Sigma$: the declared nonempty class of attacker strategies,
    including its resource and interaction limits.
  \item $\mathcal L$: the basic-rights and legitimate-use constraints
    every admissible deployment must satisfy, supplied by the concrete
    application and its governing institutions rather than inferred from a
    method paper.
  \item $q$: the minimum normal-utility requirement, measured on a declared
    benign reference population and utility scale; it scalarizes one component
    of $\mathcal L$ without replacing the rest.
  \item $Z$: the scalar harmful-assistance functional, normalized to
    $[0,1]$, with lower values less adverse.  It measures what the service
    supplies toward the adverse outcome through its outputs, actions, and
    state transitions, not the harm an attacker ultimately achieves with
    that supply.
  \item $\mathcal E$: the fixed task, target, operating environment,
    benign reference conditions, and evaluation horizon, including
    allowed interactions and retries, plus any persistence or recovery
    window relevant to $Z$.
\end{itemize}

We call
\begin{equation}
  \mathfrak a=(S,D,\mathcal E,Z,\Sigma,\mathcal L,q)
  \label{eq:evaluation-anchor}
\end{equation}
an evaluation anchor.  A deployment-safety claim is an assertion
about a particular $S[D]$ under one such anchor: deployment safety is a
property of an intervention in a specified deployment, not an intrinsic
property of an isolated mechanism.  Its scope is the set of attacker
strategies and operating conditions within the anchor for which the
conclusion is asserted to hold.  The anchor is the interface through which a
deployment enters the analysis: the bounds of
Section~\ref{sec:deployment-theory} hold for whatever outcome, attacker
class, and utility target it declares.  The same bounds therefore
cover outcomes as different as an action the service must not take and
knowledge an attacker must not gain.

All anchor coordinates are held fixed across comparisons; only the presence
of $D$ differs.  The attacker may choose any admissible strategy in
$\Sigma$.  A deployment that violates $\mathcal L$ is inadmissible
regardless of its value of $Z$.  We suppress $\mathfrak a$ below.

\subsection{Residual Assistance and Safeguard Effect}

Write $Z(S[D])$ for the adverse value the guarded service supplies against an
attacker in $\Sigma$, and $Z(S)$ for the same quantity when that service
runs without $D$.  The primary deployment quantity is the residual harmful
assistance $Z(S[D])$ itself: what the service still supplies once the
safeguard has acted.  The safeguard effect
\begin{equation}
  V_Z(D)=Z(S)-Z(S[D])
  \label{eq:safeguard-effect}
\end{equation}
diagnoses what $D$
removed~\cite{Mahato2026SafeguardConditionedUplift,Vaccaro2026HarmfulCapabilityUplift}.
The two quantities partition the unguarded
total,
\begin{equation}
  V_Z(D)+Z(S[D])=Z(S),
  \label{eq:effect-residual-identity}
\end{equation}
so a safeguard reallocates the service's assistance between a removed part
and a retained part.  Reporting $V_Z(D)$ alone leaves $Z(S[D])$ undetermined.

A deployment declares a tolerance $\tau\in[0,1]$ on the retained part and
requires
\begin{equation}
  Z(S[D])\leq\tau.
  \label{eq:residual-criterion}
\end{equation}
The strict boundary is $\tau=0$, at which the service supplies nothing
toward the anchored outcome; we call a certificate for that case a
\emph{zero-residual certificate}.  Section~\ref{sec:deployment-theory}
determines which outcomes admit one and what evidence any $\tau$ requires.
A deployment decision also verifies the conditions in $\mathcal L$ and the
utility requirement $q$, so a service cannot meet the criterion by refusing
service or by excluding legitimate users.

Because $Z$ is normalized, $Z(S[D])\in[0,1]$ always holds; the content of a
certificate is the tighter bound it supplies.  The next section derives what
each kind of published evidence supports on this scale.

\section{What a Reported Quantity Can Bound}
\label{sec:deployment-theory}

Section~\ref{sec:definitions-scope} identifies $Z(S[D])$, the assistance a
guarded service still supplies, as the quantity a deployment decision needs.
A local score does not fix it: a score describes one test, whereas $Z(S[D])$
ranges over every strategy the declared attacker may run.

This section instead asks the question relevant to deployment.  Given a
quantity a paper reports, measured on the declared outcome scale and produced
inside the declared attacker class, what is the tightest bound on $Z(S[D])$
that follows from it?  Every bound in Table~\ref{tab:exchange-rate} is
\emph{sharp}: no smaller upper bound and no larger lower bound follow from
the reported quantities alone.  Section~\ref{subsec:exchange-rate}
collects the results as one schedule, and
Section~\ref{sec:theory-guided-systematization} gives the coding rule that
performs that reading and applies the schedule to the
literature.  Complete proofs and the attaining constructions are in
Appendix~\ref{app:deployment-theory}.

\subsection{The Deployed Quantity}
\label{subsec:deployment-game}

Fix the anchor
$\mathfrak a=(S,D,\mathcal E,Z,\Sigma,\mathcal L,q)$
from Section~\ref{sec:definitions-scope}.  Every statement below holds these
coordinates fixed.

Let $\Omega_D$ be the set of complete security-relevant trajectories under
$D$; each $\omega\in\Omega_D$ carries the evidence history $H$ available
before each decision.  Declaring $Z$ also fixes a value-relevant projection
$\phi_Z:\Omega_D\rightarrow\mathcal T_Z$, written $T=\phi_Z(\omega)$.  It
retains the information needed by the legitimate value $b(T)$ and the adverse
continuation value $v_Z(T)$, both in $[0,1]$.

The key quantity is the continuation value.  $v_Z(T)$ is the largest
adverse value the service still supplies from the retained state over the
remaining horizon, counting outputs already released, actions the policy
still permits, accumulated state, and further attempts.  It is a property of
the service's own interface and policy, which makes it measurable.

A causal policy $g\in\mathcal G_D$ implements $D$ and satisfies every
non-scalar condition in $\mathcal L$.  The deployment commits to $g$ before
the attacker chooses $\sigma\in\Sigma$.  For a transferable artifact, the
release remains in $T$ and $\Sigma$ includes every permitted modification,
which places post-release adaptation inside the evaluated deployment.

Let $P_{\mathrm b}^{g}$ be the benign-reference trajectory law on $\Omega_D$
and $P_{\mathrm m,\sigma}^{g}$ the law induced by attack $\sigma$.  Randomized
attacks and their total-variation limits form
\begin{equation}
  \mathcal C_g=
  \overline{\operatorname{co}}
  \{P_{\mathrm m,\sigma}^{g}:\sigma\in\Sigma\}.
  \label{eq:attacker-trace-class}
\end{equation}
Convexification and total-variation closure do not change the supremum of a
bounded value functional.  The deployment quantities are
\begin{equation}
  W_Z(g)=
  \sup_{Q\in\mathcal C_g}
  \mathbb E_Q[v_Z(T)],
  \qquad
  B(g)=\mathbb E_{P_{\mathrm b}^{g}}[b(T)],
  \label{eq:actual-deployment-value}
\end{equation}
and $\mathcal G_D(q)=\{g\in\mathcal G_D:B(g)\geq q\}$ collects the policies
that also meet the utility target.  A committed deployment realizes
\begin{equation}
  Z(S[D])=W_Z(g),
  \label{eq:guarded-value-identification}
\end{equation}
so every bound below is a bound on the quantity of
Section~\ref{sec:definitions-scope}.

One distinction governs the whole section.  An upper bound on
$W_Z(g)$ must cover every attacker law $Q\in\mathcal C_g$, because $W_Z(g)$ is
the supremum over that class.  A lower bound, by contrast, needs only one
attainable attack law.  Thus one observed attack can establish that harmful
assistance remains, whereas showing that little remains requires evidence
covering the full attacker class.  This quantifier asymmetry, not mechanism
strength, explains the different evidence requirements, which adversarial
example defenses established
empirically~\cite{CarliniWagner2017NotEasilyDetected,Tramer2020AdaptiveAttacks}.

\subsection{Lower Bounds: Assistance That Remains}
\label{subsec:lower-certificates}

A lower bound identifies adverse continuation the deployment cannot remove.
Because $W_Z(g)$ is a supremum, any single law bounds it from below, and the
constructions differ only in how far they quantify: over one executed attack,
over the one committed policy, or over every policy the architecture can
realize.

\subsubsection{One Executed Attack}

\begin{proposition}[Attack-witness bound]
\label{prop:attack-witness}
For every $\sigma\in\Sigma$,
\begin{equation}
  W_Z(g)\geq\mathbb E_{P_{\mathrm m,\sigma}^{g}}[v_Z(T)].
  \label{eq:attack-witness-bound}
\end{equation}
\end{proposition}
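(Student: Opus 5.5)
The bound follows from the definition of $W_Z(g)$ in \eqref{eq:actual-deployment-value} as a supremum over the attacker trace class $\mathcal C_g$. The plan has two steps: show that the law induced by any admissible attack lies in the class over which the supremum is taken, then use that a supremum dominates each of its terms.

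\textbf{Membership.} Fix $\sigma\in\Sigma$. By construction \eqref{eq:attacker-trace-class}, $\mathcal C_g$ is the total-variation closure of the convex hull of $\{P_{\mathrm m,\sigma'}^{g}:\sigma'\in\Sigma\}$. Any set is contained in its convex hull, and any set is contained in its closure. Hence $P_{\mathrm m,\sigma}^{g}\in\mathcal C_g$. This uses only that $g$ is committed before $\sigma$ is chosen, so $P_{\mathrm m,\sigma}^{g}$ is a well-defined trajectory law on $\Omega_D$.

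\textbf{Well-definedness of the expectation.} The adverse continuation value $v_Z$ takes values in $[0,1]$, and $T=\phi_Z(\omega)$ is a measurable projection of the trajectory. So $v_Z(T)$ is a bounded measurable function on $\Omega_D$. Its expectation under every $Q\in\mathcal C_g$ therefore exists and lies in $[0,1]$, and $W_Z(g)$ is a finite supremum.

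\textbf{Conclusion.} Since $P_{\mathrm m,\sigma}^{g}$ is one of the laws over which the supremum in \eqref{eq:actual-deployment-value} ranges,
\begin{equation*}
  W_Z(g)=\sup_{Q\in\mathcal C_g}\mathbb E_Q[v_Z(T)]\geq\mathbb E_{P_{\mathrm m,\sigma}^{g}}[v_Z(T)].
\end{equation*}
Via \eqref{eq:guarded-value-identification}, the same inequality is a lower bound on $Z(S[D])$.

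The only step needing any care is measurability: one must confirm that $\phi_Z$ and $v_Z$ are measurable with respect to the $\sigma$-algebra on $\Omega_D$ on which the laws are defined. This is implicit in declaring $Z$ and is assumed here. No step is a real obstacle.

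If the appendix also asks for sharpness, the attaining construction is simple. Take $\Sigma=\{\sigma\}$. Then $\mathcal C_g=\{P_{\mathrm m,\sigma}^{g}\}$, and the inequality holds with equality.
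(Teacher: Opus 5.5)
Your proof is correct and matches the paper's. The paper's appendix argues exactly that $P_{\mathrm m,\sigma}^{g}\in\mathcal C_g$, so the supremum defining $W_Z(g)$ is at least the value at that law; your remarks on measurability and the singleton-class equality case add detail the paper leaves implicit.
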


The law $P_{\mathrm m,\sigma}^{g}$ lies in $\mathcal C_g$, so the supremum is
at least its value.  Write $L^{\mathrm{wit}}$ for this lower endpoint.  It is the
paper's least evidentially demanding bound: an attack run inside the declared
class against the deployed configuration supplies its adverse value as a lower
endpoint, and the bound reaches no further than the strategies actually run.

\subsubsection{Simulating a Committed Policy}
\label{subsec:simulation-collapse}

When no such attack was executed, a floor still follows from benign
measurements alone, provided the attacker can reproduce the value-relevant
behavior.  Define the attacker-to-benign distance on value-relevant
coordinates,
\begin{equation}
  \delta_T(g)=
  \inf_{Q\in\mathcal C_g}
  \operatorname{TV}\bigl((P_{\mathrm b}^{g})_T,Q_T\bigr),
  \label{eq:value-simulation-distance}
\end{equation}
and write $[x]_+=\max\{x,0\}$.

\begin{proposition}[Committed-policy simulation bound]
\label{prop:committed-policy-simulation}
Every committed policy satisfies
\begin{equation}
  W_Z(g)
  \geq
  \left[
    \mathbb E_{P_{\mathrm b}^{g}}v_Z-\delta_T(g)
  \right]_+.
  \label{eq:committed-policy-simulation-bound}
\end{equation}
\end{proposition}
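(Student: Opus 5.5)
The plan is to compare expectations of the bounded function $v_Z\in[0,1]$ under two laws that are close in total variation on the value-relevant coordinates. The key elementary fact is this: for any two probability laws $P,Q$ on $\mathcal T_Z$ and any measurable $f$ with values in $[0,1]$,
\begin{equation}
  \bigl|\mathbb E_P f-\mathbb E_Q f\bigr|\leq\operatorname{TV}(P,Q).
\end{equation}
It holds because $\mathbb E_Pf-\mathbb E_Qf=\int f\,d(P-Q)$ is maximized by an indicator of the set where $P>Q$, giving exactly $\operatorname{TV}$ under the sup-over-events convention. If the paper's TV convention carries a factor $2$, the constant changes accordingly; I will fix the sup-over-events normalization so the displayed bound has no extra factor. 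I apply this with $P=(P_{\mathrm b}^{g})_T$, $Q=Q_T$ and $f=v_Z$. Since $v_Z(T)$ depends on $\omega$ only through $T=\phi_Z(\omega)$, expectations on $\Omega_D$ equal expectations of the pushforwards on $\mathcal T_Z$.

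The steps, in order, are as follows.

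First, I fix $\varepsilon>0$ and choose $Q\in\mathcal C_g$ with $\operatorname{TV}((P_{\mathrm b}^{g})_T,Q_T)\leq\delta_T(g)+\varepsilon$. Such a $Q$ exists by the definition of the infimum in \eqref{eq:value-simulation-distance}, and $\mathcal C_g$ is nonempty because $\Sigma$ is.

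Second, the inequality above yields
\begin{equation}
  \mathbb E_Q[v_Z(T)]\geq\mathbb E_{P_{\mathrm b}^{g}}v_Z-\delta_T(g)-\varepsilon .
\end{equation}

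Third, because $Q\in\mathcal C_g$, the definition \eqref{eq:actual-deployment-value} gives $W_Z(g)\geq\mathbb E_Q[v_Z(T)]$, exactly as in Proposition~\ref{prop:attack-witness}, but now using a member of the closed convex hull rather than a single $P_{\mathrm m,\sigma}^{g}$. This is legitimate because $W_Z$ is defined as a supremum over all of $\mathcal C_g$. The remark after \eqref{eq:attacker-trace-class} confirms that this supremum equals the supremum over the pure attack laws, so nothing is lost by using the hull.

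Fourth, I let $\varepsilon\to0$. Finally, I combine with $W_Z(g)\geq0$, which holds since $v_Z\geq0$, to obtain the positive part in \eqref{eq:committed-policy-simulation-bound}.

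The main point requiring care is not any hard estimate but the bookkeeping of measurability and marginals. I need $v_Z$ to be a measurable $[0,1]$-valued function of $T$, and the TV distance must be taken between the $T$-marginals, not the full trajectory laws, so that the bound uses only the value-relevant coordinates. I also need the infimum in $\delta_T$ to range over the same class $\mathcal C_g$ as the supremum in $W_Z$. Both conditions are given by the setup, so the argument should go through directly.
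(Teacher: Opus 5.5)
Your proposal is correct and matches the paper's proof step for step: choose $Q_\xi\in\mathcal C_g$ within $\xi$ of the infimum defining $\delta_T(g)$, apply the bounded-function total-variation inequality to $v_Z$, use $W_Z(g)\geq\mathbb E_{Q_\xi}v_Z$, let $\xi\to0$, and obtain the positive part from $W_Z(g)\geq0$. Your choice of the sup-over-events normalization is also the one implicit in the paper's constant-one inequality $|\mathbb E_Pf-\mathbb E_Qf|\leq\operatorname{TV}(P,Q)$.
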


Write $L^{\mathrm{sim}}$ for this endpoint.  The coefficient of $\delta_T(g)$
is one and cannot be improved, since a two-point construction attains the
bound.  A safeguard whose protective context the attacker can copy is one
reported case where this bound binds~\cite{Wu2026CopyableContext}.

The premise this endpoint needs is an upper bound on $\delta_T(g)$, and the next
result fixes what kind of measurement can supply one.

\begin{proposition}[Sequential simulation certificate]
\label{prop:sequential-simulation}
Suppose that under one $\sigma\in\Sigma$, every attacker-generated evidence
update $t$ and every history on which the benign and malicious processes
remain coupled satisfy
\begin{equation}
  \operatorname{TV}\!\left(
    P_{\mathrm b}^{g}(H_t\in\cdot\mid h),
    P_{\mathrm m,\sigma}^{g}(H_t\in\cdot\mid h)
  \right)\leq\eta_t,
  \label{eq:per-step-simulation}
\end{equation}
with all deployment-controlled transitions using the same committed $g$ and
every remaining transition having the same conditional law in both processes.
Then
\begin{equation}
  \delta_T(g)
  \leq
  1-\prod_t(1-\eta_t).
  \label{eq:sequential-simulation-bound}
\end{equation}
\end{proposition}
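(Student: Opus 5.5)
The plan is to bound $\delta_T(g)$ by the total-variation distance between the full trajectory laws under the single attack $\sigma$. First, $P_{\mathrm m,\sigma}^{g}\in\mathcal C_g$, so by \eqref{eq:value-simulation-distance}
\[
\delta_T(g)\le\operatorname{TV}\bigl((P_{\mathrm b}^{g})_T,(P_{\mathrm m,\sigma}^{g})_T\bigr)\le\operatorname{TV}\bigl(P_{\mathrm b}^{g},P_{\mathrm m,\sigma}^{g}\bigr).
\]
The second inequality holds because $T=\phi_Z(\omega)$ is a measurable function of the trajectory, and total variation cannot increase under pushforward (data processing). Since the trajectory is determined by the sequence of evidence histories, it then suffices to bound the TV between the two laws of the whole history sequence.

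For that, I would build a sequential maximal coupling. The trajectory is generated step by step, and each transition is one of three kinds:
\begin{itemize}
\item a deployment-controlled transition, drawn from the same committed $g$ in both processes;
\item a remaining environment transition, with the same conditional law in both processes by hypothesis;
\item an attacker-generated evidence update $t$.
\end{itemize}
Run the two processes jointly. While the histories agree, couple a transition of the first two kinds identically, so agreement is preserved with probability one. At an attacker update $t$, use a maximal coupling of the two conditional laws in \eqref{eq:per-step-simulation}. Given coupled history $h$, the two draws then agree with probability at least $1-\eta_t$. After the first disagreement, let the processes evolve independently, each with its own conditional laws; this keeps the correct marginals.

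By induction over the steps, the probability that the histories are still coupled after all attacker updates is at least $\prod_t(1-\eta_t)$. This uses the tower property: the per-step bound holds uniformly over every coupled history $h$, so the conditional agreement probabilities multiply. The coupling inequality, $\operatorname{TV}\le\Pr[\text{coupled trajectories differ}]$, then gives $\operatorname{TV}(P_{\mathrm b}^{g},P_{\mathrm m,\sigma}^{g})\le 1-\prod_t(1-\eta_t)$, which is \eqref{eq:sequential-simulation-bound}.

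The main obstacle is making the step-by-step coupling rigorous. The conditional laws must be given as measurable kernels, so the maximal couplings have to be chosen measurably in $h$. A Polish-space or discrete history assumption would supply regular conditional probabilities and measurable maximal couplings. If the horizon is infinite, I would take the product as a limit and pass TV to the limit through the finite-horizon bounds, which decrease monotonically. A further point needs care: the committed $g$ is causal, so its decisions depend only on the current history. On coupled paths both processes therefore take the identical action, and this is what lets the non-attacker steps contribute no extra TV.
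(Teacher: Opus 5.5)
Your proposal is correct and follows the paper's proof essentially step for step: both note $P_{\mathrm m,\sigma}^{g}\in\mathcal C_g$ and that projection cannot increase total variation, couple deployment-controlled and remaining draws through their common kernels, and maximally couple each attacker update. Both then apply the chain rule and the coupling characterization of total variation. Your remarks on measurable selection of the maximal couplings and on the infinite-horizon limit fill in details the paper leaves implicit; they do not change the route.
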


The premise is conditional on every adaptive history.  A marginal error rate
measured on a fixed suite supplies no $\eta_t$, because it constrains an
average over histories rather than the kernel after any particular one.  This
is the first row of Table~\ref{tab:exchange-rate} that yields no nontrivial
endpoint: a fixed-suite rate, however low, supports no simulation bound.

\begin{takeaway}
Harmful help can be shown to remain without running an attack.
\end{takeaway}

\subsubsection{From One Policy to the Attainable Frontier}
\label{subsec:static-frontier}

Evidence about the evaluated operating point does not by itself describe what
any admissible policy could attain.  The architecture envelope is
\begin{equation}
  R_{D,Z}(q)=
  \inf_{g\in\mathcal G_D(q)}W_Z(g).
  \label{eq:dynamic-deployment-value}
\end{equation}
The benign $T$ laws the architecture realizes are
\begin{equation}
  \mathcal K_D^{\mathrm{real}}
  =\{(P_{\mathrm b}^{g})_T:g\in\mathcal G_D\},
  \label{eq:realizable-law-set}
\end{equation}
whose least adverse value at utility target $q$ is
\begin{equation}
  \Gamma_{D,Z}^{\mathrm{real}}(q)=
  \inf_{\substack{\mu\in\mathcal K_D^{\mathrm{real}}:\
                   \mathbb E_\mu b\geq q}}
  \mathbb E_\mu v_Z.
  \label{eq:realizable-frontier}
\end{equation}
Supported contracts and known trace constraints may instead identify an outer
relaxation $\mathcal K_D^{\mathrm{out}}\supseteq\mathcal K_D^{\mathrm{real}}$
with frontier $\Gamma_{D,Z}^{\mathrm{out}}(q)$.

\begin{proposition}[Frontier order and restriction monotonicity]
\label{prop:frontier-order-restriction}
For every target feasible for both frontiers,
$\Gamma_{D,Z}^{\mathrm{out}}(q)\leq\Gamma_{D,Z}^{\mathrm{real}}(q)$.  Writing
$\Gamma_{\mathcal K}$ for the same optimization over a law set $\mathcal K$
and holding both value functions fixed,
\begin{equation}
  \mathcal K'\subseteq\mathcal K
  \quad\Longrightarrow\quad
  \Gamma_{\mathcal K'}(q)\geq\Gamma_{\mathcal K}(q).
  \label{eq:pure-restriction-monotonicity}
\end{equation}
\end{proposition}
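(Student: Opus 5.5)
Both claims reduce to one elementary fact: an infimum over a larger feasible set is no larger than an infimum over a smaller one. I will prove the restriction monotonicity \eqref{eq:pure-restriction-monotonicity} first and then obtain the frontier order as the special case $\mathcal K'=\mathcal K_D^{\mathrm{real}}$, $\mathcal K=\mathcal K_D^{\mathrm{out}}$.

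For the monotonicity step, fix $q$ and hold $b$ and $v_Z$ fixed. Define the feasible sets
\[
F_{\mathcal K}(q)=\{\mu\in\mathcal K:\mathbb E_\mu b\geq q\},
\qquad
F_{\mathcal K'}(q)=\{\mu\in\mathcal K':\mathbb E_\mu b\geq q\}.
\]
The utility constraint is a pointwise condition on each law $\mu$ and does not depend on the ambient set. Therefore $\mathcal K'\subseteq\mathcal K$ gives $F_{\mathcal K'}(q)\subseteq F_{\mathcal K}(q)$.

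Now take any $\mu\in F_{\mathcal K'}(q)$. Then
\[
\mathbb E_\mu v_Z\geq\inf_{\nu\in F_{\mathcal K}(q)}\mathbb E_\nu v_Z=\Gamma_{\mathcal K}(q),
\]
and taking the infimum over $\mu$ yields $\Gamma_{\mathcal K'}(q)\geq\Gamma_{\mathcal K}(q)$. The expectations are well defined because $b$ and $v_Z$ take values in $[0,1]$ and are measurable on $\mathcal T_Z$, so every quantity involved is finite.

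The only real subtlety is feasibility, that is, empty feasible sets. With the convention $\inf\emptyset=+\infty$, the inequality still holds whenever $F_{\mathcal K'}(q)$ is empty. If instead $F_{\mathcal K}(q)$ is empty, then $F_{\mathcal K'}(q)$ is empty too, both sides equal $+\infty$, and the claim holds trivially. This is exactly why the statement restricts to targets "feasible for both frontiers": under that hypothesis both values lie in $[0,1]$ and no convention is needed. I will state this explicitly, noting that it is harmless because the feasibility of $\mathcal K'$ implies that of $\mathcal K$.

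For the frontier order, I then apply the monotonicity with $\mathcal K'=\mathcal K_D^{\mathrm{real}}\subseteq\mathcal K_D^{\mathrm{out}}=\mathcal K$, which is the defining property of an outer relaxation. Both frontiers use the same value functions $b$ and $v_Z$ evaluated on the same projection $\mathcal T_Z$, which is the condition "holding both value functions fixed." The result is $\Gamma^{\mathrm{out}}_{D,Z}(q)\leq\Gamma^{\mathrm{real}}_{D,Z}(q)$.

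I expect no genuine obstacle. The one point needing care is to confirm that the outer relaxation is taken over laws on the same space $\mathcal T_Z$ with the same $b$ and $v_Z$. Otherwise the two feasible sets are not comparable, and the inclusion argument does not apply.
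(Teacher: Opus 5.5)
Your proposal is correct and matches the paper's argument. Both rest on the same fact: $\mathcal K'\subseteq\mathcal K$ makes every law feasible at $q$ for the smaller set also feasible for the larger one, so the infimum over the larger feasible set cannot exceed the one over the smaller. The only differences are ordering---you prove general restriction monotonicity and then specialize to $\mathcal K_D^{\mathrm{real}}\subseteq\mathcal K_D^{\mathrm{out}}$, whereas the paper argues the frontier order directly and then notes that the same reasoning gives monotonicity---and your explicit handling of empty feasible sets via $\inf\emptyset=+\infty$, which is a harmless addition.
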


\begin{theorem}[Attainable-frontier simulation bound]
\label{thm:simulation-collapse}
Every $g\in\mathcal G_D(q)$ satisfies
\begin{align}
  W_Z(g)
  &\geq
  \left[
    \Gamma_{D,Z}^{\mathrm{real}}(q)-\delta_T(g)
  \right]_+
  \geq
  \left[
    \Gamma_{D,Z}^{\mathrm{out}}(q)-\delta_T(g)
  \right]_+ .
  \label{eq:value-simulation-bound}
\end{align}
Consequently a deployment with $W_Z(g)\leq\beta$ must satisfy
$\beta+\delta_T(g)\geq\Gamma_{D,Z}^{\mathrm{real}}(q)$.  If every $q$-feasible
policy is exactly simulable, meaning $\delta_T(g)=0$, then
$R_{D,Z}(q)\geq\Gamma_{D,Z}^{\mathrm{real}}(q)$, with equality when some
frontier optimizer $g^\star$ also satisfies
$\sup_{Q\in\mathcal C_{g^\star}}\mathbb E_Qv_Z
 =\mathbb E_{P_{\mathrm b}^{g^\star}}v_Z$.
\end{theorem}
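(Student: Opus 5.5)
The plan is to reduce the theorem to Proposition~\ref{prop:committed-policy-simulation} plus the frontier definition, and then read off the consequences.

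\textbf{First inequality.} Fix $g\in\mathcal G_D(q)$. Its benign law $\mu_g=(P_{\mathrm b}^{g})_T$ lies in $\mathcal K_D^{\mathrm{real}}$ by \eqref{eq:realizable-law-set}. Since $B(g)=\mathbb E_{\mu_g}b\geq q$, it is also feasible in \eqref{eq:realizable-frontier}. Hence $\mathbb E_{P_{\mathrm b}^{g}}v_Z\geq\Gamma_{D,Z}^{\mathrm{real}}(q)$. Here I use that $b$ and $v_Z$ depend on $\omega$ only through $T$, so expectations under $P_{\mathrm b}^g$ equal expectations under its $T$-marginal. Substituting into \eqref{eq:committed-policy-simulation-bound} and using that $x\mapsto[x-\delta]_+$ is nondecreasing gives the first inequality.

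\textbf{Second inequality.} By Proposition~\ref{prop:frontier-order-restriction}, $\Gamma^{\mathrm{out}}\leq\Gamma^{\mathrm{real}}$; the target is feasible for both because $g$ witnesses real feasibility and $\mathcal K^{\mathrm{out}}$ is larger. Monotonicity of $[\cdot]_+$ again finishes it.

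\textbf{Consequence.} If $W_Z(g)\leq\beta$, then $\beta\geq[\Gamma^{\mathrm{real}}-\delta_T]_+\geq\Gamma^{\mathrm{real}}-\delta_T$, which rearranges to $\beta+\delta_T(g)\geq\Gamma_{D,Z}^{\mathrm{real}}(q)$.

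\textbf{Exact-simulability case.} If $\delta_T(g)=0$ for every $g\in\mathcal G_D(q)$, the first inequality gives $W_Z(g)\geq\Gamma^{\mathrm{real}}$ for each such $g$. Taking the infimum over $g$ yields $R_{D,Z}(q)\geq\Gamma^{\mathrm{real}}(q)$. For equality, suppose $g^\star$ attains the frontier and is $q$-feasible. Then
\[
R_{D,Z}(q)\leq W_Z(g^\star)=\sup_{Q\in\mathcal C_{g^\star}}\mathbb E_Qv_Z=\mathbb E_{P_{\mathrm b}^{g^\star}}v_Z=\Gamma^{\mathrm{real}}(q).
\]
Note that $g^\star$ must be a policy optimizer, that is, $\mu_{g^\star}$ attains the infimum with $B(g^\star)\geq q$, not merely a law optimizer. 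The phrase "frontier optimizer $g^\star$" should be read this way, which is what makes $g^\star\in\mathcal G_D(q)$.

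\textbf{Main obstacle.} The argument is essentially bookkeeping; the delicate step is the measurability/projection point. The claim $\mathbb E_{P_{\mathrm b}^g}v_Z=\mathbb E_{\mu_g}v_Z$ relies on the projection $\phi_Z$ retaining everything $b$ and $v_Z$ need, as stated in the setup. I would also check that Proposition~\ref{prop:committed-policy-simulation} is stated with $\delta_T$ computed on $T$-marginals, so that the two bounds combine without any change of space.
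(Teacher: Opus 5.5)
Your proposal is correct and matches the paper's proof step for step. Feasibility of the benign $T$ law gives $\mathbb E_{P_{\mathrm b}^{g}}v_Z\geq\Gamma_{D,Z}^{\mathrm{real}}(q)\geq\Gamma_{D,Z}^{\mathrm{out}}(q)$, and Proposition~\ref{prop:committed-policy-simulation} with monotonicity of $[\cdot]_+$ yields the chain; at $g^\star$ the equalizer condition supplies the reverse inequality for $R_{D,Z}(q)$. Your use of $[x]_+\geq x$ for the consequence is a slightly cleaner substitute for the paper's split into the cases where the positive part is active and inactive. Your reading of ``frontier optimizer'' as a policy in $\mathcal G_D(q)$ is exactly what the paper means by ``$g^\star$ induces a realizable-frontier optimizer.''
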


Write $L^{\mathrm{frt}}=[\Gamma_{D,Z}^{\mathrm{real}}(q)-\delta_T(g)]_+$.  The
equality condition matters for what a paper can claim: without it, a measured
operating point stays a statement about that point and does not become a
statement about the architecture.

Equation~\ref{eq:pure-restriction-monotonicity} characterizes the effect of
behavior removal.
Deleting feasible laws while holding $b$ and
$v_Z$ fixed cannot lower the frontier, and by shrinking the feasible set at
target $q$ it can raise the frontier or empty the feasible set entirely.

\begin{takeaway}
Blocking more behavior can leave only riskier useful options.
\end{takeaway}

A uniform dual-use relation makes the floor explicit.  It requires the
adverse value of every trace supported by $\mathcal K_D^{\mathrm{out}}$ to be
at least a fixed fraction $\rho$ of its legitimate value:
\begin{equation}
  v_Z(t)\geq\rho\, b(t)
  \quad\text{for every trace }t\in\operatorname{supp}(\mu),\
  \mu\in\mathcal K_D^{\mathrm{out}}.
  \label{eq:uniform-dual-use}
\end{equation}
If $\rho>0$, both frontiers are at least $\rho q$.  Any exactly simulable
deployment meeting a positive utility target therefore has
$Z(S[D])\geq\rho q>0$ and cannot obtain a zero-residual certificate.  In
plain terms, useful behavior is then inseparable from a fixed positive share
of adverse value.

Whether $\rho$ can be zero, and with it whether $\tau=0$ is available at all,
depends on the declared outcome together with the traces the safeguard leaves
reachable.  Where one release supplies both the legitimate and the adverse
value, reaching $\rho=0$ requires a safeguard that makes a useful trace with no adverse
value reachable, which restriction alone cannot supply.

\subsubsection{When Reachability Changes the Floor}
\label{subsec:reachability-state}

The simulation term depends on reachable laws, not on mechanism names.  Let
$\mathcal M_g^T=\{Q_T:Q\in\mathcal C_g\}$ be the maliciously reachable $T$
laws.

\begin{proposition}[Value-law invariance]
\label{prop:value-law-invariance}
If two committed deployments $g$ and $\bar g$ are evaluated with the same $b$
and $v_Z$, and satisfy
$(P_{\mathrm b}^{g})_T=(P_{\mathrm b}^{\bar g})_T$ and
$\mathcal M_g^T=\mathcal M_{\bar g}^T$, then $B(g)=B(\bar g)$,
$W_Z(g)=W_Z(\bar g)$, and $\delta_T(g)=\delta_T(\bar g)$.
\end{proposition}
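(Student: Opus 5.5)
The plan is to show that each of the three quantities depends on the deployment only through objects the hypotheses declare equal. Every one of $B$, $W_Z$, and $\delta_T$ is built from $b$ and $v_Z$, which are functions of $T=\phi_Z(\omega)$. They are then pushed through either the benign $T$-marginal or the family of malicious $T$-marginals. Since expectations of a function of $T$ depend only on the law of $T$, each quantity can be rewritten as a functional of $(P_{\mathrm b}^{g})_T$ and $\mathcal M_g^T$ alone. The conclusion then follows by substitution.

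First, utility. By change of variables,
$B(g)=\mathbb E_{P_{\mathrm b}^{g}}[b(T)]=\int b\,d(P_{\mathrm b}^{g})_T$.
The right side is determined by $(P_{\mathrm b}^{g})_T$, which equals $(P_{\mathrm b}^{\bar g})_T$. Since $b$ is held fixed, $B(g)=B(\bar g)$.

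Second, the adverse value. For each $Q\in\mathcal C_g$, $\mathbb E_Q[v_Z(T)]=\int v_Z\,dQ_T$, so
$W_Z(g)=\sup_{\mu\in\mathcal M_g^T}\int v_Z\,d\mu$.
The supremum ranges over exactly the image set $\mathcal M_g^T=\{Q_T:Q\in\mathcal C_g\}$, and nothing beyond the set of values attained matters. Since $\mathcal M_g^T=\mathcal M_{\bar g}^T$, the two suprema coincide.

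Third, the simulation distance. By definition \eqref{eq:value-simulation-distance},
$\delta_T(g)=\inf_{\mu\in\mathcal M_g^T}\operatorname{TV}\bigl((P_{\mathrm b}^{g})_T,\mu\bigr)$.
This depends only on the benign $T$-law and on the reachable set $\mathcal M_g^T$, so substituting the hypotheses gives $\delta_T(g)=\delta_T(\bar g)$.

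The only step needing care is a measurability point. The reindexing from $Q\in\mathcal C_g$ to $\mu\in\mathcal M_g^T$ requires the $T$-marginals to live on a common measurable space $\mathcal T_Z$ with the same $\phi_Z$. This holds because both deployments are evaluated under the same declared $Z$, hence the same value-relevant projection and the same $b$ and $v_Z$. The trajectory spaces $\Omega_D$ may differ between $g$ and $\bar g$, but this does not matter, because only pushforwards to $\mathcal T_Z$ enter. Any closure or convexification in $\mathcal C_g$ is already absorbed into the hypothesis that the image sets are equal, so no separate continuity argument is needed.
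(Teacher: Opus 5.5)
Your proof is correct and follows the paper's argument: rewrite $B(g)$ as an integral against $(P_{\mathrm b}^{g})_T$, $W_Z(g)$ as a supremum over $\mathcal M_g^T$, and $\delta_T(g)$ as an infimum of total variation between the benign $T$ law and members of $\mathcal M_g^T$, then substitute the two hypotheses. Your change-of-variables and reindexing details, and the observation that only pushforwards to the common space $\mathcal T_Z$ enter, make explicit what the paper's proof states in one line.
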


A new label, credential, or isolation boundary does not change either bound
merely by existing.  Proposition~\ref{prop:value-law-invariance} shows that it
helps only if it changes value-relevant behavior or which states an attacker
can reach.  Trusted state can create a reachability separation.  When the attacker
must reproduce its benign state distribution, the easiest allowed acquisition
or compromise path sets the floor.  By the
data-processing inequality, the relevant quantity is the total-variation
distance $d_\star$ to the nearest maliciously reachable state marginal, not
average acquisition accuracy.  Evidence must therefore identify the trusted
state, its acquisition class, and the downstream conditional kernel.
Appendix~\ref{app:robust-state-proof} states the frontier and the bound.

\subsection{Upper Bounds: What a Deployment Can Guarantee}
\label{subsec:upper-certificates}

An upper bound must control every law available to the anchored attacker.
The direct route bounds the reachable law set.  The factored route bounds how
often a deployment success event occurs and what continuation remains there.

\subsubsection{Direct Reachable-Set Bounds}
\label{subsec:direct-reachable-upper}

\begin{proposition}[Direct reachable-set bound]
\label{prop:direct-reachable-certificate}
For any evidence-supported outer class
$\mathcal C_g\subseteq\mathcal C_g^{\mathrm{out}}$,
\begin{equation}
  W_Z(g)\leq U^{\mathrm{rch}}
  =\sup_{Q\in\mathcal C_g^{\mathrm{out}}}
    \mathbb E_Q[v_Z(T)],
  \label{eq:direct-reachable-upper}
\end{equation}
with equality when $\mathcal C_g^{\mathrm{out}}=\mathcal C_g$.
\end{proposition}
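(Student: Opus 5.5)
The argument is monotonicity of a supremum under set inclusion, plus care about where the supremum is taken. Start from the definition in \eqref{eq:actual-deployment-value}: $W_Z(g)=\sup_{Q\in\mathcal C_g}\mathbb E_Q[v_Z(T)]$. The value function $v_Z$ is measurable on $\mathcal T_Z$ and takes values in $[0,1]$, so every expectation is well defined and finite. This holds on every law in $\mathcal C_g^{\mathrm{out}}$, provided the outer class consists of probability laws on $\Omega_D$ (or on its $T$-projection). I will state this as part of what \emph{evidence-supported outer class} means.

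The key step is that the map $\mathcal C\mapsto\sup_{Q\in\mathcal C}\mathbb E_Q[v_Z(T)]$ is monotone under inclusion. Fix any $Q\in\mathcal C_g$. Then $Q\in\mathcal C_g^{\mathrm{out}}$, so $\mathbb E_Q[v_Z(T)]\leq U^{\mathrm{rch}}$. Taking the supremum over $Q\in\mathcal C_g$ gives $W_Z(g)\leq U^{\mathrm{rch}}$. Boundedness of $v_Z$ makes both suprema lie in $[0,1]$, so no infinite values arise. Equality when $\mathcal C_g^{\mathrm{out}}=\mathcal C_g$ is immediate, because the two suprema are then taken over the same set.

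The only point needing care, and the one I expect to be the main obstacle, is the remark after \eqref{eq:attacker-trace-class}: the outer class may be specified before convexification and closure. For example, it may be an evidence-derived set of pure attack laws, or a set given only on $T$-marginals. First I would check that the supremum of the linear, bounded functional $Q\mapsto\mathbb E_Q[v_Z(T)]$ is unchanged by passing to the closed convex hull. Convex combinations cannot exceed the supremum, by linearity. Total-variation limits cannot exceed it either, since $|\mathbb E_Q v_Z-\mathbb E_{Q'}v_Z|\leq\operatorname{TV}(Q,Q')$ when $v_Z\in[0,1]$. It therefore suffices that $\{P_{\mathrm m,\sigma}^g:\sigma\in\Sigma\}\subseteq\mathcal C_g^{\mathrm{out}}$. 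If the outer class is stated on $T$-marginals, say as a superset of $\mathcal M_g^T$, the same argument runs on $\mathcal T_Z$, because $v_Z$ depends on $\omega$ only through $T=\phi_Z(\omega)$. This reduction is the only nontrivial ingredient. The rest is the one-line monotonicity argument above.
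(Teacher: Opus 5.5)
Your proof is correct and is the paper's argument: a supremum over a superset dominates the supremum over $\mathcal C_g$, with equality when the two classes coincide. The closed-convex-hull reduction is not needed under the stated hypothesis $\mathcal C_g\subseteq\mathcal C_g^{\mathrm{out}}$. It does correctly show that the bound still holds when the outer class contains only the pure attack laws $P_{\mathrm m,\sigma}^{g}$.
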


This route covers artifacts without a mediation domain, such as released
weights, when the declared tampering budget contains $\Sigma$ and the value
bound holds for every reachable system.  The evidentiary requirement is
exacting in one specific way.  A set of tested modifications is an inner
sample of $\mathcal C_g$, and a supremum over an inner sample provides no
upper bound.  Enumerating more attacks therefore does not change this endpoint
at any sample size short of exhausting
$\Sigma$~\cite{Athalye2018ObfuscatedGradients,Tramer2020AdaptiveAttacks}.

\subsubsection{Success-Region Bounds}
\label{subsec:success-region}

Let $G\subseteq\Omega_D$ be a success region and
$\mathcal R_g\subseteq\Omega_D$ a reachable envelope with $Q(\mathcal R_g)=1$
for every $Q\in\mathcal C_g$.  Define
\begin{equation}
  \lambda_G=\inf_{Q\in\mathcal C_g}Q(G),
  \qquad
  r_G=\sup_{\omega\in G\cap\mathcal R_g}
       v_Z(T(\omega)),
  \label{eq:success-region-parameters}
\end{equation}
using $r_G=0$ when $G\cap\mathcal R_g$ is empty.

\begin{proposition}[Sharp success-region bound]
\label{prop:success-region-certificate}
For every committed policy,
$W_Z(g)\leq U^{\mathrm{reg}}=1-\lambda_G(1-r_G)$,
and no smaller uniform bound follows from $\lambda_G$ and $r_G$ alone.
\end{proposition}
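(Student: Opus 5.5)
The proof has two parts: an upper bound valid for every attacker law, and a construction showing no smaller uniform bound is available. For the upper bound, I fix an arbitrary $Q\in\mathcal C_g$ and split the expectation over $G$ and its complement:
\begin{equation*}
  \mathbb E_Q[v_Z(T)]
  =\mathbb E_Q[v_Z(T)\mathbf 1_G]+\mathbb E_Q[v_Z(T)\mathbf 1_{G^c}].
\end{equation*}
Since $Q(\mathcal R_g)=1$, the first term is an integral over $G\cap\mathcal R_g$, where $v_Z(T(\omega))\le r_G$. It is therefore at most $r_G\,Q(G)$; if $G\cap\mathcal R_g$ is empty, the term is $0=r_G Q(G)$. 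On the complement I use only $v_Z\le 1$, which gives
\begin{equation*}
  \mathbb E_Q[v_Z(T)]\le r_G Q(G)+1-Q(G)=1-Q(G)(1-r_G).
\end{equation*}
Because $1-r_G\ge 0$, this is nonincreasing in $Q(G)$. Using $Q(G)\ge\lambda_G$ then yields $1-\lambda_G(1-r_G)$. Taking the supremum over $Q\in\mathcal C_g$ gives $W_Z(g)\le U^{\mathrm{reg}}$. Closure and convexification cause no difficulty: the bound is affine in $Q$ and stable under total-variation limits, and in any case it was proved directly for every element of $\mathcal C_g$.

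For sharpness, given any pair $(\lambda,r)\in[0,1]^2$, I build an anchor whose reported parameters are $\lambda_G=\lambda$ and $r_G=r$ and with $W_Z(g)=1-\lambda(1-r)$. I take a two-point trajectory space $\{\omega_1,\omega_2\}$ with $G=\{\omega_1\}$ and $\mathcal R_g$ the whole space. I set $v_Z(T(\omega_1))=r$ and $v_Z(T(\omega_2))=1$, with $b$ arbitrary. I let $\Sigma$ contain a single attack whose law puts mass $\lambda$ on $\omega_1$ and mass $1-\lambda$ on $\omega_2$. Then $\mathcal C_g$ is a singleton, so $\lambda_G=\lambda$ and $r_G=r$, and $W_Z(g)=\lambda r+(1-\lambda)=U^{\mathrm{reg}}$. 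The edge cases need separate handling. When $\lambda=0$, the upper bound is $1$, and with $G\cap\mathcal R_g$ nonempty I keep $r_G=r$ by placing zero mass on $\omega_1$ while still including $\omega_1$ in $\mathcal R_g$. When $r=0$ is intended to come from an empty intersection, I instead take $G$ disjoint from $\mathcal R_g$, which forces $\lambda=0$.

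Consequently any bound that is a function of $(\lambda_G,r_G)$ alone and is valid on all anchors must be at least $1-\lambda_G(1-r_G)$ at every pair, so none is smaller. The main obstacle is making the construction a legitimate anchor: the trajectory laws must arise from a causal policy $g$ and an attack class $\Sigma$ satisfying the paper's setup, and $\mathcal R_g$ must be a valid envelope. I would meet this with a degenerate one-step service whose single output coordinate is chosen by the attacker's randomization, so that these requirements hold trivially.
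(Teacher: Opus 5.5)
Your proposal is correct and follows the paper's proof essentially step for step. You split $\mathbb E_Q v_Z$ over $G$ and $G^c$, use $Q(\mathcal R_g)=1$ to obtain $r_G Q(G)+1-Q(G)\leq 1-\lambda_G(1-r_G)$, and attain the bound with the two-region law placing mass $\lambda_G$ at value $r_G$ inside $G$ and the rest at value one outside it. Your separate edge-case handling is unnecessary: the main construction with $\mathcal R_g$ equal to the whole space already realizes every pair $(\lambda,r)\in[0,1]^2$, including $\lambda=0$.
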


The expectation splits over $G$ and $G^c$; a matching two-region law proves
sharpness.  Every success-region bound therefore needs class-uniform
coverage and a continuation bound over all remaining state, interfaces, and
attempts.

\subsubsection{Closed Mediation and the Three Gates}
\label{subsec:closed-mediation}

Mediation is the factorization reported safeguard results take.  Let $A$
denote entry into a mediation domain and $F$ failure of its local security
fact, so $G=A\cap F^c$.  A reference law may satisfy
$P(F\cap A)\leq\epsilon P(A)$.  For a deployment bound, assume the same
contract for every $Q\in\mathcal C_g$ and define
\begin{equation}
  Q(F\cap A)\leq\epsilon Q(A),
  \qquad
  \alpha=\inf_{Q\in\mathcal C_g}Q(A),
  \label{eq:robust-mediation-contract}
\end{equation}
which gives $\lambda_G\geq\alpha(1-\epsilon)$.

\begin{theorem}[Sharp closed-mediation bound]
\label{thm:local-contract-lift}
Fix $g$ and suppose every $Q\in\mathcal C_g$ satisfies
the contract in Equation~\ref{eq:robust-mediation-contract}.  If
$v_Z(T)\leq r$ on every attacker-reachable trajectory in $A\cap F^c$, then
\begin{equation}
  W_Z(g)
  \leq
  1-\alpha(1-\epsilon)(1-r).
  \label{eq:closed-mediation-bound}
\end{equation}
No smaller bound holds uniformly over models described only by
$(\alpha,\epsilon,r)$, and this information gives a nontrivial bound if and
only if
\begin{equation}
  \alpha>0,\qquad
  \epsilon<1,\qquad
  r<1.
  \label{eq:three-open-gates}
\end{equation}
\end{theorem}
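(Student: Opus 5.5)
The plan is to reduce the statement to Proposition~\ref{prop:success-region-certificate} with $G=A\cap F^c$, and then to handle sharpness and the gate characterization separately.

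\emph{Upper bound.} Fix $Q\in\mathcal C_g$. The contract gives $Q(A\cap F)\leq\epsilon Q(A)$. Hence
\[
Q(G)=Q(A)-Q(A\cap F)\geq(1-\epsilon)Q(A)\geq(1-\epsilon)\alpha .
\]
Taking the infimum over $Q$ yields $\lambda_G\geq\alpha(1-\epsilon)$.

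Because $v_Z\leq r$ on every attacker-reachable trajectory in $G$, we have $r_G\leq r$. Proposition~\ref{prop:success-region-certificate} then gives $W_Z(g)\leq 1-\lambda_G(1-r_G)$. This expression is nonincreasing in $\lambda_G$ and nondecreasing in $r_G$, so substituting the two bounds gives Equation~\ref{eq:closed-mediation-bound}.

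If one prefers a direct argument that avoids the proposition, split the expectation over $G$ and $G^c$. Bound $v_Z$ by $r$ on $G$ and by $1$ on $G^c$, which gives $\mathbb E_Qv_Z\leq 1-Q(G)(1-r)$.

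\emph{Sharpness.} For arbitrary $(\alpha,\epsilon,r)\in[0,1]^3$, I would build a model with a single attacker law (so $\mathcal C_g$ is a singleton) on three trajectory types:
\begin{itemize}
\item $A\cap F^c$, with probability $\alpha(1-\epsilon)$ and $v_Z=r$;
\item $A\cap F$, with probability $\alpha\epsilon$ and $v_Z=1$;
\item $A^c$, with probability $1-\alpha$ and $v_Z=1$.
\end{itemize}
Then $Q(A)=\alpha$, and the contract holds with equality. The bound $v_Z\leq r$ on reachable $A\cap F^c$ holds. The value is
\[
\alpha(1-\epsilon)r+1-\alpha(1-\epsilon)=1-\alpha(1-\epsilon)(1-r),
\]
so the bound is attained and no smaller uniform bound can hold.

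The main obstacle is realizing this law inside the framework's objects. It must arise as $P^g_{\mathrm m,\sigma}$ for some admissible service, policy and $\Sigma$, with $\phi_Z$ and $v_Z$ taking the prescribed values. I would handle this exactly as the two-region construction behind Proposition~\ref{prop:success-region-certificate} does: take an abstract trajectory space with three atoms and a one-strategy class, since $\mathcal C_g$ is the closed convex hull of that single law. Degenerate cases need separate mention, for instance $\alpha=0$ (leave $A$ empty) and $\alpha(1-\epsilon)=0$ (the region $G$ is null, so $r_G$ is vacuous, and $r$ only enters as a hypothesis).

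\emph{Gates.} By sharpness, the best bound from $(\alpha,\epsilon,r)$ alone equals $1-\alpha(1-\epsilon)(1-r)$. Since $Z\in[0,1]$, a bound is nontrivial exactly when it is strictly below $1$. That happens if and only if $\alpha(1-\epsilon)(1-r)>0$. Given that each parameter lies in $[0,1]$, this product is positive if and only if all three factors are positive, which is precisely Equation~\ref{eq:three-open-gates}.
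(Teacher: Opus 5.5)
Your proposal is correct and matches the paper's proof. The paper's appendix proof splits $\mathbb E_Q v_Z$ over $A\cap F^c$ and its complement exactly as in your direct argument; your reduction to Proposition~\ref{prop:success-region-certificate} via $\lambda_G\geq\alpha(1-\epsilon)$ and $r_G\leq r$ is the same computation, and it is the route the main text itself indicates. For sharpness and the three-gate characterization, the paper uses the identical three-atom singleton-class construction and the same positivity-of-the-product argument.
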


Sharpness is proved by a three-atom construction, and it converts the
theorem from a bound into a sensitivity rule.  Improving $\epsilon$ to
$\epsilon'$ moves the certified bound by exactly
\begin{equation}
  \begin{split}
    [1-\alpha(1-\epsilon)(1-r)]
    &-[1-\alpha(1-\epsilon')(1-r)]\\
    &=\alpha(\epsilon-\epsilon')(1-r),
  \end{split}
  \label{eq:accuracy-gated-value}
\end{equation}
so the certified contribution of detection accuracy is determined by coverage
and continuation.  Where either is unmeasured, that contribution cannot be
quantified; where either is adverse, it is zero.

\begin{corollary}[Local perfection is globally non-identifying]
\label{cor:local-perfection}
Even with $\epsilon=0$, the sharp bound in
Equation~\ref{eq:closed-mediation-bound} equals one when $\alpha=0$ or
$r=1$.  Any number of mechanisms may establish their local facts without
error on every invocation and remain compatible with $W_Z(g)=1$.
\end{corollary}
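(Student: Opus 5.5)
The corollary is a direct specialization of Theorem~\ref{thm:local-contract-lift}, so the plan is to read off both claims from the bound in Equation~\ref{eq:closed-mediation-bound} and from its sharpness. For the first claim, substitute $\epsilon=0$ into $1-\alpha(1-\epsilon)(1-r)$, which gives $1-\alpha(1-r)$. If $\alpha=0$ the subtracted term vanishes, and if $r=1$ the factor $(1-r)$ vanishes; in both cases the bound equals one. This is also the failure of the gate condition in Equation~\ref{eq:three-open-gates}: that condition requires all three gates open, and here the coverage gate or the continuation gate is closed, so no nontrivial bound follows.

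For the second claim I will not rely on sharpness alone. Sharpness states only that no smaller uniform bound exists over models described by $(\alpha,\epsilon,r)$. To show that $W_Z(g)=1$ is actually compatible, I exhibit an explicit model, essentially the degenerate case of the three-atom construction. Take a single reachable trajectory $\omega^\star$ and let $\mathcal C_g=\{\delta_{\omega^\star}\}$.

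\begin{itemize}
  \item \textbf{Case $\alpha=0$.} Put $\omega^\star\notin A$ and set $v_Z(T(\omega^\star))=1$. The contract $Q(F\cap A)\leq 0\cdot Q(A)=0$ holds trivially. The continuation condition on $A\cap F^c$ is vacuous. Hence $W_Z(g)=1$.
  \item \textbf{Case $r=1$.} Put $\omega^\star\in A\cap F^c$ with $v_Z=1$. Then $\alpha=1$ and $\epsilon=0$, since $Q(F\cap A)=0$. The continuation bound $r=1$ is satisfied, and again $W_Z(g)=1$.
\end{itemize}

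In both models every invocation of the mechanism establishes its local fact: $F$ never occurs on $A$, which is the meaning of error-free operation. To cover ``any number of mechanisms'', index mediation domains $A_1,\dots,A_k$ with failure events $F_i$. In the same model, set every $F_i=\emptyset$ and place $\omega^\star$ either outside all $A_i$ (the coverage failure) or inside them with adverse continuation one (the continuation failure). Each local contract then holds with $\epsilon_i=0$, and the supremum defining $W_Z(g)$ still equals one.

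The main obstacle is a modeling subtlety rather than a calculation. The model must be admissible as a deployment: $g$ must lie in $\mathcal G_D$, $\mathcal C_g$ must be a closed convex hull of attack laws, and $v_Z$ must take values in $[0,1]$. I will address this by noting that a point mass is trivially closed and convex, and that the corollary claims only compatibility with the information $(\alpha,\epsilon,r)$. Any model consistent with that information therefore suffices, exactly as in the sharpness part of Theorem~\ref{thm:local-contract-lift}.
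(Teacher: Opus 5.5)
Your proposal is correct and follows essentially the paper's own argument. The paper sets $\epsilon=0$ in the three-atom tight construction of Theorem~\ref{thm:local-contract-lift}, realized by a singleton attacker class. For $\alpha=0$ that construction places all mass on $A^c$ with value one, and for $r=1$ it takes $A$ to be the whole space, $F$ empty and $v_Z=1$ on the success region; these are exactly your two point-mass models, and your treatment of several mediation domains with every $F_i=\emptyset$ spells out what the paper leaves implicit.
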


The two quantities that convert a reported $\epsilon$ into information,
$\alpha$ and $r$, are properties of the surrounding deployment rather than of
the classifier.

\begin{takeaway}
A flawless check can coexist with a maximally risky system.
\end{takeaway}

At the strict boundary the three gates collapse to a clean statement.  For
$\tau=0$, the bound in Equation~\ref{eq:closed-mediation-bound} certifies
$Z(S[D])=0$ exactly when $\alpha=1$, $\epsilon=0$, and $r=0$: complete
coverage of every path to
the outcome (the complete-mediation
condition~\cite{SaltzerSchroeder1975}), no conditional failure on those paths,
and no continuation after a covered success.  A zero-residual certificate through mediation is
therefore a conjunction of three deployment-wide facts, none of which a local
score reports.

\subsubsection{Composition}
\label{subsec:deployment-composition}

A stack contributes through its deployment event structure, not its layer
count.

\begin{proposition}[Adaptive composition bounds]
\label{prop:composition-bounds}
If an attack path requires ordered failures $F_1,\ldots,F_m$ and, at every
history reaching layer $j$,
$\Pr(F_j\mid F_1,\ldots,F_{j-1},h)\leq\epsilon_j$, then
\begin{equation}
  \Pr\!\left(\textstyle\bigcap_{j=1}^{m}F_j\right)
  \leq\prod_{j=1}^{m}\epsilon_j.
  \label{eq:serial-product-bound}
\end{equation}
With marginal bounds alone, the sharp general bound is instead
\begin{equation}
  \Pr\!\left(\textstyle\bigcap_{j=1}^{m}F_j\right)
  \leq\min_j\epsilon_j.
  \label{eq:serial-marginal-bound}
\end{equation}
For alternative path events $E_i$ with $\Pr(E_i)\leq\bar p_i$,
$\Pr(\bigcup_iE_i)\leq\min\{1,\sum_i\bar p_i\}$, and for ordered attempts with
$\ell_j\leq\Pr(E_j\mid\cap_{i<j}E_i^c)\leq\bar p_j$,
\begin{equation}
  1-\prod_j(1-\ell_j)
  \leq\Pr\!\left(\textstyle\bigcup_jE_j\right)
  \leq1-\prod_j(1-\bar p_j).
  \label{eq:retry-cumulative-bound}
\end{equation}
\end{proposition}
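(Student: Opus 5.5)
The four claims are elementary probability facts, and I will prove them one at a time.

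\textbf{Serial product bound.} For Equation~\ref{eq:serial-product-bound} I will use the chain rule along the adaptive history. Write $I_k=\bigcap_{j\le k}F_j$. Conditioning on the history $h$ at which layer $k$ is reached and applying the hypothesis gives $\Pr(I_k\mid h)\le\epsilon_k\Pr(I_{k-1}\mid h)$ pointwise in $h$. Taking expectations over histories yields $\Pr(I_k)\le\epsilon_k\Pr(I_{k-1})$. Induction from $\Pr(I_1)\le\epsilon_1$ then gives the product. The point needing care is that the bound holds at \emph{every} history, so the tower property can be applied without any independence assumption. This is exactly what separates it from the marginal case.

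\textbf{Marginal bound and its sharpness.} The upper bound in Equation~\ref{eq:serial-marginal-bound} is immediate, since $\bigcap_jF_j\subseteq F_k$ for every $k$. For sharpness I will build a nested construction on $[0,1]$ with Lebesgue measure. Let $\epsilon_{(1)}=\min_j\epsilon_j$ (capped at $1$) and set $F_j=[0,\epsilon_j]$. Each event then has marginal probability exactly $\epsilon_j$, and the intersection is $[0,\min_j\epsilon_j]$, which attains the bound. So no general bound below the minimum follows from marginals alone. I expect this comonotone construction to be the only step that genuinely needs a witness. The main subtlety is that it must remain consistent with the adaptive structure, since the failures must be realizable in order. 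Nested intervals are realizable by a single uniform draw revealed before the layers act, so any order of evaluation works.

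\textbf{Union of alternative paths.} This is Boole's inequality together with the trivial bound $\Pr\le1$.

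\textbf{Retry bounds.} For Equation~\ref{eq:retry-cumulative-bound} I will write the complement $\bigcap_jE_j^c$ and use the chain rule:
\[
\Pr\Bigl(\textstyle\bigcap_jE_j^c\Bigr)=\prod_j\Bigl(1-\Pr\bigl(E_j\mid\textstyle\bigcap_{i<j}E_i^c\bigr)\Bigr).
\]
Each factor lies in $[1-\bar p_j,\,1-\ell_j]$, so the product lies in $\bigl[\prod_j(1-\bar p_j),\,\prod_j(1-\ell_j)\bigr]$. Subtracting from one gives both inequalities. If some conditioning event has probability zero, the chain stops there. In that case the complement probability is zero and the bounds still hold, because each factor is in $[0,1]$.
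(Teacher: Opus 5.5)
Your proposal is correct and follows the paper's own argument. It uses the chain rule with the history-uniform premise bounding each factor for the product bound, the containment $\bigcap_j F_j\subseteq F_i$ for the minimum bound with a common-subevent witness (your nested intervals) for its sharpness, Boole's inequality for alternative paths, and the chain rule on $\bigcap_j E_j^c$ for retries. Your explicit tower-property step over histories and your handling of null conditioning events are slightly more careful than the paper, which in turn also gives attaining constructions (independent layers, disjoint events, sequential Bernoulli trials) that the statement does not require.
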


The improvement provided by a composition argument is the gap between
Equations~\ref{eq:serial-product-bound}
and~\ref{eq:serial-marginal-bound}.  This gap is determined by the dependence
premise rather than by the number of layers.  Marginal bounds are attained by
perfectly correlated failures, so
under them the best bound a stack of any depth supports is that of its single
strongest layer.  History-uniform conditional bounds, which independence
implies but which can hold without it, are what license the product.  Alternative paths and retries meanwhile accumulate attempts at any fixed per-attempt rate.

\begin{takeaway}
Several safeguards can fail together as often as the strongest one fails alone.
\end{takeaway}

\subsection{The Schedule}
\label{subsec:exchange-rate}

Table~\ref{tab:exchange-rate} collects the results as one schedule from
reported evidence to what follows for $Z(S[D])$.  We call a row
\emph{noninformative} when its named evidence yields no nontrivial endpoint.
Every bound is sharp: no better bound follows from the quantities in the first
column, so a noninformative row cannot be made informative by further
measurement of the same kind.  Two rows state a structural relation rather
than a bound, and their witness is the proof that establishes it.

\begin{table*}[t]
  \centering
  \footnotesize
  \caption{Sharp consequences of reported evidence for lower and upper bounds
  on $Z(S[D])$.}
  \label{tab:exchange-rate}
  \renewcommand{\arraystretch}{1.04}
  \begin{tabular}{@{}>{\raggedright\arraybackslash}p{0.30\textwidth}
    >{\raggedright\arraybackslash}p{0.28\textwidth}
    >{\raggedright\arraybackslash}p{0.36\textwidth}@{}}
    \toprule
    \textbf{Reported quantity} & \textbf{Consequence for $Z(S[D])$} &
      \textbf{Sharpness witness} \\
    \midrule
    \multicolumn{3}{@{}l}{\textbf{Lower endpoints}} \\
    One attack executed inside $\Sigma$
      & $\geq$ its measured adverse value
      & the executed law itself (Prop.~\ref{prop:attack-witness}) \\
    \rowcolor{black!4}
    Benign continuation value and a bound on $\delta_T(g)$
      & $\geq[\mathbb E_{P_{\mathrm b}^{g}}v_Z-\delta_T(g)]_+$
      & a two-point transfer of mass
        (Prop.~\ref{prop:committed-policy-simulation}) \\
    Per-history conditional evidence distances $\eta_t$
      & $\delta_T(g)\leq1-\prod_t(1-\eta_t)$
      & independent per-update deviations
        (Prop.~\ref{prop:sequential-simulation}) \\
    \rowcolor{black!4}
    Marginal error rate on a fixed suite
      & \textbf{no simulation bound}
      & an average over histories constrains no adaptive kernel \\
    Removal of behaviors at fixed $b$ and $v_Z$
      & frontier does not fall
      & monotonicity
        (Prop.~\ref{prop:frontier-order-restriction}) \\
    \rowcolor{black!4}
    Uniform dual-use ratio $\rho$ at utility $q$ and a bound on $\delta_T(g)$
      & $\geq\rho[q-\delta_T(g)]_+$
      & a proportional two-point law
        (Eq.~\ref{eq:uniform-dual-use}; App.~\ref{app:frontier-bounds}) \\
    A label, credential, or boundary leaving value laws and reachability
      unchanged
      & \textbf{no nontrivial endpoint on either side}
      & value-law invariance
        (Prop.~\ref{prop:value-law-invariance}) \\
    \rowcolor{black!4}
    Average acquisition accuracy for trusted state
      & moves with the distance $d_\star$ to the nearest maliciously
        reachable state marginal, not with the average
      & the data-processing inequality
        (App.~\ref{app:robust-state-proof}) \\
    \midrule
    \multicolumn{3}{@{}l}{\textbf{Upper endpoints}} \\
    Enumerated attacks inside a declared budget
      & \textbf{no upper bound}
      & an inner sample cannot bound a supremum \\
    \rowcolor{black!4}
    Outer reachable class with a uniform value bound
      & $\leq U^{\mathrm{rch}}$
      & a tight outer class
        (Prop.~\ref{prop:direct-reachable-certificate}) \\
    Coverage $\alpha$, conditional failure $\epsilon$, continuation $r$
      & $\leq1-\alpha(1-\epsilon)(1-r)$
      & a three-atom law
        (Thm.~\ref{thm:local-contract-lift}) \\
    \rowcolor{black!4}
    $\epsilon$ alone, at any value including zero
      & $\leq1$
      & the same law at $\alpha=0$ or $r=1$
        (Cor.~\ref{cor:local-perfection}) \\
    Marginal per-layer failure rates
      & $\Pr(\bigcap_jF_j)\leq\min_j\epsilon_j$
      & perfectly correlated failures
        (Prop.~\ref{prop:composition-bounds}) \\
    \rowcolor{black!4}
    History-conditioned per-layer bounds
      & $\Pr(\bigcap_jF_j)\leq\prod_j\epsilon_j$
      & independent layers
        (Prop.~\ref{prop:composition-bounds}) \\
    \bottomrule
  \end{tabular}
\end{table*}

Four rows are noninformative for three distinct reasons.  A marginal rate and
an enumerated attack set fail on the quantifier: the former averages over
histories without constraining the conditional kernel after any particular
history, whereas the latter samples a class that the bound must cover.  A
conditional failure rate is insufficient when another gate can reduce its
contribution to zero.  A relabeled boundary contributes no quantity used by
either bound.  Sharpness shows that making the reported value in any of these
rows more favorable cannot replace the missing relation or quantity.

Multiple rows may apply to one deployment.  For finite sets of supported
candidates bounding the same $W_Z(g)$ under one anchor, the combined
endpoints are $L=\max_iL_i$ and $U=\min_jU_j$; a side with no
candidate stays open.  Every upper candidate must cover the full class
$\mathcal C_g$, so a subclass-specific bound must first be lifted to the union
of the declared subclasses.  The combined endpoints must also be mutually
consistent: if $L>U$, their premises cannot share one anchor, and no interval
follows until the inconsistency is resolved.

\section{Reading the Literature Against the Schedule}
\label{sec:theory-guided-systematization}

Table~\ref{tab:exchange-rate} maps reported evidence to supported bounds.  To
apply it to published work, we need four elements: a claim instance with a
fixed anchor, coding slots for the quantities required by each schedule row, a
rule that maps supported slots to a conclusion, and a source-selection
procedure (Figure~\ref{fig:pipeline}).  This section defines all four, and
Section~\ref{sec:deployment} reports what the resulting coding shows.

\subsection{A Common Object of Comparison}
\label{subsec:claim-instance-anchor}

The unit of analysis is a deployment-safety claim instance
\begin{equation}
  x=(g_x,\theta_x,c_x),
  \label{eq:deployment-claim-instance}
\end{equation}
where $g_x$ is the evaluated deployed policy, $\theta_x$ fixes the
comparison, and $c_x$ is the assertion under assessment.  Each instance also
carries its source and version, which record provenance without entering the
comparison.  A candidate becomes an instance when the source identifies a
deployed intervention, an operationalized adverse outcome, and either a
comparison world or a formal statement connecting the intervention to that
outcome.  This rule prevents an isolated component score from acquiring a
deployment interpretation before a deployed action and outcome are fixed.
A paper can therefore yield several instances when it changes the
intervention, outcome, attacker class, utility target, or evaluation horizon.
Measurements remain in one instance only when they support the same anchored
assertion.  External attack evaluations attach to that instance, so later
evidence can assess the original claim without changing the object being
judged.

For each instance, the coordinates recoverable from a source are
\begin{equation}
  \mathfrak a_x^{\mathrm{src}}
  =(S_x,D_x,\mathcal E_x,Z_x,\Sigma_x,q_x),
  \label{eq:claim-general-anchor}
\end{equation}
and the schedule additionally requires the coded anchor
\begin{equation}
  \theta_x=(\mathfrak a_x^{\mathrm{src}},T_x,b_x,v_{Z_x}).
  \label{eq:systematization-anchor}
\end{equation}
Here $T_x$ is the trajectory projection on which simulation and continuation
are judged, while $b_x$ and $v_{Z_x}$ are its normal use and adverse
continuation values.  This tuple bundles exactly the structure declared in
Section~\ref{subsec:deployment-game}; a concrete deployment fixes it
implicitly, whereas a coded source must record it as a checkable operand.
The external constraint set $\mathcal L_x$ remains a condition on a concrete
deployment, because a literature source may not determine it.

Every coordinate is supported by a source locator, derived by a declared
rule, or left unknown.  Unknown coordinates receive no default.

\begin{figure*}[t]
  \centering
  \includegraphics[width=\textwidth]{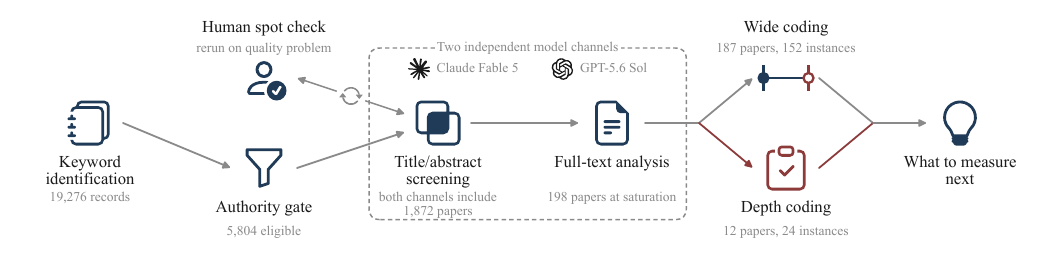}
  \caption{Overview of the evidence pipeline.  The coded set contains 198 papers: twelve receive full ten-slot depth coding and
  187 receive endpoint-route wide
  coding, with one paper in both strata.  Both analyses
  identify the next measurement needed to tighten an endpoint.}
  \label{fig:pipeline}
\end{figure*}

\subsection{The Coding Slots}
\label{subsec:bilateral-evidence-matrix}

Table~\ref{tab:field-schedule} defines the ten slots.  Four record what an
attacker retains and six what the deployment controls, matching the two
directions of Section~\ref{sec:deployment-theory}; the constructions below
name the schedule row each one feeds.  A slot holds a payload from its
allowed value set, an evidence state, a source locator, and the sub-entries
the analysis requires.  A construction yields an endpoint only when every
slot it consumes is supported or validly derived under one claim instance.

\begin{table}[!tbp]
  \centering
  \footnotesize
  \caption{Coding slots and the source evidence each one requires.}
  \label{tab:field-schedule}
  \renewcommand{\arraystretch}{1.06}
  \begin{tabular}{@{}l>{\raggedright\arraybackslash}p{0.80\columnwidth}@{}}
    \toprule
    \textbf{Slot} & \textbf{What the source must report} \\
    \midrule
    \multicolumn{2}{@{}l}{\textbf{Lower-endpoint evidence}} \\
    \rowcolor{black!4}
    \textbf{LB1} & The guarded versus unguarded comparison, its attribution, and
      whether the guarded arm's adverse value was produced by a strategy in
      $\Sigma_x$ and measured on the $Z_x$ scale \\
    \textbf{LB2} & Whether the intervention only removes behaviors or supplies an
      additive substitute at matched utility \\
    \rowcolor{black!4}
    \textbf{LB3} & Whether an attacker can reproduce the value-relevant benign
      behavior, and any upper bound on $\delta_{T_x}(g_x)$ \\
    \textbf{LB4} & Whether trusted state separates attacker-reachable laws
      from the benign state distribution, and how that state is acquired \\
    \midrule
    \multicolumn{2}{@{}l}{\textbf{Upper-endpoint evidence}} \\
    \rowcolor{black!4}
    \textbf{UB0} & Whether the artifact transfers to the attacker with no mediation
      domain, and the declared tampering budget \\
    \textbf{UB1} & The decidable deployment success event $G_x$ and its grain \\
    \rowcolor{black!4}
    \textbf{UB2} & Coverage $\alpha_x$, which paths to the anchored outcome must
      enter the mediation domain: the first gate, and an architecture fact
      rather than a classifier accuracy \\
    \textbf{UB3} & Conditional failure $\epsilon_x$ after adaptive history, and the
      class over which it holds: the second gate \\
    \rowcolor{black!4}
    \textbf{UB4} & Continuation $r_x$, the adverse value still reachable after a
      covered success: the third gate, covering released outputs, permitted
      actions, accumulated state, and retries \\
    \textbf{UB5} & The dependence relation across composed components at the deployed
      history grain \\
    \bottomrule
  \end{tabular}
\end{table}

The slots make incompleteness diagnostic.  Because each construction consumes
a named set of slots, an endpoint that stays open identifies the exact
missing fact.

The two slot blocks answer different questions: the lower records evidence
that can establish a floor on the harmful assistance that remains, and the
upper a ceiling.

Three lower-bound constructions use these slots.  First, an attack executed
inside the declared class needs no reproduction argument.  For one
$\sigma\in\Sigma_x$, Proposition~\ref{prop:attack-witness} gives the witness
endpoint $L_x^{\mathrm{wit}}=\mathbb E_{P_{\mathrm m,\sigma}^{g_x}}v_{Z_x}$
from LB1's guarded-arm value.  LB1's qualification sub-entry determines
whether that value qualifies.  A guarded-arm suite rate, bypass count, or
other quantity not measured on the anchored $Z_x$ scale can fill LB1 but
supplies no endpoint.  Second, when no qualifying attack was executed, LB3
and the benign continuation value supply the simulation endpoint of
Proposition~\ref{prop:committed-policy-simulation},
$L_x^{\mathrm{sim}}=[\mathbb E_{P_{\mathrm b}^{g_x}}v_{Z_x}
-\delta_{T_x}(g_x)]_+$.  Third, suppose that the evaluated deployment meets
the utility target $q_x$, that LB2 and LB4 together identify the least
adverse value attainable at that target, and that LB3 bounds
$\delta_{T_x}(g_x)$.  The frontier construction of
Theorem~\ref{thm:simulation-collapse} gives
\begin{equation}
  L_x^{\mathrm{frt}}=
  \left[
    \Gamma_{D_x,Z_x}^{\mathrm{real}}(q_x)-\delta_{T_x}(g_x)
  \right]_+.
  \label{eq:claim-instance-lower-certificate}
\end{equation}

Two constructions read the upper block.  UB0 yields
$U_x^{\mathrm{rch}}$ when the declared budget class contains $\Sigma_x$ and
the value bound holds over all of it.  Otherwise UB1 through UB4 yield
\begin{equation}
  U_x^{\mathrm{reg}}=1-\lambda_x(1-r_x),
  \qquad
  \lambda_x\geq\alpha_x(1-\epsilon_x),
  \label{eq:claim-instance-upper-certificate}
\end{equation}
with UB5 governing whether component bounds may be multiplied.  A source
reports a supported coverage bound rather than the exact infimum, so we
evaluate $U_x^{\mathrm{reg}}$ at $\alpha_x(1-\epsilon_x)$.  Substituting a
lower bound for $\lambda_x$ can only raise the endpoint, the conservative
direction.

An instance can support several candidates on one side.  The quantities
carried forward are $L_x=\max_iL_x^i$ and $U_x=\min_jU_x^j$, and a side with
no supported candidate leaves that endpoint open.

\subsection{Evidence States and Permitted Conclusions}
\label{subsec:evidence-and-conclusion-states}

Each required relation receives one of five evidential states, so that a
construction uses exactly what the source establishes.  A relation is
\emph{supported} when a result, measurement, architectural property, or trace
establishes it under the required anchor and quantifier.  It is
\emph{derived} when it follows from supported premises through a stated
inference rule.  A relation asserted without identifying evidence is
\emph{claimed}; a required relation absent from the source is \emph{not
reported}; and a relation excluded by the declared scope is \emph{not
applicable}.  Only supported and validly derived relations enter endpoint
computation.  These states assess the evidentiary relation, not mechanism
quality, and the cited passages preserve the basis of every conclusion.

The composite endpoints bracket the residual quantity of
Section~\ref{sec:definitions-scope},
\begin{equation}
  L_x \leq Z_x(S_x[D_x]) \leq U_x,
  \label{eq:claim-instance-residual-interval}
\end{equation}
against which a declared tolerance $\tau_x$ is decided.  The evidence
establishes that the tolerance is met when $U_x\leq\tau_x$, establishes that
it is exceeded when $L_x>\tau_x$, and otherwise leaves it unresolved.  At the
strict boundary $\tau_x=0$ this specializes to three \emph{residual
conclusions}: $U_x=0$ establishes a zero-residual certificate, $L_x>0$
establishes a positive residual and rules out $\tau_x=0$, and every other
interval leaves the boundary unresolved.  A positive tolerance records a
practical compromise with some retained assistance; the strict boundary stays
at zero.

The verdict on $c_x$ answers a separate question.  It is upheld when evidence
supports the assertion over its declared class, refuted when evidence from
that class contradicts it, and unresolved otherwise.  A valid improvement
claim can coexist with a positive residual, while an in-class counterexample
can refute a broad claim without supplying either endpoint.  Keeping these
outputs separate lets Section~\ref{sec:deployment} recover both the truth of
a coded claim and its deployment consequence.

\subsection{Study Selection and Coding}
\label{subsec:matrix-review-method}

Records first passed a hard authority gate.  Two independent model channels
then screened titles and abstracts, with advancement requiring
\texttt{include} from both.  The same channels analyzed randomized full-text
sequences from the resulting pool.  At full text, we extracted claim
instances only from studies that introduce or evaluate an intervention and
make or assess a claim about an adverse deployment outcome.  Each eligible
result was assigned to a claim instance as defined above.  Coding stopped
when the reported shares stopped moving as batches were added.  The resulting 198-paper coded set is analyzed in Section~\ref{sec:deployment}.
Appendix~\ref{app:search} reports the protocol in full, and
Appendix~\ref{app:full-results} the record format, coding aggregates, and two
case records.

\section{What Published Safeguard Evidence Establishes}
\label{sec:deployment}

We apply the schedule of Section~\ref{sec:deployment-theory} using the slots
of Section~\ref{sec:theory-guided-systematization} to determine what the
reported evidence can and cannot establish, rather than grouping papers by
technique.

\subsection{One Coded Set at Two Coding Depths}
\label{subsec:eval-questions}

The depth-coded stratum yielded 24 claim instances under the full ten-slot
coding.  The wide-coded stratum pooled 152 claim instances across the two
channels, each record carrying slot evidence states, endpoints, and a
verdict.  Wide-coded verdicts are single-channel judgments, so they stay on
the individual records and are not aggregated.

A positive residual is established for 108 of the 152 wide-coded instances
and a zero-residual certificate for none; the remaining 44 stay unresolved.
Claim validity and residual bounding answer distinct questions about the
same evidence.  In the depth-coded subset, where both outputs are recorded on
the same 24 instances, neither determines the other: an upheld claim can
coexist with a positive residual, while all three refuted claims leave the
residual unresolved.  The refuting evidence invalidates an upper-bound operand
but supplies no lower endpoint: its values are not on the anchored scale, and
its sources report no anchored adverse value of their own.

\begin{table}[t]
  \centering
  \footnotesize
  \caption{Availability of supported or derived slot evidence in the
  depth-coded subset ($N=24$).}
  \label{tab:slot-fill}
  \renewcommand{\arraystretch}{1.04}
  \begin{tabular}{@{}l
    >{\raggedright\arraybackslash}p{0.60\columnwidth}rr@{}}
    \toprule
    \textbf{Slot} & \textbf{Relation} &
      \multicolumn{2}{c}{\textbf{Filled}} \\
    \cmidrule(lr){3-4}
    & & \textbf{$n$} & \textbf{\%} \\
    \midrule
    \multicolumn{4}{@{}l}{\textbf{Lower-endpoint evidence}} \\
    LB1 & guarded versus unguarded comparison & 21 & 88 \\
    LB2 & removal versus matched-utility substitute & 18 & 75 \\
    LB3 & reproducibility of benign behavior & 19 & 79 \\
    LB4 & reachability separation & 15 & 63 \\
    \midrule
    \multicolumn{4}{@{}l}{\textbf{Upper-endpoint evidence}} \\
    UB0 & artifact transfer and tampering budget & 8 & 33 \\
    UB1 & decidable success event & 23 & 96 \\
    UB2 & coverage $\alpha$ & 19 & 79 \\
    UB3 & conditional failure $\epsilon$ & 21 & 88 \\
    \textbf{UB4} & \textbf{continuation $r$} & \textbf{5} & \textbf{21} \\
    UB5 & dependence across components & 19 & 79 \\
    \bottomrule
  \end{tabular}
\end{table}

Table~\ref{tab:slot-fill} states the depth-coded subset's central pattern.  Across the 24
depth-coded instances, sources define a success event, measure its conditional
failure rate, and describe its coverage with comparable frequency.  Evidence
about what remains reachable after that event is supported or validly derived
in only five of the 24 instances.  By
Corollary~\ref{cor:local-perfection}, a supported $\epsilon$ without a
supported or derived $r$ yields the bound $Z_x(S_x[D_x])\leq1$.  Of the three
gates an upper bound through mediation requires, the subset measures
conditional failure most often and supports least often the continuation that
governs its effect.

The gap is larger than the table alone shows.  Of the five instances with a
supported or derived $r$, only one also has supported $\alpha$, $\epsilon$,
and a success event under the same anchor.  That one supplies the depth-coded
subset's only computable upper endpoint from this route.  Supporting one
operand in isolation cannot tighten the endpoint.

Independent coding of the depth-coded subset by the two channels matched on
all 24 residual conclusions and on 20 of the 24 claim verdicts.
Appendix~\ref{app:full-results} reports the per-slot states.

\subsection{The Attack-Witness Row, and How the Literature Reaches It}
\label{subsec:lower-endpoint-results}

All 108 wide-coded positive residuals rest on the same attack-witness row in
Table~\ref{tab:exchange-rate}.  Each source ran at least one strategy from its
declared class against its deployed configuration and reported the adverse
value produced by that strategy.  By Proposition~\ref{prop:attack-witness},
each reported value is a lower bound on $Z_x(S_x[D_x])$, so the corresponding
lower endpoint can be read directly from the published result and reaches no
further than the strategies actually run.  The attack-witness lower bound does
not depend on $\delta_{T_x}(g_x)$.

Emulated Disalignment recombines a released pretrained checkpoint with its
aligned sibling at decoding time, scored against a harmful outcome
measure~\cite{zhou2024emulateddisalignment}.
Across four model families the executed attack yields harmful rates of
32.0\%, 37.0\%, 27.0\%, and 57.6\%.  Each rate is the value of one law the
attacker can induce on the deployed configuration, so each is directly a lower
bound on $Z_x(S_x[D_x])$.

RESTA shows why the safeguard effect and the residual must stay
separate even inside one source~\cite{bhardwaj2024resta}.  Its reduction in
judged unsafe responses supports the paper's improvement claim, so its
single-channel record assigns an upheld verdict.  The same evaluation still
judges 37.78\% of the restored model's multilingual CATQA answers harmful.
This rate alone gives $L_x\geq0.3778>0$.  The measured improvement and the
positive residual are both supported, and under
Equation~\ref{eq:effect-residual-identity} they are the two parts of the same
unguarded total.  We analyze the other instances on the attack-witness row in the same way, using the adverse-value column of a table published to
demonstrate a reduction or an attack.

The lower side also characterizes the restriction-only pattern.  None
of the seven capability-removal instances in the depth-coded subset
establishes a frontier change at matched normal utility: six report
restriction evidence in LB2 and one supplies no qualifying relation.  By
Proposition~\ref{prop:frontier-order-restriction}, restriction alone cannot
lower the frontier.  These sources support improvement at their chosen
operating points, without evidence that the best attainable adverse value has
moved.  Repeating attack tests at the restricted operating point improves the
estimate of that point; changing the floor requires an additive substitute
at matched utility or a reachability separation, which are LB2 and LB4.

\subsection{Coverage and Continuation Determine a Check's Deployment Bound}
\label{subsec:composition-results}

Coverage depends on topology, not on component quality, as two non-LLM
precedents show.  Against Blacklight's global store~\cite{li2022blacklight},
every query enters the collision check, so a harmful query that passes is a
conditional failure inside a covered path.  Against a detector whose state is
scoped per account, opening a new account resets that state, so the same
attack never enters the mediation domain and becomes a coverage
failure~\cite{feng2023stateful}.  The distinction is
decision-relevant because the repairs differ: one calls for a better check,
the other for routing the bypass path through any check.

Equation~\ref{eq:accuracy-gated-value} shows continuation determines the
certified contribution of accuracy.
The sequential monitor of Chen et al., a depth-coded instance, reports
defense success up to 93\% on cumulative decomposition
attacks~\cite{chen2026monitoring}.  That figure supports conditional
performance on the evaluated sequences.  An upper bound also requires the
value still available after a locally successful check: harmful content
already released in earlier answers, fresh-session retries, and permitted
follow-on actions.  Those paths lie outside the
reported event, so $r_x$ is unreported and $U_x$ stays open at one.  A
response classified as safe does not bound what the trajectory has already
supplied.

\subsection{The Quantifier Sets the Scope, Not the Format}
\label{subsec:claim-certificate-results}

Whereas coverage and continuation determine how a rate affects the bound, the quantifier
decides which attackers the rate describes.  A measured maximum over a fixed
suite quantifies over the listed attacks.  A claim about an adaptive class
quantifies over every strategy the class admits.

Six of the 24 depth-coded instances report a maximum over enumerated attacks within a
stated tampering or interaction
budget~\cite{obrien2025deepignorance,tamirisa2025tamper,li2024wmdp}.  Each
maximum is informative for the attacks run, and none is a uniform bound over
the budget.  A set of tested strategies lies inside $\mathcal C_{g_x}$, and a
supremum over an inner sample provides no upper bound at any sample size.  A
scaling analysis over four attack paradigms makes the gap measurable: on a
shared compute axis, attack success rises with the budget spent inside a fixed
method and model~\cite{wang2026jailbreakscaling}.  A suite run at one budget
point therefore describes that point, not the class.  Naming a budget defines
the domain a useful bound would have to cover, without performing the
quantification the reachable-set row requires.

Circuit Breakers makes the consequence observable.  The original evaluation
reports low attack success on fixed suites and claims robustness to powerful
unseen attacks~\cite{zou2024circuitbreakers}.  A later defense-aware
reinforcement learning attack operates within that declared class and reaches
conditional failure near one~\cite{Nasr2026AttackerMovesSecond}.  Those
measurements remain valid descriptions of the attacks they ran, while the
in-class counterexample refutes the broader claim.  Because the
counterexample supplies neither a matched adverse value on the anchored scale
nor any upper operand, both endpoints stay open.

SmoothLLM provides the complementary formal case.  Its theorem supplies a
robust failure bound for a declared $k$-unstable certificate class using
defender-controlled independent perturbations~\cite{robey2025smoothllm}.  The
proof supports claims whose attacker scope is contained in that class.  In
the claim analyzed here the accompanying prose reaches a wider scope, and no
containment result connects the two.  The theorem remains a supported local
guarantee, and the wider claim stays open.

\subsection{Dependence, Not Depth}
\label{subsec:composition-depth-results}

No depth-coded instance reports a history-uniform conditional bound for a
general serial stack, so by Equation~\ref{eq:serial-marginal-bound} the best
bound such a stack supports is that of its single strongest layer, whatever
its depth.  SmoothLLM's controlled randomization is the one premise that
licenses a product: it supplies independence for the votes inside one
invocation, and composition holds at that grain.  Fresh
invocations and attacker-chosen retries need their own premise.  The
certifying property of a stack is the supported dependence relation at the
deployed history grain, and UB5 records that.

\subsection{Where the Three Gates Close}
\label{subsec:upper-endpoint-results}

Fides supplies a worked instance in which coverage, failure,
and continuation are all supported or derived under one
anchor~\cite{costa2025fides}.  Its
service-integrity instance fixes an event on the agent's consequential
actions.
Every consequential tool action passes through the policy check, so
architecture evidence establishes $\alpha_x=1$.  The source's noninterference
result covers all untrusted inputs in the declared class, establishing
$\epsilon_x=0$.  On a checked trajectory untrusted data cannot influence the
consequential action, which gives $r_x=0$.  Theorem~\ref{thm:local-contract-lift}
then closes the route:
\begin{equation}
  U_x=1-\alpha_x(1-\epsilon_x)(1-r_x)=0,
  \label{eq:fides-complete-upper}
\end{equation}
and with $Z_x\geq0$ by normalization the interval is $Z_x(S_x[D_x])=0$: a
zero-residual certificate for the anchored integrity outcome.

Each premise plays a distinct role.  If any one is missing, the endpoint remains
open.  The decisive premise is
$r_x=0$, which
detection accuracy cannot supply
(Corollary~\ref{cor:local-perfection}); it is derived from the declared
outcome instead.
$Z_x$ is the probability of the binary episode event that untrusted data
causes one unintended consequential action~\cite{greshake2023not}.  The
information-flow separation makes that continuation structurally unavailable
after a covered success.  CaMeL belongs to the same broad mechanism family,
yet its source claim leaves the class-uniform failure and continuation
relations open: the shared family label does not determine the
certificate~\cite{Debenedetti2025CaMeL}.

The certificate has a reported utility cost in the same instance: task-completion
loss up to 24.5\% under the policy-on configuration~\cite{costa2025fides}.
Structural separation establishes $r_x=0$, while reducing attainable utility.  A deployment
declaring a utility target $q$ decides whether that exchange is acceptable;
the certificate states the supported guarantee, not whether to accept it.

The schedule names what is missing, and the sharpest case is one where only
the last operand is absent.  The f-secure LLM system disaggregates planning
from execution and keeps untrusted input out of the planning stage, and its
Theorem 6.2 proves execution trace non-compromise, a noninterference property
over the declared untrusted class~\cite{wu2024system}.  Coverage and
class-uniform failure are therefore established by proof rather than measured:
$\alpha_x=1$ and $\epsilon_x=0$ for plan compromise.  The source then names
the surviving channel itself.  The attacker may still influence the data being
processed, and their inability to influence the plan ``drastically limits the
scale and scope of any possible attacks.''  This statement identifies the
continuation operand but supplies neither a bound on it nor a comparison with
$\mathcal B_x$, the baseline for what the attacker could achieve using only
resources outside the deployed system.  By
Theorem~\ref{thm:local-contract-lift},
$U_x=1-\alpha_x(1-\epsilon_x)(1-r_x)=r_x$, so the whole upper endpoint passes
to an unvalued term and $U_x=1$.  Two gates closed by proof narrow the
interval by nothing when the third is left without a value.

\subsection{What Transfers}

The transferable relations hold between evidence and conclusions, not between
mechanisms.  The quantifier asymmetry separating the two directions follows
from the supremum in
Equation~\ref{eq:actual-deployment-value} and applies to any safeguard.  Every
established positive lower bound in either coding stratum comes from a strategy
the source itself executed.  An upper endpoint below one requires coverage,
conditional failure, and continuation together, as the Fides instance shows.

Whether the strict boundary is attainable depends on the declared outcome
together with the traces the safeguard leaves reachable.  Restriction alone
cannot reach it: removing behaviors cannot create a useful trace with no adverse value.  A zero-residual certificate
through mediation requires $r_x=0$, so useful behavior must supply no
adverse value.  This separation can hold for a service-integrity outcome
because completing a task does not necessarily require the prohibited action.  For an
external-world outcome, the same output can supply both legitimate value
and uplift toward the harm.  When the uniform dual-use condition holds
with $\rho>0$, Equation~\ref{eq:uniform-dual-use} gives a positive floor $\rho q$ for every exactly simulable deployment that meets a positive
utility target.

Of the 152 wide-coded instances, 81 concern
service-integrity outcomes, 52 concern external-world outcomes, and 19 are
mixed.  The zero-residual certificate above is a service-integrity
instance.  For
external-world outcomes, wherever the dual-use floor binds, the reportable
target is a bounded $\tau$ supported by $\alpha$, $\epsilon$, and $r$.

These relations are what a catalogue organized by technique cannot express.
Fides and CaMeL reach different conclusions inside one architecture family
because their supported quantifiers and deployed paths differ.  RESTA and
Emulated Disalignment reach the same conclusion through unrelated
interventions, because each ran a strategy from its own declared class and
reported the adverse value it produced.  What transfers across mechanism
families is evidence that fills a particular slot under a fixed anchor.
Section~\ref{sec:implications} derives what to report and what to build.

\section{Relation to Prior Systematizations}
\label{sec:prior-systematizations}

Prior systematizations of this literature differ in which step of the path
from a safeguard mechanism to a deployment decision they hold fixed, and each
step supplies something different.  Fixing \emph{what is compared} supplies a
shared vocabulary and a position for every mechanism, attack, and evaluation
resource: conversation-safety and jailbreak surveys, guardrail reviews
covering desired properties and the systems development lifecycle, systematic
reviews extending defense taxonomies, and recent SoKs constructing
multidimensional taxonomies all do
this~\cite{dong2024convsafety,yi2024jailbreaksurvey,dong2024guardrails,dong2025safeguarding,correia2026defensereview,wang2026sokguardrails,hong2025promptsok,xu2026robustnesssok}.
Fixing \emph{what an evaluation must declare} supplies explicit provenance for
whatever it reports: guidance of this kind asks evaluators to state threat
actors, requirements, access conditions, and supporting
evidence~\cite{AISISafeguardsTeam2025Principles,casper2024blackbox,qi2025durability}.
Fixing \emph{the conditions under which a number is produced} supplies
protocol-level comparability: JailbreakRadar scores 17 attacks from its own
taxonomy against nine aligned models and eight defenses in one shared
setting~\cite{chu2025jailbreakradar}, and TeleAI-Safety runs 19 attacks, 29
defenses, and 19 evaluation methods as interchangeable components of one
protocol over 14 target models~\cite{chen2025teleaisafety}.  Those same
SoKs~\cite{wang2026sokguardrails,hong2025promptsok,xu2026robustnesssok}
re-evaluate attacks and defenses under matched configurations, comparing
security, efficiency, utility, cost, and judge choices; cross-model red
teaming instead holds a fixed prompt corpus and asks which model resists
it~\cite{pathade2025redteaming,jaiswal2026promptinjection}.  Fixing \emph{how
an outcome is scored} supplies numbers that carry the same meaning across
methods: GuidedBench shows that evaluation systems without case-specific
criteria yield effectiveness estimates that do not support comparison across
methods~\cite{huang2025guidedbench}.  PandaGuard reaches the same point from
the scoring side, finding across a grid of 19 attacks and 12 defenses over 49
models that judge disagreement introduces nontrivial variance in the resulting
safety assessment~\cite{shen2025pandaguard}.  Fixing \emph{the argument from one
measurement to one deployment} supplies a deployment conclusion for that
deployment: safety cases and uplift analyses connect particular measurements
to broader models under explicit assumptions, one argument at a
time~\cite{Clymer2025SafetyCase,Mahato2026SafeguardConditionedUplift,Vaccaro2026HarmfulCapabilityUplift}.

The first four steps deliver a comparable reported quantity with explicit
provenance.  That is not yet a statement about the residual: an attack-success
rate that every paper computes identically, on a declared threat model, leaves
open what it establishes about the assistance a guarded service still
supplies.  The fifth step reaches such a statement for a single deployment,
under assumptions selected for it.  Between them is the conversion this
paper provides: it acts on the reported quantity itself, reads it on the
outcome scale and attacker class its source declared, and returns the
strongest conclusion about the residual that follows.  Each further entry at
any of the five steps enlarges the evidence base available to this conversion.  Our
comparison unit is accordingly a source-anchored deployment-safety claim
instance.  We apply the schedule and record which coding slots prevent a
stronger conclusion.  The common
schedule and coding procedure make that conversion auditable across safeguard
families before any individual deployment argument is attempted.  Mechanistic
work on refusal separates the scored utterance from the mechanism behind it: the
refusal an evaluation scores can be traced to a small set
of residual-stream features and ablated away~\cite{arditi2024refusal}, and
redundant features behind them stay dormant until those are
suppressed~\cite{prakash2025refusal}.

\section{Discussion}
\label{sec:implications}

Section~\ref{sec:deployment} illustrates which measurements can tighten a
deployment conclusion and which cannot, even if their reported values
improve.

\subsection{What an Evaluation Should Report}
\label{subsec:evidence-to-action}

An evaluation intended to support a deployment claim should first fix the
anchor $\theta_x$ of Equation~\ref{eq:systematization-anchor}.  It should then
identify the schedule row that supports the intended conclusion and report all
quantities required by that
row~\cite{AISISafeguardsTeam2025Principles,casper2024blackbox,qi2025durability}.
Improving one reported number cannot tighten the bound when another required
quantity is missing.  For lower bounds, the attack-witness row needs no additional evidence.  The frontier route requires
both a deployment meeting the utility target and the LB2, LB3, and LB4
operands.  None of the seven depth-coded capability-removal instances in
Section~\ref{sec:deployment} completes that route.  The four changes below
address the upper side.

\paragraph{Report coverage as an architecture fact.}
$\alpha$ asks which paths to the anchored outcome must enter the mediation
domain.  It is answered by enumerating the deployment's paths and showing the
domain on each, not by any accuracy figure.  A path that resets scoped
state belongs in that enumeration.

\paragraph{Report continuation, or the value of accuracy is undetermined.}
$r$ asks what adverse value remains available after the event succeeds:
content already released, actions the policy still permits, state accumulated
before the check, and further attempts within the horizon.  Because
Equation~\ref{eq:accuracy-gated-value} scales every accuracy gain by $1-r$,
an unreported $r$ leaves the value of a reported improvement undetermined
rather than merely unstated.

\paragraph{Report the dependence premise, not the layer count.}
Error rates across layers can be multiplied only if each layer's conditional
bound remains valid after every preceding interaction history, at the
deployed history grain.  An evaluation should report this dependence
condition and the grain at which it holds, not only the number of layers.

\paragraph{For transferable artifacts, declare a covering outer class.}
Released weights admit no mediation event.  Their upper-bound route therefore
requires a uniform value bound over a declared tampering class that contains
$\Sigma$.  Testing more individual attacks cannot supply this uniform bound:
each test adds only a lower-bound witness.

\subsection{What the Schedule Implies for Design}
\label{subsec:governance}

Each gate requires a different kind of intervention.  Lowering $\epsilon$ is
a statistical problem in the
classifier~\cite{inan2023llamaguard,han2024wildguard,Cunningham2026ConstitutionalClassifiersPP}.
Raising $\alpha$ is a routing problem in the
architecture~\cite{Debenedetti2025CaMeL,costa2025fides,shi2025progent}.
Driving $r$ to zero is a structural
problem: it requires that a covered success leave no adverse continuation
available.  Detection accuracy cannot deliver that at any value.

For an external-world outcome where the dual-use floor binds,
Section~\ref{sec:deployment} leaves a bounded $\tau$ as the reportable
target.  Choosing that value is a deployment
decision rather than an evaluation result: it requires the constraints
in $\mathcal L$ and the utility requirement $B(g)\geq q$.

A certificate binds to its anchor and to the premises of the row that
produced it.  A second deployment inherits it only by preserving those
coordinates and premises, or through a containment argument covering the new
scope.

The coded set is a saturation sample rather than a census.

\section{Conclusion}
\label{sec:conclusion}

A local safeguard score establishes that a control worked in a specified test,
while deployment safety concerns the harmful assistance the guarded service
still supplies.  A single successful attack can establish that harmful
assistance remains.  Showing how little remains requires a conjunction:
coverage of the relevant attack paths, conditional failure on those paths, and
the continuation available after a safeguard succeeds.  One coded claim supplies that conjunction and rules out the worst case.  When a
quantity yields no tighter bound, what is missing is a different quantity,
not a more precise version of the same one.  A safeguard can work exactly as
tested while the safety of the deployed system remains unresolved.

\appendix

\section*{Ethical Considerations}
\label{app:ethics}

This paper systematizes results that are already published.  Every quantity it
recomputes is one its source already reported, the evidence base is the public
literature, and the full-text coding was executed by the two independent model
channels on public papers.  The work involved no attack execution, no deployed
system, and no human subjects.

The stakeholders are the authors who report safeguard evaluations, the
reviewers and evaluators who read those reports, and the operators who act on
them.  The schedule states which reported quantity bounds deployment risk and
which returns no bound, so what it supplies is a standard for evaluation rather
than a capability for attack.  An adversary gains nothing the cited papers do
not already state.  On that basis we consider publication justified.

\section*{Open Science}
\label{app:open-science}

The supplementary materials document the wide-coded claim instances in this paper.
The first file states the common full-text review task applied by both
independent model channels.  The second lists every source-channel record of the
wide-coded portion of the full-text coding pass with its final eligibility
status, claim-instance status, and instance count.  The third and
fourth compile the final per-source assessments from the Fable and GPT
channels, respectively, including the source version used, eligibility
decision, extracted instances, evidence assessments and locators, residual
conclusion, claim verdict, and recorded boundary cases.  The supplement also
includes the coding instrument applied by both channels; the assessment
records cite its numbered rules and global conventions.  The depth-coded
subset is documented in Appendix~\ref{app:full-results} rather than here:
Section~\ref{subsec:devcorpus-aggregates} gives its aggregates and
Sections~\ref{subsec:case-fides} and~\ref{subsec:case-p2} reproduce two
case records.

The supplementary materials are available in the
\href{https://github.com/wpydcr/Safeguard-Artifact}{project repository}.

Together, these files document the executed review decisions behind the
aggregate results.  Third-party full texts are not redistributed.  The
assessment files contain source identifiers, locators, and the excerpts needed
to substantiate a judgment.

\begingroup
\sloppy
\bibliographystyle{plainurlselective}
\bibliography{references}

@misc{OpenAI2026GPT56SystemCard,
  author       = {{OpenAI}},
  title        = {{GPT-5.6} System Card},
  year         = {2026},
  month        = jul,
  howpublished = {OpenAI Deployment Safety Hub},
  url          = {https://deploymentsafety.openai.com/gpt-5-6},
  note         = {Published July 9, 2026; accessed August 14, 2026}
}

@misc{Cunningham2026ConstitutionalClassifiersPP,
  author        = {Hoagy Cunningham and Jerry Wei and Zihan Wang and Andrew Persic and Alwin Peng and Jordan Abderrachid and Raj Agarwal and Bobby Chen and Austin Cohen and Andy Dau and Alek Dimitriev and Rob Gilson and Logan Howard and Yijin Hua and Jared Kaplan and Jan Leike and Mu Lin and Christopher Liu and Vladimir Mikulik and Rohit Mittapalli and Clare O'Hara and Jin Pan and Nikhil Saxena and Alex Silverstein and Yue Song and Xunjie Yu and Giulio Zhou and Ethan Perez and Mrinank Sharma},
  title         = {Constitutional Classifiers++: Efficient Production-Grade Defenses against Universal Jailbreaks},
  year          = {2026},
  eprint        = {2601.04603},
  archiveprefix = {arXiv},
  primaryclass  = {cs.CR},
  doi           = {10.48550/arXiv.2601.04603},
  note          = {arXiv preprint}
}

@misc{Debenedetti2025CaMeL,
  author        = {Edoardo Debenedetti and Ilia Shumailov and Tianqi Fan and Jamie Hayes and Nicholas Carlini and Daniel Fabian and Christoph Kern and Chongyang Shi and Andreas Terzis and Florian Tram{\`e}r},
  title         = {Defeating Prompt Injections by Design},
  year          = {2025},
  eprint        = {2503.18813},
  archiveprefix = {arXiv},
  primaryclass  = {cs.CR},
  doi           = {10.48550/arXiv.2503.18813},
  note          = {arXiv preprint}
}

@misc{Anthropic2026Containment,
  author       = {{Anthropic}},
  title        = {How We Contain {Claude} across Products},
  year         = {2026},
  month        = may,
  howpublished = {Anthropic Engineering},
  url          = {https://www.anthropic.com/engineering/how-we-contain-claude},
  note         = {Published May 25, 2026; accessed August 14, 2026}
}

@inproceedings{Nasr2026AttackerMovesSecond,
  author    = {Milad Nasr and Nicholas Carlini and Chawin Sitawarin and Sander V. Schulhoff and Jamie Hayes and Michael Ilie and Juliette Pluto and Shuang Song and Harsh Chaudhari and Ilia Shumailov and Abhradeep Guha Thakurta and Kai Yuanqing Xiao and Andreas Terzis and Florian Tram{\`e}r},
  title     = {The Attacker Moves Second: Stronger Adaptive Attacks Bypass Defenses Against {LLM} Jailbreaks and Prompt Injections},
  booktitle = {35th USENIX Security Symposium (USENIX Security 26)},
  year      = {2026},
  month     = aug,
  address   = {Baltimore, MD},
  publisher = {USENIX Association},
}

@misc{Jain2026AdaptiveAdversaries,
  author        = {Devina Jain and David Hartmann and Chuan Li},
  title         = {Adaptive Adversaries: A Multi-Turn, Multi-LLM Benchmark for {LLM} Agent Security},
  year          = {2026},
  eprint        = {2607.18063},
  archiveprefix = {arXiv},
  primaryclass  = {cs.CR},
  doi           = {10.48550/arXiv.2607.18063},
  note          = {arXiv preprint}
}

@inproceedings{Rao2026FragFuse,
  author    = {Zixin Rao and Wentian Zhu and Chan Aristella Lu and Zhaorun Chen and Wei Niu and Le Guan and Bo Li and Zhen Xiang},
  title     = {{FragFuse}: Bypassing Access Control of Large Language Model Agents via Memory-Based Query Fragmentation and Fusion},
  booktitle = {35th USENIX Security Symposium (USENIX Security 26)},
  year      = {2026},
  month     = aug,
  address   = {Baltimore, MD},
  publisher = {USENIX Association},
}

@techreport{AISISafeguardsTeam2025Principles,
  author      = {{UK AI Safety Institute Safeguards Analysis Team}},
  title       = {Principles for Evaluating Misuse Safeguards of Frontier {AI} Systems},
  institution = {UK AI Safety Institute},
  year        = {2025},
  url         = {https://www.aisi.gov.uk/blog/principles-for-safeguard-evaluation},
  note        = {Published February 4, 2025; organization renamed the AI Security Institute on February 14, 2025; accessed August 14, 2026}
}

@misc{Clymer2025SafetyCase,
  author        = {Joshua Clymer and Jonah Weinbaum and Robert Kirk and Kimberly Mai and Selena Zhang and Xander Davies},
  title         = {An Example Safety Case for Safeguards Against Misuse},
  year          = {2025},
  eprint        = {2505.18003},
  archiveprefix = {arXiv},
  doi           = {10.48550/arXiv.2505.18003},
  note          = {arXiv preprint}
}

@misc{Vaccaro2026HarmfulCapabilityUplift,
  author        = {Michelle Vaccaro and Jaeyoon Song and Abdullah Almaatouq and Michiel A. Bakker},
  title         = {Evaluating Human--AI Safety: A Framework for Measuring Harmful Capability Uplift},
  year          = {2026},
  eprint        = {2603.26676},
  archiveprefix = {arXiv},
  doi           = {10.48550/arXiv.2603.26676},
  note          = {arXiv preprint}
}

@misc{Mahato2026SafeguardConditionedUplift,
  author        = {Dipesh Tharu Mahato},
  title         = {Safeguard-Conditioned Uplift: Measuring Utility--Risk Frontiers for Dual-Use Biology Assistants},
  year          = {2026},
  eprint        = {2607.13039},
  archiveprefix = {arXiv},
  doi           = {10.48550/arXiv.2607.13039},
  note          = {arXiv preprint}
}

@misc{Wu2026CopyableContext,
  author        = {Pingyu Wu and Lingyao Zhu and Weiming Zhang and Nenghai Yu},
  title         = {Safeguards Based on Copyable Context Cannot Provide Reliable Safety for {LLMs}},
  year          = {2026},
  eprint        = {2607.27951},
  archiveprefix = {arXiv},
  primaryclass  = {cs.CR},
  doi           = {10.48550/arXiv.2607.27951},
  note          = {arXiv preprint}
}

@misc{Anthropic2026Fable,
  author       = {{Anthropic}},
  title        = {Claude {Fable} 5 and {Claude} {Mythos} 5},
  year         = {2026},
  month        = jun,
  url          = {https://www.anthropic.com/news/claude-fable-5-mythos-5},
  note         = {Model announcement, June 9, 2026. Accessed 2026-08-20}
}

@misc{OpenAI2026GPT56,
  author       = {{OpenAI}},
  title        = {{GPT-5.6}},
  year         = {2026},
  month        = jul,
  url          = {https://openai.com/index/gpt-5-6/},
  note         = {Model suite announcement (Sol, Terra, Luna), July 9, 2026. Accessed 2026-08-20}
}

@article{SaltzerSchroeder1975,
  author  = {Jerome H. Saltzer and Michael D. Schroeder},
  title   = {The Protection of Information in Computer Systems},
  journal = {Proceedings of the IEEE},
  year    = {1975},
  volume  = {63},
  number  = {9},
  pages   = {1278--1308},
  month   = sep,
  doi     = {10.1109/PROC.1975.9939},
  url     = {https://doi.org/10.1109/PROC.1975.9939}
}

@article{Rethlefsen2021PRISMAS,
  author  = {Melissa L. Rethlefsen and Shona Kirtley and Siw Waffenschmidt and Ana Patricia Ayala and David Moher and Matthew J. Page and Jonathan B. Koffel and {{PRISMA-S Group}}},
  title   = {{PRISMA-S}: An Extension to the {PRISMA} Statement for Reporting Literature Searches in Systematic Reviews},
  journal = {Systematic Reviews},
  year    = {2021},
  volume  = {10},
  number  = {1},
  pages   = {39},
  doi     = {10.1186/s13643-020-01542-z},
  url     = {https://doi.org/10.1186/s13643-020-01542-z}
}

@misc{Anthropic2025MisuseReport,
  author       = {{Anthropic}},
  title        = {Detecting and Countering Misuse of {AI}: August 2025},
  year         = {2025},
  month        = aug,
  howpublished = {Anthropic Threat Intelligence Report},
  url          = {https://www.anthropic.com/news/detecting-countering-misuse-aug-2025},
  note         = {Published August 27, 2025; accessed August 20, 2026}
}

@misc{OpenAI2025DisruptingMaliciousUses,
  author       = {{OpenAI}},
  title        = {Disrupting Malicious Uses of {AI}: An Update (October 2025)},
  year         = {2025},
  month        = oct,
  howpublished = {OpenAI Threat Intelligence Report},
  url          = {https://openai.com/global-affairs/disrupting-malicious-uses-of-ai-october-2025/},
  note         = {Published October 7, 2025; accessed August 20, 2026}
}

@misc{inan2023llamaguard,
  title         = {Llama Guard: {LLM}-based Input-Output Safeguard for Human-AI Conversations},
  author        = {Inan, Hakan and Upasani, Kartikeya and Chi, Jianfeng and Rungta, Rashi and Iyer, Krithika and Mao, Yuning and Tontchev, Michael and Hu, Qing and Fuller, Brian and Testuggine, Davide and Khabsa, Madian},
  year          = {2023},
  eprint        = {2312.06674},
  archivePrefix = {arXiv},
  primaryClass  = {cs.CL},
}

@inproceedings{han2024wildguard,
  title     = {WildGuard: Open One-stop Moderation Tools for Safety Risks, Jailbreaks, and Refusals of {LLM}s},
  author    = {Han, Seungju and Rao, Kavel and Ettinger, Allyson and Jiang, Liwei and Lin, Bill Yuchen and Lambert, Nathan and Choi, Yejin and Dziri, Nouha},
  booktitle = {Advances in Neural Information Processing Systems},
  volume    = {37},
  year      = {2024},
  doi       = {10.52202/079017-0261},
}

@article{costa2025fides,
  title={Securing {AI} Agents with Information-Flow Control},
  author={Costa, Manuel and K\"opf, Boris and Kolluri, Aashish and Paverd, Andrew and Russinovich, Mark and Salem, Ahmed and Tople, Shruti and Wutschitz, Lukas and Zanella-B\'eguelin, Santiago},
  journal={arXiv preprint arXiv:2505.23643},
  year={2025},
  doi={10.48550/arXiv.2505.23643}
}

@inproceedings{chen2026monitoring,
  title     = {Monitoring Decomposition Attacks in {LLM}s with Lightweight Sequential Monitors},
  author    = {Chen, Yueh-Han and Nitish Joshi and Yulin Chen and Maksym Andriushchenko and Rico Angell and He He},
  booktitle = {The Fourteenth International Conference on Learning Representations},
  year      = {2026},
}

@inproceedings{feng2023stateful,
  title     = {Stateful Defenses for Machine Learning Models Are Not Yet Secure Against Black-box Attacks},
  author    = {Ryan Feng and Ashish Hooda and Neal Mangaokar and Kassem Fawaz and Somesh Jha and Atul Prakash},
  booktitle = {Proceedings of the 2023 ACM SIGSAC Conference on Computer and Communications Security},
  pages     = {786--800},
  year      = {2023},
  publisher = {ACM},
  doi       = {10.1145/3576915.3623116},
}

@inproceedings{li2022blacklight,
  title     = {Blacklight: Scalable Defense for Neural Networks against {Query-Based} {Black-Box} Attacks},
  author    = {Huiying Li and Shawn Shan and Emily Wenger and Jiayun Zhang and Haitao Zheng and Ben Y. Zhao},
  booktitle = {31st USENIX Security Symposium (USENIX Security 22)},
  pages     = {2117--2134},
  year      = {2022},
  month     = aug,
  address   = {Boston, MA},
  publisher = {USENIX Association},
  isbn      = {978-1-939133-31-1},
}

@inproceedings{obrien2025deepignorance,
  title     = {Deep Ignorance: Filtering Pretraining Data Builds Tamper-Resistant Safeguards into Open-Weight {LLM}s},
  author    = {O'Brien, Kyle and Casper, Stephen and Anthony, Quentin Gregory and Korbak, Tomek and Kirk, Robert and Davies, Xander and Mishra, Ishan and Irving, Geoffrey and Gal, Yarin and Biderman, Stella},
  booktitle = {BioSafe GenAI Workshop 2025},
  year      = {2025},
}

@inproceedings{li2024wmdp,
  title     = {The {WMDP} Benchmark: Measuring and Reducing Malicious Use with Unlearning},
  author    = {Li, Nathaniel and Pan, Alexander and Gopal, Anjali and Yue, Summer and Berrios, Daniel and Gatti, Alice and Li, Justin D. and Dombrowski, Ann-Kathrin and Goel, Shashwat and Mukobi, Gabriel and Helm-Burger, Nathan and Lababidi, Rassin and Justen, Lennart and Liu, Andrew Bo and Chen, Michael and Barrass, Isabelle and Zhang, Oliver and Zhu, Xiaoyuan and Tamirisa, Rishub and Bharathi, Bhrugu and Herbert-Voss, Ariel and Breuer, Cort B. and Zou, Andy and Mazeika, Mantas and Wang, Zifan and Oswal, Palash and Lin, Weiran and Hunt, Adam Alfred and Tienken-Harder, Justin and Shih, Kevin Y. and Talley, Kemper and Guan, John and Steneker, Ian and Campbell, David and Jokubaitis, Brad and Basart, Steven and Fitz, Stephen and Kumaraguru, Ponnurangam and Karmakar, Kallol Krishna and Tupakula, Uday and Varadharajan, Vijay and Shoshitaishvili, Yan and Ba, Jimmy and Esvelt, Kevin M. and Wang, Alexandr and Hendrycks, Dan},
  booktitle = {Proceedings of the 41st International Conference on Machine Learning},
  volume    = {235},
  pages     = {28525--28550},
  year      = {2024},
  publisher = {PMLR},
}

@misc{bai2022constitutional,
  title         = {Constitutional AI: Harmlessness from AI Feedback},
  author        = {Bai, Yuntao and Kadavath, Saurav and Kundu, Sandipan and Askell, Amanda and Kernion, Jackson and Jones, Andy and Chen, Anna and Goldie, Anna and Mirhoseini, Azalia and McKinnon, Cameron and Chen, Carol and Olsson, Catherine and Olah, Christopher and Hernandez, Danny and Drain, Dawn and Ganguli, Deep and Li, Dustin and Tran-Johnson, Eli and Perez, Ethan and Kerr, Jamie and Mueller, Jared and Ladish, Jeffrey and Landau, Joshua and Ndousse, Kamal and Lukosuite, Kamile and Lovitt, Liane and Sellitto, Michael and Elhage, Nelson and Schiefer, Nicholas and Mercado, Noemi and DasSarma, Nova and Lasenby, Robert and Larson, Robin and Ringer, Sam and Johnston, Scott and Kravec, Shauna and El Showk, Sheer and Fort, Stanislav and Lanham, Tamera and Telleen-Lawton, Timothy and Conerly, Tom and Henighan, Tom and Hume, Tristan and Bowman, Samuel R. and Hatfield-Dodds, Zac and Mann, Ben and Amodei, Dario and Joseph, Nicholas and McCandlish, Sam and Brown, Tom and Kaplan, Jared},
  year          = {2022},
  eprint        = {2212.08073},
  archivePrefix = {arXiv},
  primaryClass  = {cs.CL},
}

@inproceedings{dai2024saferlhf,
  title     = {Safe {RLHF}: Safe Reinforcement Learning from Human Feedback},
  author    = {Dai, Josef and Pan, Xuehai and Sun, Ruiyang and Ji, Jiaming and Xu, Xinbo and Liu, Mickel and Wang, Yizhou and Yang, Yaodong},
  booktitle = {The Twelfth International Conference on Learning Representations},
  year      = {2024},
}

@inproceedings{tamirisa2025tamper,
  title     = {Tamper-Resistant Safeguards for Open-Weight {LLM}s},
  author    = {Tamirisa, Rishub and Bharathi, Bhrugu and Phan, Long and Zhou, Andy and Gatti, Alice and Suresh, Tarun and Lin, Maxwell and Wang, Justin and Wang, Rowan and Arel, Ron and Zou, Andy and Song, Dawn and Li, Bo and Hendrycks, Dan and Mazeika, Mantas},
  booktitle = {The Thirteenth International Conference on Learning Representations},
  year      = {2025},
}

@inproceedings{zou2024circuitbreakers,
  title     = {Improving Alignment and Robustness with Circuit Breakers},
  author    = {Zou, Andy and Phan, Long and Wang, Justin and Duenas, Derek and Lin, Maxwell and Andriushchenko, Maksym and Wang, Rowan and Kolter, Zico and Fredrikson, Matt and Hendrycks, Dan},
  booktitle = {Advances in Neural Information Processing Systems},
  volume    = {37},
  year      = {2024},
  doi       = {10.52202/079017-2651},
}

@article{robey2025smoothllm,
  title   = {{SmoothLLM}: Defending Large Language Models Against Jailbreaking Attacks},
  author  = {Robey, Alexander and Wong, Eric and Hassani, Hamed and Pappas, George J.},
  journal = {Transactions on Machine Learning Research},
  year    = {2025},
}

@inproceedings{mazeika2024harmbench,
  title     = {{H}arm{B}ench: A Standardized Evaluation Framework for Automated Red Teaming and Robust Refusal},
  author    = {Mantas Mazeika and Long Phan and Xuwang Yin and Andy Zou and Zifan Wang and Norman Mu and Elham Sakhaee and Nathaniel Li and Steven Basart and Bo Li and David Forsyth and Dan Hendrycks},
  booktitle = {Proceedings of the 41st International Conference on Machine Learning},
  series    = {Proceedings of Machine Learning Research},
  volume    = {235},
  pages     = {35181--35224},
  year      = {2024},
  publisher = {PMLR},
}

@inproceedings{wu2025isolategpt,
  title={{IsolateGPT}: An Execution Isolation Architecture for {LLM}-Based Agentic Systems},
  author={Wu, Yuhao and Roesner, Franziska and Kohno, Tadayoshi and Zhang, Ning and Iqbal, Umar},
  booktitle={Network and Distributed System Security Symposium},
  year={2025},
  doi={10.14722/ndss.2025.241131},
}

@article{shi2025progent,
  title={{Progent}: Securing {AI} Agents with Privilege Control},
  author={Shi, Tianneng and He, Jingxuan and Wang, Zhun and Wu, Linyu and Li, Hongwei and Guo, Wenbo and Song, Dawn},
  journal={arXiv preprint arXiv:2504.11703},
  year={2025},
  doi={10.48550/arXiv.2504.11703}
}

@inproceedings{greshake2023not,
  title={Not What You've Signed Up For: Compromising Real-World LLM-Integrated Applications with Indirect Prompt Injection},
  author={Greshake, Kai and Abdelnabi, Sahar and Mishra, Shailesh and Endres, Christoph and Holz, Thorsten and Fritz, Mario},
  booktitle={Proceedings of the 16th ACM Workshop on Artificial Intelligence and Security},
  pages={79--90},
  year={2023},
  doi={10.1145/3605764.3623985}
}

@inproceedings{debenedetti2024agentdojo,
  title     = {AgentDojo: A Dynamic Environment to Evaluate Prompt Injection Attacks and Defenses for {LLM} Agents},
  author    = {Edoardo Debenedetti and Jie Zhang and Mislav Balunovic and Luca Beurer-Kellner and Marc Fischer and Florian Tram{\`e}r},
  booktitle = {Advances in Neural Information Processing Systems 37},
  year      = {2024},
  note      = {Datasets and Benchmarks Track},
  doi       = {10.52202/079017-2636},
}

@inproceedings{chao2024jailbreakbench,
  title     = {JailbreakBench: An Open Robustness Benchmark for Jailbreaking Large Language Models},
  author    = {Chao, Patrick and Debenedetti, Edoardo and Robey, Alexander and Andriushchenko, Maksym and Croce, Francesco and Sehwag, Vikash and Dobriban, Edgar and Flammarion, Nicolas and Pappas, George J. and Tram{\`{e}}r, Florian and Hassani, Hamed and Wong, Eric},
  booktitle = {Advances in Neural Information Processing Systems},
  volume    = {37},
  year      = {2024},
  doi       = {10.52202/079017-1745},
}

@inproceedings{dong2024guardrails,
  title     = {Position: Building Guardrails for Large Language Models Requires Systematic Design},
  author    = {Yi Dong and Ronghui Mu and Gaojie Jin and Yi Qi and Jinwei Hu and
               Xingyu Zhao and Jie Meng and Wenjie Ruan and Xiaowei Huang},
  booktitle = {Proceedings of the 41st International Conference on Machine Learning (ICML)},
  series    = {Proceedings of Machine Learning Research},
  volume    = {235},
  pages     = {11375--11394},
  year      = {2024},
  publisher = {PMLR},
}

@inproceedings{dong2024convsafety,
  title     = {Attacks, Defenses and Evaluations for {LLM} Conversation Safety: A Survey},
  author    = {Zhichen Dong and Zhanhui Zhou and Chao Yang and Jing Shao and Yu Qiao},
  booktitle = {Proceedings of the 2024 Conference of the North American Chapter of the
               Association for Computational Linguistics: Human Language Technologies (Volume 1: Long Papers)},
  year      = {2024},
  month     = jun,
  address   = {Mexico City, Mexico},
  publisher = {Association for Computational Linguistics},
  pages     = {6734--6747},
  doi       = {10.18653/v1/2024.naacl-long.375},
}

@article{dong2025safeguarding,
  title   = {Safeguarding Large Language Models: A Survey},
  author  = {Yi Dong and Ronghui Mu and Yanghao Zhang and Siqi Sun and Tianle Zhang and
             Changshun Wu and Gaojie Jin and Yi Qi and Jinwei Hu and Jie Meng and
             Saddek Bensalem and Xiaowei Huang},
  journal = {Artificial Intelligence Review},
  volume  = {58},
  number  = {12},
  pages   = {382},
  year    = {2025},
  doi     = {10.1007/s10462-025-11389-2},
}

@article{yi2024jailbreaksurvey,
  title   = {Jailbreak Attacks and Defenses Against Large Language Models: A Survey},
  author  = {Sibo Yi and Yule Liu and Zhen Sun and Tianshuo Cong and Xinlei He and
             Jiaxing Song and Ke Xu and Qi Li},
  journal = {arXiv preprint arXiv:2407.04295},
  year    = {2024},
  eprint  = {2407.04295},
  archivePrefix = {arXiv},
  primaryClass  = {cs.CR},
  note    = {preprint; no peer-reviewed venue recorded on arXiv as of 2026-08-19}
}

@article{correia2026defensereview,
  title         = {A Systematic Literature Review on {LLM} Defenses Against Prompt
                   Injection and Jailbreaking: Expanding {NIST} Taxonomy},
  author        = {Pedro H. Barcha Correia and Ryan W. Achjian and
                   Diego E. G. Caetano de Oliveira and Ygor Acacio Maria and
                   Victor Takashi Hayashi and Marcos Lopes and Charles Christian Miers and
                   Simplicio, Jr., Marcos A.},
  journal       = {arXiv preprint arXiv:2601.22240},
  year          = {2026},
  eprint        = {2601.22240},
  archivePrefix = {arXiv},
  primaryClass  = {cs.CR},
  doi           = {10.48550/arXiv.2601.22240},
}

@inproceedings{wang2026sokguardrails,
  title     = {{SoK}: Evaluating Jailbreak Guardrails for Large Language Models},
  author    = {Xunguang Wang and Zhenlan Ji and Wenxuan Wang and Zongjie Li and
               Daoyuan Wu and Shuai Wang},
  booktitle = {2026 IEEE Symposium on Security and Privacy (S\&P)},
  pages     = {39--58},
  year      = {2026},
  doi       = {10.1109/SP63933.2026.00076},
  eprint    = {2506.10597},
  archivePrefix = {arXiv},
  primaryClass  = {cs.CR},
}

@article{hong2025promptsok,
  title         = {{SoK}: Systematizing {LLM} Prompt Security: Taxonomies, Datasets,
                   and Unified Evaluation of Attacks and Defenses},
  author        = {Hanbin Hong and Shuang Wu and Shuya Feng and Nima Naderloui and
                   Shenao Yan and Jingyu Zhang and Ali Arastehfard and Heqing Huang and
                   Yuan Hong},
  journal       = {arXiv preprint arXiv:2510.15476},
  year          = {2025},
  eprint        = {2510.15476},
  archivePrefix = {arXiv},
  primaryClass  = {cs.CR},
  doi           = {10.48550/arXiv.2510.15476},
}

@inproceedings{xu2026robustnesssok,
  title     = {{SoK}: Robustness in Large Language Models against Jailbreak Attacks},
  author    = {Feiyue Xu and Hongsheng Hu and Chaoxiang He and Sheng Hang and
               Hanqing Hu and Xiuming Liu and Yubo Zhao and Zhengyan Zhou and
               Bin Benjamin Zhu and Shi-Feng Sun and Dawu Gu and Shuo Wang},
  booktitle = {2026 IEEE Symposium on Security and Privacy (S\&P)},
  pages     = {118--137},
  year      = {2026},
  doi       = {10.1109/SP63933.2026.00107},
  eprint    = {2605.05058},
  archivePrefix = {arXiv},
  primaryClass  = {cs.CR},
}

@inproceedings{chu2025jailbreakradar,
  title     = {{JailbreakRadar}: Comprehensive Assessment of Jailbreak Attacks
               Against {LLM}s},
  author    = {Junjie Chu and Yugeng Liu and Ziqing Yang and Xinyue Shen and
               Michael Backes and Yang Zhang},
  booktitle = {Proceedings of the 63rd Annual Meeting of the Association for
               Computational Linguistics (ACL), Volume 1: Long Papers},
  pages     = {21538--21566},
  year      = {2025},
  publisher = {Association for Computational Linguistics},
  doi       = {10.18653/v1/2025.acl-long.1045},
}

@article{pathade2025redteaming,
  title         = {Red Teaming the Mind of the Machine: A Systematic Evaluation of
                   Prompt Injection and Jailbreak Vulnerabilities in {LLM}s},
  author        = {Chetan Pathade},
  journal       = {arXiv preprint arXiv:2505.04806},
  year          = {2025},
  eprint        = {2505.04806},
  archivePrefix = {arXiv},
  primaryClass  = {cs.CR},
  doi           = {10.48550/arXiv.2505.04806},
}

@inproceedings{jaiswal2026promptinjection,
  title         = {Analysis of {LLM}s Against Prompt Injection and Jailbreak Attacks},
  author        = {Piyush Jaiswal and Aaditya Pratap and Shreyansh Saraswati and
                   Harsh Kasyap and Somanath Tripathy},
  booktitle     = {Proceedings of the Workshop on Privacy in Large Language
                   Models ({LLM}) and Natural Language Processing ({NLP}) 2026},
  year          = {2026},
  publisher     = {ACM},
  doi           = {10.1145/3803628.3807972},
  eprint        = {2602.22242},
  archivePrefix = {arXiv},
  primaryClass  = {cs.CR},
}

@article{huang2025guidedbench,
  title         = {{GuidedBench}: Measuring and Mitigating the Evaluation
                   Discrepancies of In-the-wild {LLM} Jailbreak Methods},
  author        = {Ruixuan Huang and Xunguang Wang and Zongjie Li and Daoyuan Wu and
                   Shuai Wang},
  journal       = {arXiv preprint arXiv:2502.16903},
  year          = {2025},
  eprint        = {2502.16903},
  archivePrefix = {arXiv},
  primaryClass  = {cs.CL},
  doi           = {10.48550/arXiv.2502.16903},
}

@article{hagendorff2026autonomous,
  title   = {Large Reasoning Models Are Autonomous Jailbreak Agents},
  author  = {Thilo Hagendorff and Erik Derner and Nuria Oliver},
  journal = {Nature Communications},
  volume  = {17},
  number  = {1},
  pages   = {1435},
  year    = {2026},
  doi     = {10.1038/s41467-026-69010-1},
  eprint  = {2508.04039},
  archivePrefix = {arXiv},
}

@article{prakash2025refusal,
  title         = {Beyond ``{I}'m Sorry, {I} Can't'': Dissecting Large Language
                   Model Refusal},
  author        = {Nirmalendu Prakash and Yeo Wei Jie and Amir Abdullah and
                   Ranjan Satapathy and Erik Cambria and Roy Ka Wei Lee},
  journal       = {arXiv preprint arXiv:2509.09708},
  year          = {2025},
  eprint        = {2509.09708},
  archivePrefix = {arXiv},
  primaryClass  = {cs.CL},
  doi           = {10.48550/arXiv.2509.09708},
}

@inproceedings{zhou2024emulateddisalignment,
  title     = {{Emulated Disalignment: Safety Alignment for Large Language Models May Backfire!}},
  author    = {Zhanhui Zhou and Jie Liu and Zhichen Dong and Jiaheng Liu and
               Chao Yang and Wanli Ouyang and Yu Qiao},
  booktitle = {Proceedings of the 62nd Annual Meeting of the Association for
               Computational Linguistics (ACL), Volume 1: Long Papers},
  pages     = {15810--15830},
  year      = {2024},
  address   = {Bangkok, Thailand},
  publisher = {Association for Computational Linguistics},
  doi       = {10.18653/v1/2024.acl-long.842},
}

@inproceedings{bhardwaj2024resta,
  title     = {{Language Models are Homer Simpson! Safety Re-Alignment of
               Fine-tuned Language Models through Task Arithmetic}},
  author    = {Rishabh Bhardwaj and Do, Duc Anh and Soujanya Poria},
  booktitle = {Proceedings of the 62nd Annual Meeting of the Association for
               Computational Linguistics (ACL), Volume 1: Long Papers},
  pages     = {14138--14149},
  year      = {2024},
  address   = {Bangkok, Thailand},
  publisher = {Association for Computational Linguistics},
  doi       = {10.18653/v1/2024.acl-long.762},
}

@inproceedings{pang2026paladin,
  title     = {{Paladin: Defending LLM-enabled Phishing Emails with a New
               Trigger-Tag Paradigm}},
  author    = {Yan Pang and Wenlong Meng and Xiaojing Liao and Tianhao Wang},
  booktitle = {Network and Distributed System Security Symposium (NDSS)},
  year      = {2026}
}

@article{wu2024system,
  title={System-Level Defense against Indirect Prompt Injection Attacks: An Information Flow Control Perspective},
  author={Wu, Fangzhou and Cecchetti, Ethan and Xiao, Chaowei},
  journal={arXiv preprint arXiv:2409.19091},
  year={2024},
  doi={10.48550/arXiv.2409.19091}
}

@misc{solaiman2019release,
  title         = {Release Strategies and the Social Impacts of Language Models},
  author        = {Solaiman, Irene and Brundage, Miles and Clark, Jack and Askell, Amanda and Herbert-Voss, Ariel and Wu, Jeff and Radford, Alec and Krueger, Gretchen and Kim, Jong Wook and Kreps, Sarah and McCain, Miles and Newhouse, Alex and Blazakis, Jason and McGuffie, Kris and Wang, Jasmine},
  year          = {2019},
  eprint        = {1908.09203},
  archivePrefix = {arXiv},
  primaryClass  = {cs.CL},
  doi           = {10.48550/arXiv.1908.09203},
}

@inproceedings{qi2025durability,
  title     = {On Evaluating the Durability of Safeguards for Open-Weight {LLM}s},
  author    = {Qi, Xiangyu and Wei, Boyi and Carlini, Nicholas and Huang, Yangsibo and Xie, Tinghao and He, Luxi and Jagielski, Matthew and Nasr, Milad and Mittal, Prateek and Henderson, Peter},
  booktitle = {The Thirteenth International Conference on Learning Representations},
  year      = {2025},
}

@inproceedings{arditi2024refusal,
  title     = {Refusal in Language Models Is Mediated by a Single Direction},
  author    = {Arditi, Andy and Obeso, Oscar and Syed, Aaquib and Paleka, Daniel and Panickssery, Nina and Gurnee, Wes and Nanda, Neel},
  booktitle = {Advances in Neural Information Processing Systems},
  year      = {2024},
  volume    = {37},
  doi       = {10.52202/079017-4322},
}

@inproceedings{andriushchenko2025adaptive,
  title     = {Jailbreaking Leading Safety-Aligned {LLM}s with Simple Adaptive Attacks},
  author    = {Andriushchenko, Maksym and Croce, Francesco and Flammarion, Nicolas},
  booktitle = {The Thirteenth International Conference on Learning Representations},
  year      = {2025},
}

@inproceedings{zhan2025adaptive,
  title     = {Adaptive Attacks Break Defenses Against Indirect Prompt Injection Attacks on {LLM} Agents},
  author    = {Qiusi Zhan and Richard Fang and Henil Shalin Panchal and Daniel Kang},
  booktitle = {Findings of the Association for Computational Linguistics: NAACL 2025},
  pages     = {7116--7132},
  year      = {2025},
  month     = apr,
  address   = {Albuquerque, New Mexico},
  publisher = {Association for Computational Linguistics},
  doi       = {10.18653/v1/2025.findings-naacl.395},
}

@inproceedings{wei2023jailbroken,
  title     = {Jailbroken: How Does {LLM} Safety Training Fail?},
  author    = {Alexander Wei and Nika Haghtalab and Jacob Steinhardt},
  booktitle = {Advances in Neural Information Processing Systems 36},
  year      = {2023},
  doi       = {10.52202/075280-3508},
}

@inproceedings{rottger2024xstest,
  title     = {{XSTest}: A Test Suite for Identifying Exaggerated Safety Behaviours in Large Language Models},
  author    = {Paul R{\"o}ttger and Hannah Kirk and Bertie Vidgen and Giuseppe Attanasio and Federico Bianchi and Dirk Hovy},
  booktitle = {Proceedings of the 2024 Conference of the North American Chapter of the Association for Computational Linguistics: Human Language Technologies (Volume 1: Long Papers)},
  pages     = {5377--5400},
  year      = {2024},
  month     = jun,
  address   = {Mexico City, Mexico},
  publisher = {Association for Computational Linguistics},
  doi       = {10.18653/v1/2024.naacl-long.301},
}

@inproceedings{casper2024blackbox,
  title     = {Black-Box Access is Insufficient for Rigorous AI Audits},
  author    = {Casper, Stephen and Ezell, Carson and Siegmann, Charlotte and Kolt, Noam and Curtis, Taylor Lynn and Bucknall, Benjamin and Haupt, Andreas and Wei, Kevin and Scheurer, Jeremy and Hobbhahn, Marius and Sharkey, Lee and Krishna, Satyapriya and Von Hagen, Marvin and Alberti, Silas and Chan, Alan and Sun, Qinyi and Gerovitch, Michael and Bau, David and Tegmark, Max and Krueger, David and Hadfield-Menell, Dylan},
  booktitle = {Proceedings of the 2024 ACM Conference on Fairness, Accountability, and Transparency},
  pages     = {2254--2272},
  year      = {2024},
  publisher = {Association for Computing Machinery},
  doi       = {10.1145/3630106.3659037},
}

@inproceedings{peppin2025reality,
  title     = {The Reality of AI and Biorisk},
  author    = {Peppin, Aidan and Reuel, Ann-Katrin and Casper, Stephen and Jones, Elliot and Strait, Andrew and Anwar, Usman and Agrawal, Anurag and Kapoor, Sayash and Koyejo, Oluwasanmi and Pellat, Marie and Bommasani, Rishi and Frosst, Nick and Hooker, Sara},
  booktitle = {Proceedings of the 2025 ACM Conference on Fairness, Accountability, and Transparency},
  year      = {2025},
  pages     = {763--771},
  publisher = {Association for Computing Machinery},
  doi       = {10.1145/3715275.3732048},
}

@article{lukosiute2025cyberrisk,
  title   = {{LLM} Cyber Evaluations Don't Capture Real-World Risk},
  author  = {Kamilė Lukošiūtė and Adam Swanda},
  journal = {arXiv preprint arXiv:2502.00072},
  year    = {2025},
  doi     = {10.48550/arXiv.2502.00072},
}

@article{page2021prisma,
  title   = {The {PRISMA} 2020 Statement: An Updated Guideline for Reporting Systematic Reviews},
  author  = {Matthew J. Page and Joanne E. McKenzie and Patrick M. Bossuyt and Isabelle Boutron and Tammy C. Hoffmann and Cynthia D. Mulrow and Larissa Shamseer and Jennifer M. Tetzlaff and Elie A. Akl and Sue E. Brennan and Roger Chou and Julie Glanville and Jeremy M. Grimshaw and Asbj{\o}rn Hr{\'o}bjartsson and Manoj M. Lalu and Tianjing Li and Elizabeth W. Loder and Evan Mayo-Wilson and Steve McDonald and Luke A. McGuinness and Lesley A. Stewart and James Thomas and Andrea C. Tricco and Vivian A. Welch and Penny Whiting and David Moher},
  journal = {BMJ},
  volume  = {372},
  pages   = {n71},
  year    = {2021},
  doi     = {10.1136/bmj.n71},
}

@inproceedings{DBLP:conf/emnlp/PelrineITRGCGR23,
  author       = {Kellin Pelrine and
                  Anne Imouza and
                  Camille Thibault and
                  Meilina Reksoprodjo and
                  Caleb Gupta and
                  Joel Christoph and
                  Jean{-}Fran{\c{c}}ois Godbout and
                  Reihaneh Rabbany},
  editor       = {Houda Bouamor and
                  Juan Pino and
                  Kalika Bali},
  title        = {Towards Reliable Misinformation Mitigation: Generalization, Uncertainty,
                  and {GPT-4}},
  booktitle    = {Proceedings of the 2023 Conference on Empirical Methods in Natural
                  Language Processing, {EMNLP} 2023, Singapore, December 6-10, 2023},
  pages        = {6399--6429},
  publisher    = {Association for Computational Linguistics},
  year         = {2023},
  doi          = {10.18653/V1/2023.EMNLP-MAIN.395},
  bibsource    = {dblp computer science bibliography, https://dblp.org}
}

@inproceedings{DBLP:conf/emnlp/ChengSYHLTXL25,
  author       = {Jiahao Cheng and
                  Tiancheng Su and
                  Jia Yuan and
                  Guoxiu He and
                  Jiawei Liu and
                  Xinqi Tao and
                  Jingwen Xie and
                  Huaxia Li},
  editor       = {Christos Christodoulopoulos and
                  Tanmoy Chakraborty and
                  Carolyn Rose and
                  Violet Peng},
  title        = {Chain-of-Thought Prompting Obscures Hallucination Cues in Large Language
                  Models: An Empirical Evaluation},
  booktitle    = {Findings of the Association for Computational Linguistics: {EMNLP}
                  2025, Suzhou, China, November 4-9, 2025},
  pages        = {1272--1305},
  publisher    = {Association for Computational Linguistics},
  year         = {2025},
  doi          = {10.18653/V1/2025.FINDINGS-EMNLP.67},
  bibsource    = {dblp computer science bibliography, https://dblp.org}
}

@inproceedings{xing-etal-2026-llms,
    title = "Are {LLM}s Reliable Rankers? Rank Manipulation via Two-Stage Token Optimization",
    author = "Xing, Tiancheng  and
      Li, Jerry  and
      Du, Yixuan  and
      Hu, Xiyang",
    editor = "Liakata, Maria  and
      Moreira, Viviane P.  and
      Zhang, Jiajun  and
      Jurgens, David",
    booktitle = "Proceedings of the 64th Annual Meeting of the {A}ssociation for {C}omputational {L}inguistics (Volume 1: Long Papers)",
    month = jul,
    year = "2026",
    address = "San Diego, California, United States",
    publisher = "Association for Computational Linguistics",
    doi = "10.18653/v1/2026.acl-long.413",
    pages = "9120--9132",
    ISBN = "979-8-89176-390-6"
}

@inproceedings{DBLP:conf/emnlp/00010ZL24,
  author       = {Meiqi Chen and
                  Yixin Cao and
                  Yan Zhang and
                  Chaochao Lu},
  editor       = {Yaser Al{-}Onaizan and
                  Mohit Bansal and
                  Yun{-}Nung Chen},
  title        = {Quantifying and Mitigating Unimodal Biases in Multimodal Large Language
                  Models: {A} Causal Perspective},
  booktitle    = {Findings of the Association for Computational Linguistics: {EMNLP}
                  2024, Miami, Florida, USA, November 12-16, 2024},
  series       = {Findings of {ACL}},
  volume       = {{EMNLP} 2024},
  pages        = {16449--16469},
  publisher    = {Association for Computational Linguistics},
  year         = {2024},
  doi          = {10.18653/V1/2024.FINDINGS-EMNLP.960},
  bibsource    = {dblp computer science bibliography, https://dblp.org}
}

@inproceedings{DBLP:conf/aaai/JiangLLM25,
  author       = {Peihai Jiang and
                  Xixiang Lyu and
                  Yige Li and
                  Jing Ma},
  editor       = {Toby Walsh and
                  Julie Shah and
                  Zico Kolter},
  title        = {Backdoor Token Unlearning: Exposing and Defending Backdoors in Pretrained
                  Language Models},
  booktitle    = {Thirty-Ninth {AAAI} Conference on Artificial Intelligence, Thirty-Seventh
                  Conference on Innovative Applications of Artificial Intelligence,
                  Fifteenth Symposium on Educational Advances in Artificial Intelligence,
                  {AAAI} 2025, Philadelphia, PA, USA, February 25 - March 4, 2025},
  pages        = {24285--24293},
  publisher    = {{AAAI} Press},
  year         = {2025},
  doi          = {10.1609/AAAI.V39I23.34605},
  bibsource    = {dblp computer science bibliography, https://dblp.org}
}

@inproceedings{DBLP:conf/emnlp/LiSZFSLXYTJGZH25,
  author       = {Shuo Li and
                  Jiajun Sun and
                  Guodong Zheng and
                  Xiaoran Fan and
                  Yujiong Shen and
                  Yi Lu and
                  Zhiheng Xi and
                  Yuming Yang and
                  Wenming Tan and
                  Tao Ji and
                  Tao Gui and
                  Qi Zhang and
                  Xuanjing Huang},
  editor       = {Christos Christodoulopoulos and
                  Tanmoy Chakraborty and
                  Carolyn Rose and
                  Violet Peng},
  title        = {Mitigating Object Hallucinations in MLLMs via Multi-Frequency Perturbations},
  booktitle    = {Findings of the Association for Computational Linguistics: {EMNLP}
                  2025, Suzhou, China, November 4-9, 2025},
  pages        = {1230--1247},
  publisher    = {Association for Computational Linguistics},
  year         = {2025},
  doi          = {10.18653/V1/2025.FINDINGS-EMNLP.64},
  bibsource    = {dblp computer science bibliography, https://dblp.org}
}

@inproceedings{DBLP:conf/acl/MoLWJAZZC26,
  author       = {Wenjie Jacky Mo and
                  Qin Liu and
                  Xiaofei Wen and
                  Dongwon Jung and
                  Hadi Askari and
                  Wenxuan Zhou and
                  Zhe Zhao and
                  Muhao Chen},
  editor       = {Maria Liakata and
                  Viviane P. Moreira and
                  Jiajun Zhang and
                  David Jurgens},
  title        = {RedCoder: Automated Multi-Turn Red Teaming for Code LLMs},
  booktitle    = {Proceedings of the 64th Annual Meeting of the Association for Computational
                  Linguistics (Volume 1: Long Papers), {ACL} 2026, San Diego, California,
                  United States, July 2-7, 2026},
  pages        = {33140--33155},
  publisher    = {Association for Computational Linguistics},
  year         = {2026},
  doi          = {10.18653/V1/2026.ACL-LONG.1531},
  bibsource    = {dblp computer science bibliography, https://dblp.org}
}

@article{DBLP:journals/corr/abs-2307-01458,
  author       = {Tong Xiang and
                  Liangzhi Li and
                  Wangyue Li and
                  Mingbai Bai and
                  Lu Wei and
                  Bowen Wang and
                  Noa Garcia},
  title        = {{CARE-MI:} Chinese Benchmark for Misinformation Evaluation in Maternity
                  and Infant Care},
  journal      = {CoRR},
  volume       = {abs/2307.01458},
  year         = {2023},
  doi          = {10.48550/ARXIV.2307.01458},
  eprinttype   = {arXiv},
  eprint       = {2307.01458},
  bibsource    = {dblp computer science bibliography, https://dblp.org}
}

@inproceedings{DBLP:conf/acl/WangWLN26,
  author       = {Jeffrey George Wang and
                  Jason Wang and
                  Marvin Li and
                  Seth Neel},
  editor       = {Maria Liakata and
                  Viviane P. Moreira and
                  Jiajun Zhang and
                  David Jurgens},
  title        = {CheckMIABench: Firm Foundations For Membership Inference Attacks on
                  Language Models},
  booktitle    = {Proceedings of the 64th Annual Meeting of the Association for Computational
                  Linguistics (Volume 2: Short Papers), {ACL} 2026, San Diego, California,
                  United States, July 2-7, 2026},
  pages        = {364--370},
  publisher    = {Association for Computational Linguistics},
  year         = {2026},
  doi          = {10.18653/V1/2026.ACL-SHORT.30},
  bibsource    = {dblp computer science bibliography, https://dblp.org}
}

@inproceedings{DBLP:conf/uss/Liu0LDCYYSLZ0025,
  author       = {Fengyu Liu and
                  Yuan Zhang and
                  Jiaqi Luo and
                  Jiarun Dai and
                  Tian Chen and
                  Letian Yuan and
                  Zhengmin Yu and
                  Youkun Shi and
                  Ke Li and
                  Chengyuan Zhou and
                  Hao Chen and
                  Min Yang},
  editor       = {Lujo Bauer and
                  Giancarlo Pellegrino},
  title        = {Make Agent Defeat Agent: Automatic Detection of Taint-Style Vulnerabilities
                  in LLM-based Agents},
  booktitle    = {34th {USENIX} Security Symposium, {USENIX} Security 2025, Seattle,
                  WA, USA, August 13-15, 2025},
  pages        = {3767--3786},
  publisher    = {{USENIX} Association},
  year         = {2025},
  bibsource    = {dblp computer science bibliography, https://dblp.org}
}

@inproceedings{DBLP:conf/acl/ParkCKYK26,
  author       = {Leo Hyun Park and
                  Juwon Cho and
                  Gyuhwan Kim and
                  YoonDong Yeo and
                  Taekyoung Kwon},
  editor       = {Maria Liakata and
                  Viviane P. Moreira and
                  Jiajun Zhang and
                  David Jurgens},
  title        = {Chimera: Compositional Jailbreak Attacks on LLMs via Judgment-Driven
                  Search over Heterogeneous Strategies},
  booktitle    = {Findings of the Association for Computational Linguistics, {ACL} 2026,
                  San Diego, California, United States, July 2-7, 2026},
  pages        = {33330--33355},
  publisher    = {Association for Computational Linguistics},
  year         = {2026},
  doi          = {10.18653/V1/2026.FINDINGS-ACL.1667},
  bibsource    = {dblp computer science bibliography, https://dblp.org}
}

@inproceedings{DBLP:conf/iclr/Wang00TXDC25,
  author       = {Chenxi Wang and
                  Xiang Chen and
                  Ningyu Zhang and
                  Bozhong Tian and
                  Haoming Xu and
                  Shumin Deng and
                  Huajun Chen},
  title        = {{MLLM} can see? Dynamic Correction Decoding for Hallucination Mitigation},
  booktitle    = {The Thirteenth International Conference on Learning Representations,
                  {ICLR} 2025, Singapore, April 24-28, 2025},
  publisher    = {OpenReview.net},
  year         = {2025},
  bibsource    = {dblp computer science bibliography, https://dblp.org}
}

@inproceedings{DBLP:conf/nips/HuSWWT25,
  author       = {Zixuan Hu and
                  Li Shen and
                  Zhenyi Wang and
                  Yongxian Wei and
                  Dacheng Tao},
  editor       = {Danielle Belgrave and
                  Cheng Zhang and
                  Laura N. Montoya and
                  Hsuan{-}Tien Lin and
                  Razvan Pascanu and
                  Piotr Koniusz and
                  Marzyeh Ghassemi and
                  Nancy Chen and
                  Iv{\'{a}}n Vladimir Meza Ru{\'{\i}}z and
                  Arturo Loaiza{-}Bonilla},
  title        = {Adaptive Defense against Harmful Fine-Tuning for Large Language Models
                  via Bayesian Data Scheduler},
  booktitle    = {Advances in Neural Information Processing Systems 38: Annual Conference
                  on Neural Information Processing Systems 2025, NeurIPS 2025, San Diego,
                  CA, USA, December 2-7, 2025 / Mexico City, Mexico, November 30 - December
                  5, 2025},
  year         = {2025},
  bibsource    = {dblp computer science bibliography, https://dblp.org}
}

@inproceedings{DBLP:conf/uss/WangWJL00L0LR25,
  author       = {Xunguang Wang and
                  Daoyuan Wu and
                  Zhenlan Ji and
                  Zongjie Li and
                  Pingchuan Ma and
                  Shuai Wang and
                  Yingjiu Li and
                  Yang Liu and
                  Ning Liu and
                  Juergen Rahmel},
  editor       = {Lujo Bauer and
                  Giancarlo Pellegrino},
  title        = {SelfDefend: LLMs Can Defend Themselves against Jailbreaking in a Practical
                  Manner},
  booktitle    = {34th {USENIX} Security Symposium, {USENIX} Security 2025, Seattle,
                  WA, USA, August 13-15, 2025},
  pages        = {2441--2460},
  publisher    = {{USENIX} Association},
  year         = {2025},
  bibsource    = {dblp computer science bibliography, https://dblp.org}
}

@inproceedings{DBLP:conf/ndss/AbloveCQE26,
  author       = {Anna Ablove and
                  Shreyas Chandrashekaran and
                  Xiao Qiang and
                  Roya Ensafi},
  title        = {Characterizing the Implementation of Censorship Policies in Chinese
                  {LLM} Services},
  booktitle    = {33rd Annual Network and Distributed System Security Symposium, {NDSS}
                  2026, San Diego, California, USA, February 23-27, 2026},
  publisher    = {The Internet Society},
  year         = {2026},
  bibsource    = {dblp computer science bibliography, https://dblp.org}
}

@inproceedings{DBLP:conf/aaai/WangXJYL26,
  author       = {Jiatai Wang and
                  Zhiwei Xu and
                  Di Jin and
                  Xuewen Yang and
                  Tao Li},
  editor       = {Sven Koenig and
                  Chad Jenkins and
                  Matthew E. Taylor},
  title        = {Accommodate Knowledge Conflicts in Retrieval-augmented LLMs: Towards
                  Robust Response Generation in the Wild},
  booktitle    = {Fortieth {AAAI} Conference on Artificial Intelligence, Thirty-Eighth
                  Conference on Innovative Applications of Artificial Intelligence,
                  Sixteenth Symposium on Educational Advances in Artificial Intelligence,
                  {AAAI} 2026, Singapore, January 20-27, 2026},
  pages        = {33530--33538},
  publisher    = {{AAAI} Press},
  year         = {2026},
  doi          = {10.1609/AAAI.V40I39.40641},
  bibsource    = {dblp computer science bibliography, https://dblp.org}
}

@inproceedings{DBLP:conf/aaai/ChangSSW26,
  author       = {Trenton Chang and
                  Tobias Schnabel and
                  Adith Swaminathan and
                  Jenna Wiens},
  editor       = {Sven Koenig and
                  Chad Jenkins and
                  Matthew E. Taylor},
  title        = {A Course Correction in Steerability Evaluation: Revealing Miscalibration
                  and Side Effects in LLMs},
  booktitle    = {Fortieth {AAAI} Conference on Artificial Intelligence, Thirty-Eighth
                  Conference on Innovative Applications of Artificial Intelligence,
                  Sixteenth Symposium on Educational Advances in Artificial Intelligence,
                  {AAAI} 2026, Singapore, January 20-27, 2026},
  pages        = {37259--37267},
  publisher    = {{AAAI} Press},
  year         = {2026},
  doi          = {10.1609/AAAI.V40I44.41057},
  bibsource    = {dblp computer science bibliography, https://dblp.org}
}

@inproceedings{DBLP:conf/naacl/ZhangXWMC24,
  author       = {Tianrong Zhang and
                  Zhaohan Xi and
                  Ting Wang and
                  Prasenjit Mitra and
                  Jinghui Chen},
  editor       = {Kevin Duh and
                  Helena G{\'{o}}mez{-}Adorno and
                  Steven Bethard},
  title        = {PromptFix: Few-shot Backdoor Removal via Adversarial Prompt Tuning},
  booktitle    = {Proceedings of the 2024 Conference of the North American Chapter of
                  the Association for Computational Linguistics: Human Language Technologies
                  (Volume 1: Long Papers), {NAACL} 2024, Mexico City, Mexico, June 16-21,
                  2024},
  pages        = {3212--3225},
  publisher    = {Association for Computational Linguistics},
  year         = {2024},
  doi          = {10.18653/V1/2024.NAACL-LONG.177},
  bibsource    = {dblp computer science bibliography, https://dblp.org}
}

@inproceedings{DBLP:conf/acl/LiDWW23,
  author       = {Yingji Li and
                  Mengnan Du and
                  Xin Wang and
                  Ying Wang},
  editor       = {Anna Rogers and
                  Jordan L. Boyd{-}Graber and
                  Naoaki Okazaki},
  title        = {Prompt Tuning Pushes Farther, Contrastive Learning Pulls Closer: {A}
                  Two-Stage Approach to Mitigate Social Biases},
  booktitle    = {Proceedings of the 61st Annual Meeting of the Association for Computational
                  Linguistics (Volume 1: Long Papers), {ACL} 2023, Toronto, Canada,
                  July 9-14, 2023},
  pages        = {14254--14267},
  publisher    = {Association for Computational Linguistics},
  year         = {2023},
  doi          = {10.18653/V1/2023.ACL-LONG.797},
  bibsource    = {dblp computer science bibliography, https://dblp.org}
}

@inproceedings{DBLP:conf/nips/BasaniZ25,
  author       = {Advik Raj Basani and
                  Xiao Zhang},
  editor       = {Danielle Belgrave and
                  Cheng Zhang and
                  Laura N. Montoya and
                  Hsuan{-}Tien Lin and
                  Razvan Pascanu and
                  Piotr Koniusz and
                  Marzyeh Ghassemi and
                  Nancy Chen and
                  Iv{\'{a}}n Vladimir Meza Ru{\'{\i}}z and
                  Arturo Loaiza{-}Bonilla},
  title        = {{GASP:} Efficient Black-Box Generation of Adversarial Suffixes for
                  Jailbreaking LLMs},
  booktitle    = {Advances in Neural Information Processing Systems 38: Annual Conference
                  on Neural Information Processing Systems 2025, NeurIPS 2025, San Diego,
                  CA, USA, December 2-7, 2025 / Mexico City, Mexico, November 30 - December
                  5, 2025},
  year         = {2025},
  bibsource    = {dblp computer science bibliography, https://dblp.org}
}

@inproceedings{niess-kern-2025-ensemble,
    title = "Ensemble Watermarks for Large Language Models",
    author = "Niess, Georg  and
      Kern, Roman",
    editor = "Che, Wanxiang  and
      Nabende, Joyce  and
      Shutova, Ekaterina  and
      Pilehvar, Mohammad Taher",
    booktitle = "Proceedings of the 63rd Annual Meeting of the Association for Computational Linguistics (Volume 1: Long Papers)",
    month = jul,
    year = "2025",
    address = "Vienna, Austria",
    publisher = "Association for Computational Linguistics",
    doi = "10.18653/v1/2025.acl-long.145",
    pages = "2903--2916",
    ISBN = "979-8-89176-251-0"
}

@inproceedings{DBLP:conf/ccs/CohenBN25,
  author       = {Stav Cohen and
                  Ron Bitton and
                  Ben Nassi},
  editor       = {Chun{-}Ying Huang and
                  Jyh{-}Cheng Chen and
                  Shiuh{-}Pyng Shieh and
                  David Lie and
                  V{\'{e}}ronique Cortier},
  title        = {Here Comes the {AI} Worm: Preventing the Propagation of Adversarial
                  Self-Replicating Prompts Within GenAI Ecosystems},
  booktitle    = {Proceedings of the 2025 {ACM} {SIGSAC} Conference on Computer and
                  Communications Security, {CCS} 2025, Taipei, Taiwan, October 13-17,
                  2025},
  pages        = {3975--3989},
  publisher    = {{ACM}},
  year         = {2025},
  doi          = {10.1145/3719027.3765196},
  bibsource    = {dblp computer science bibliography, https://dblp.org}
}

@inproceedings{DBLP:conf/ndss/ZhuangG0JXYY0H25,
  author       = {Yong Zhuang and
                  Keyan Guo and
                  Juan Wang and
                  Yiheng Jing and
                  Xiaoyang Xu and
                  Wenzhe Yi and
                  Mengda Yang and
                  Bo Zhao and
                  Hongxin Hu},
  title        = {I know what you MEME! Understanding and Detecting Harmful Memes with
                  Multimodal Large Language Models},
  booktitle    = {32nd Annual Network and Distributed System Security Symposium, {NDSS}
                  2025, San Diego, California, USA, February 24-28, 2025},
  publisher    = {The Internet Society},
  year         = {2025},
  bibsource    = {dblp computer science bibliography, https://dblp.org}
}

@inproceedings{DBLP:conf/acl/LuoDL000X25,
  author       = {Weidi Luo and
                  Shenghong Dai and
                  Xiaogeng Liu and
                  Suman Banerjee and
                  Huan Sun and
                  Muhao Chen and
                  Chaowei Xiao},
  editor       = {Wanxiang Che and
                  Joyce Nabende and
                  Ekaterina Shutova and
                  Mohammad Taher Pilehvar},
  title        = {AGrail: {A} Lifelong Agent Guardrail with Effective and Adaptive Safety
                  Detection},
  booktitle    = {Proceedings of the 63rd Annual Meeting of the Association for Computational
                  Linguistics (Volume 1: Long Papers), {ACL} 2025, Vienna, Austria,
                  July 27 - August 1, 2025},
  pages        = {8104--8139},
  publisher    = {Association for Computational Linguistics},
  year         = {2025},
  doi          = {10.18653/V1/2025.ACL-LONG.399},
  bibsource    = {dblp computer science bibliography, https://dblp.org}
}

@inproceedings{DBLP:conf/emnlp/SonKKHKJYP25,
  author       = {Yejin Son and
                  Minseo Kim and
                  Sungwoong Kim and
                  Seungju Han and
                  Jian Kim and
                  Dongju Jang and
                  Youngjae Yu and
                  Chan Young Park},
  editor       = {Christos Christodoulopoulos and
                  Tanmoy Chakraborty and
                  Carolyn Rose and
                  Violet Peng},
  title        = {Subtle Risks, Critical Failures: {A} Framework for Diagnosing Physical
                  Safety of LLMs for Embodied Decision Making},
  booktitle    = {Proceedings of the 2025 Conference on Empirical Methods in Natural
                  Language Processing, {EMNLP} 2025, Suzhou, China, November 4-9, 2025},
  pages        = {25692--25733},
  publisher    = {Association for Computational Linguistics},
  year         = {2025},
  doi          = {10.18653/V1/2025.EMNLP-MAIN.1305},
  bibsource    = {dblp computer science bibliography, https://dblp.org}
}

@inproceedings{DBLP:conf/naacl/MaheshwaryYNMM25,
  author       = {Rishabh Maheshwary and
                  Vikas Yadav and
                  Hoang Nguyen and
                  Khyati Mahajan and
                  Sathwik Tejaswi Madhusudhan},
  editor       = {Luis Chiruzzo and
                  Alan Ritter and
                  Lu Wang},
  title        = {M2Lingual: Enhancing Multilingual, Multi-Turn Instruction Alignment
                  in Large Language Models},
  booktitle    = {Proceedings of the 2025 Conference of the Nations of the Americas
                  Chapter of the Association for Computational Linguistics: Human Language
                  Technologies, {NAACL} 2025 - Volume 1: Long Papers, Albuquerque, New
                  Mexico, USA, April 29 - May 4, 2025},
  pages        = {9676--9713},
  publisher    = {Association for Computational Linguistics},
  year         = {2025},
  doi          = {10.18653/V1/2025.NAACL-LONG.489},
  bibsource    = {dblp computer science bibliography, https://dblp.org}
}

@inproceedings{DBLP:conf/emnlp/Wang0GTKLBE024,
  author       = {Ze Wang and
                  Zekun Wu and
                  Xin Guan and
                  Michael Thaler and
                  Adriano S. Koshiyama and
                  Skylar Lu and
                  Sachin Beepath and
                  Ediz Ertekin Jr. and
                  Mar{\'{\i}}a P{\'{e}}rez{-}Ortiz},
  editor       = {Yaser Al{-}Onaizan and
                  Mohit Bansal and
                  Yun{-}Nung Chen},
  title        = {JobFair: {A} Framework for Benchmarking Gender Hiring Bias in Large
                  Language Models},
  booktitle    = {Findings of the Association for Computational Linguistics: {EMNLP}
                  2024, Miami, Florida, USA, November 12-16, 2024},
  series       = {Findings of {ACL}},
  volume       = {{EMNLP} 2024},
  pages        = {3227--3246},
  publisher    = {Association for Computational Linguistics},
  year         = {2024},
  doi          = {10.18653/V1/2024.FINDINGS-EMNLP.184},
  bibsource    = {dblp computer science bibliography, https://dblp.org}
}

@inproceedings{DBLP:conf/acl/0001LSL25,
  author       = {Sangyeop Kim and
                  Yohan Lee and
                  Yongwoo Song and
                  Kimin Lee},
  editor       = {Wanxiang Che and
                  Joyce Nabende and
                  Ekaterina Shutova and
                  Mohammad Taher Pilehvar},
  title        = {What Really Matters in Many-Shot Attacks? An Empirical Study of Long-Context
                  Vulnerabilities in LLMs},
  booktitle    = {Proceedings of the 63rd Annual Meeting of the Association for Computational
                  Linguistics (Volume 1: Long Papers), {ACL} 2025, Vienna, Austria,
                  July 27 - August 1, 2025},
  pages        = {2043--2063},
  publisher    = {Association for Computational Linguistics},
  year         = {2025},
  doi          = {10.18653/V1/2025.ACL-LONG.101},
  bibsource    = {dblp computer science bibliography, https://dblp.org}
}

@inproceedings{DBLP:conf/ccs/ChenQYZFDX24,
  author       = {Guanzhong Chen and
                  Zhenghan Qin and
                  Mingxin Yang and
                  Yajie Zhou and
                  Tao Fan and
                  Tianyu Du and
                  Zenglin Xu},
  editor       = {Bo Luo and
                  Xiaojing Liao and
                  Jun Xu and
                  Engin Kirda and
                  David Lie},
  title        = {Unveiling the Vulnerability of Private Fine-Tuning in Split-Based
                  Frameworks for Large Language Models: {A} Bidirectionally Enhanced
                  Attack},
  booktitle    = {Proceedings of the 2024 on {ACM} {SIGSAC} Conference on Computer and
                  Communications Security, {CCS} 2024, Salt Lake City, UT, USA, October
                  14-18, 2024},
  pages        = {2904--2918},
  publisher    = {{ACM}},
  year         = {2024},
  doi          = {10.1145/3658644.3690295},
  bibsource    = {dblp computer science bibliography, https://dblp.org}
}

@inproceedings{DBLP:conf/acl/LuLZZWLZLYZ25,
  author       = {Yifan Lu and
                  Jing Li and
                  Yigeng Zhou and
                  Yihui Zhang and
                  Wenya Wang and
                  Xiucheng Li and
                  Meishan Zhang and
                  Fangming Liu and
                  Jun Yu and
                  Min Zhang},
  editor       = {Wanxiang Che and
                  Joyce Nabende and
                  Ekaterina Shutova and
                  Mohammad Taher Pilehvar},
  title        = {Adaptive Detoxification: Safeguarding General Capabilities of LLMs
                  through Toxicity-Aware Knowledge Editing},
  booktitle    = {Findings of the Association for Computational Linguistics, {ACL} 2025,
                  Vienna, Austria, July 27 - August 1, 2025},
  series       = {Findings of {ACL}},
  volume       = {{ACL} 2025},
  pages        = {19744--19758},
  publisher    = {Association for Computational Linguistics},
  year         = {2025},
  doi          = {10.18653/V1/2025.FINDINGS-ACL.1013},
  bibsource    = {dblp computer science bibliography, https://dblp.org}
}

@inproceedings{DBLP:conf/nips/ShenHW25,
  author       = {Huanming Shen and
                  Baizhou Huang and
                  Xiaojun Wan},
  editor       = {Danielle Belgrave and
                  Cheng Zhang and
                  Laura N. Montoya and
                  Hsuan{-}Tien Lin and
                  Razvan Pascanu and
                  Piotr Koniusz and
                  Marzyeh Ghassemi and
                  Nancy Chen and
                  Iv{\'{a}}n Vladimir Meza Ru{\'{\i}}z and
                  Arturo Loaiza{-}Bonilla},
  title        = {Enhancing {LLM} Watermark Resilience Against Both Scrubbing and Spoofing
                  Attacks},
  booktitle    = {Advances in Neural Information Processing Systems 38: Annual Conference
                  on Neural Information Processing Systems 2025, NeurIPS 2025, San Diego,
                  CA, USA, December 2-7, 2025 / Mexico City, Mexico, November 30 - December
                  5, 2025},
  year         = {2025},
  bibsource    = {dblp computer science bibliography, https://dblp.org}
}

@inproceedings{DBLP:conf/ndss/ZhangLZMC26,
  author       = {Yingjie Zhang and
                  Tong Liu and
                  Zhe Zhao and
                  Guozhu Meng and
                  Kai Chen},
  title        = {Bleeding Pathways: Vanishing Discriminability in {LLM} Hidden States
                  Fuels Jailbreak Attacks},
  booktitle    = {33rd Annual Network and Distributed System Security Symposium, {NDSS}
                  2026, San Diego, California, USA, February 23-27, 2026},
  publisher    = {The Internet Society},
  year         = {2026},
  bibsource    = {dblp computer science bibliography, https://dblp.org}
}

@inproceedings{DBLP:conf/icml/PengK024,
  author       = {Duo Peng and
                  Qiuhong Ke and
                  Jun Liu},
  editor       = {Ruslan Salakhutdinov and
                  Zico Kolter and
                  Katherine A. Heller and
                  Adrian Weller and
                  Nuria Oliver and
                  Jonathan Scarlett and
                  Felix Berkenkamp},
  title        = {{UPAM:} Unified Prompt Attack in Text-to-Image Generation Models Against
                  Both Textual Filters and Visual Checkers},
  booktitle    = {Forty-first International Conference on Machine Learning, {ICML} 2024,
                  Vienna, Austria, July 21-27, 2024},
  series       = {Proceedings of Machine Learning Research},
  volume       = {235},
  pages        = {40200--40214},
  publisher    = {{PMLR} / OpenReview.net},
  year         = {2024},
  bibsource    = {dblp computer science bibliography, https://dblp.org}
}

@inproceedings{DBLP:conf/emnlp/LiuXWS24,
  author       = {Hongfu Liu and
                  Yuxi Xie and
                  Ye Wang and
                  Michael Shieh},
  editor       = {Yaser Al{-}Onaizan and
                  Mohit Bansal and
                  Yun{-}Nung Chen},
  title        = {Advancing Adversarial Suffix Transfer Learning on Aligned Large Language
                  Models},
  booktitle    = {Proceedings of the 2024 Conference on Empirical Methods in Natural
                  Language Processing, {EMNLP} 2024, Miami, FL, USA, November 12-16,
                  2024},
  pages        = {7213--7224},
  publisher    = {Association for Computational Linguistics},
  year         = {2024},
  doi          = {10.18653/V1/2024.EMNLP-MAIN.409},
  bibsource    = {dblp computer science bibliography, https://dblp.org}
}

@inproceedings{DBLP:conf/nips/Xu0L24,
  author       = {Zhao Xu and
                  Fan Liu and
                  Hao Liu},
  editor       = {Amir Globersons and
                  Lester Mackey and
                  Danielle Belgrave and
                  Angela Fan and
                  Ulrich Paquet and
                  Jakub M. Tomczak and
                  Cheng Zhang},
  title        = {Bag of Tricks: Benchmarking of Jailbreak Attacks on LLMs},
  booktitle    = {Advances in Neural Information Processing Systems 37: Annual Conference
                  on Neural Information Processing Systems 2024, NeurIPS 2024, Vancouver,
                  BC, Canada, December 10 - 15, 2024},
  year         = {2024},
  bibsource    = {dblp computer science bibliography, https://dblp.org}
}

@inproceedings{DBLP:conf/nips/WangHSYLLTHT25,
  author       = {Yibo Wang and
                  Tiansheng Huang and
                  Li Shen and
                  Huanjin Yao and
                  Haotian Luo and
                  Rui Liu and
                  Naiqiang Tan and
                  Jiaxing Huang and
                  Dacheng Tao},
  editor       = {Danielle Belgrave and
                  Cheng Zhang and
                  Laura N. Montoya and
                  Hsuan{-}Tien Lin and
                  Razvan Pascanu and
                  Piotr Koniusz and
                  Marzyeh Ghassemi and
                  Nancy Chen and
                  Iv{\'{a}}n Vladimir Meza Ru{\'{\i}}z and
                  Arturo Loaiza{-}Bonilla},
  title        = {Panacea: Mitigating Harmful Fine-tuning for Large Language Models
                  via Post-fine-tuning Perturbation},
  booktitle    = {Advances in Neural Information Processing Systems 38: Annual Conference
                  on Neural Information Processing Systems 2025, NeurIPS 2025, San Diego,
                  CA, USA, December 2-7, 2025 / Mexico City, Mexico, November 30 - December
                  5, 2025},
  year         = {2025},
  bibsource    = {dblp computer science bibliography, https://dblp.org}
}

@inproceedings{DBLP:conf/acl/WuWCBZ26,
  author       = {Yixin Wu and
                  Rui Wen and
                  Chi Cui and
                  Michael Backes and
                  Yang Zhang},
  editor       = {Maria Liakata and
                  Viviane P. Moreira and
                  Jiajun Zhang and
                  David Jurgens},
  title        = {InferPilot: Autonomous Inference Attacks Against {ML} Services With
                  LLM-Based Agents},
  booktitle    = {Findings of the Association for Computational Linguistics, {ACL} 2026,
                  San Diego, California, United States, July 2-7, 2026},
  pages        = {11781--11801},
  publisher    = {Association for Computational Linguistics},
  year         = {2026},
  doi          = {10.18653/V1/2026.FINDINGS-ACL.572},
  bibsource    = {dblp computer science bibliography, https://dblp.org}
}

@inproceedings{DBLP:conf/emnlp/ChenHC25,
  author       = {Chien Hung Chen and
                  Hen{-}Hsen Huang and
                  Hsin{-}Hsi Chen},
  editor       = {Christos Christodoulopoulos and
                  Tanmoy Chakraborty and
                  Carolyn Rose and
                  Violet Peng},
  title        = {Self-Augmented Preference Alignment for Sycophancy Reduction in LLMs},
  booktitle    = {Proceedings of the 2025 Conference on Empirical Methods in Natural
                  Language Processing, {EMNLP} 2025, Suzhou, China, November 4-9, 2025},
  pages        = {12379--12391},
  publisher    = {Association for Computational Linguistics},
  year         = {2025},
  doi          = {10.18653/V1/2025.EMNLP-MAIN.625},
  bibsource    = {dblp computer science bibliography, https://dblp.org}
}

@inproceedings{DBLP:conf/acl/ChakrabortyPOD26,
  author       = {Neeloy Chakraborty and
                  John Pohovey and
                  Melkior Ornik and
                  Katherine Rose Driggs{-}Campbell},
  editor       = {Maria Liakata and
                  Viviane P. Moreira and
                  Jiajun Zhang and
                  David Jurgens},
  title        = {Characterizing the Robustness of Black-Box {LLM} Planners Under Perturbed
                  Observations with Adaptive Stress Testing},
  booktitle    = {Findings of the Association for Computational Linguistics, {ACL} 2026,
                  San Diego, California, United States, July 2-7, 2026},
  pages        = {39445--39475},
  publisher    = {Association for Computational Linguistics},
  year         = {2026},
  doi          = {10.18653/V1/2026.FINDINGS-ACL.1966},
  bibsource    = {dblp computer science bibliography, https://dblp.org}
}

@inproceedings{filandrianos-etal-2025-bias,
    title = "Bias Beware: The Impact of Cognitive Biases on {LLM}-Driven Product Recommendations",
    author = "Filandrianos, Giorgos  and
      Dimitriou, Angeliki  and
      Lymperaiou, Maria  and
      Thomas, Konstantinos  and
      Stamou, Giorgos",
    editor = "Christodoulopoulos, Christos  and
      Chakraborty, Tanmoy  and
      Rose, Carolyn  and
      Peng, Violet",
    booktitle = "Proceedings of the 2025 Conference on Empirical Methods in Natural Language Processing",
    month = nov,
    year = "2025",
    address = "Suzhou, China",
    publisher = "Association for Computational Linguistics",
    doi = "10.18653/v1/2025.emnlp-main.1140",
    pages = "22397--22426",
    ISBN = "979-8-89176-332-6"
}

@inproceedings{DBLP:conf/acl/Chen0LLYCH025,
  author       = {Zixin Chen and
                  Hongzhan Lin and
                  Kaixin Li and
                  Ziyang Luo and
                  Zhen Ye and
                  Guang Chen and
                  Zhiyong Huang and
                  Jing Ma},
  editor       = {Wanxiang Che and
                  Joyce Nabende and
                  Ekaterina Shutova and
                  Mohammad Taher Pilehvar},
  title        = {AdamMeme: Adaptively Probe the Reasoning Capacity of Multimodal Large
                  Language Models on Harmfulness},
  booktitle    = {Proceedings of the 63rd Annual Meeting of the Association for Computational
                  Linguistics (Volume 1: Long Papers), {ACL} 2025, Vienna, Austria,
                  July 27 - August 1, 2025},
  pages        = {4234--4253},
  publisher    = {Association for Computational Linguistics},
  year         = {2025},
  doi          = {10.18653/V1/2025.ACL-LONG.213},
  bibsource    = {dblp computer science bibliography, https://dblp.org}
}

@article{DBLP:journals/corr/abs-2506-10805,
  author       = {Alex McKenzie and
                  Urja Pawar and
                  Phil Blandfort and
                  William Bankes and
                  David Krueger and
                  Ekdeep Singh Lubana and
                  Dmitrii Krasheninnikov},
  title        = {Detecting High-Stakes Interactions with Activation Probes},
  journal      = {CoRR},
  volume       = {abs/2506.10805},
  year         = {2025},
  doi          = {10.48550/ARXIV.2506.10805},
  eprinttype   = {arXiv},
  eprint       = {2506.10805},
  bibsource    = {dblp computer science bibliography, https://dblp.org}
}

@inproceedings{jeoung-etal-2023-stereomap,
    title = "{S}tereo{M}ap: Quantifying the Awareness of Human-like Stereotypes in Large Language Models",
    author = "Jeoung, Sullam  and
      Ge, Yubin  and
      Diesner, Jana",
    editor = "Bouamor, Houda  and
      Pino, Juan  and
      Bali, Kalika",
    booktitle = "Proceedings of the 2023 Conference on Empirical Methods in Natural Language Processing",
    month = dec,
    year = "2023",
    address = "Singapore",
    publisher = "Association for Computational Linguistics",
    doi = "10.18653/v1/2023.emnlp-main.752",
    pages = "12236--12256"
}

@inproceedings{DBLP:conf/emnlp/ChandlerSS24,
  author       = {Alex Chandler and
                  Devesh Surve and
                  Hui Su},
  editor       = {Yaser Al{-}Onaizan and
                  Mohit Bansal and
                  Yun{-}Nung Chen},
  title        = {Detecting Errors through Ensembling Prompts {(DEEP):} An End-to-End
                  {LLM} Framework for Detecting Factual Errors},
  booktitle    = {Proceedings of the 2024 Conference on Empirical Methods in Natural
                  Language Processing, {EMNLP} 2024, Miami, FL, USA, November 12-16,
                  2024},
  pages        = {13120--13133},
  publisher    = {Association for Computational Linguistics},
  year         = {2024},
  doi          = {10.18653/V1/2024.EMNLP-MAIN.728},
  bibsource    = {dblp computer science bibliography, https://dblp.org}
}

@inproceedings{DBLP:conf/iclr/KoCDMDKCPD25,
  author       = {Ching{-}Yun Ko and
                  Pin{-}Yu Chen and
                  Payel Das and
                  Youssef Mroueh and
                  Soham Dan and
                  Georgios Kollias and
                  Subhajit Chaudhury and
                  Tejaswini Pedapati and
                  Luca Daniel},
  title        = {Large Language Models can Become Strong Self-Detoxifiers},
  booktitle    = {The Thirteenth International Conference on Learning Representations,
                  {ICLR} 2025, Singapore, April 24-28, 2025},
  publisher    = {OpenReview.net},
  year         = {2025},
  bibsource    = {dblp computer science bibliography, https://dblp.org}
}

@inproceedings{DBLP:conf/acl/LuongLNN24,
  author       = {Tinh Son Luong and
                  Thanh{-}Thien Le and
                  Linh Ngo Van and
                  Thien Huu Nguyen},
  editor       = {Lun{-}Wei Ku and
                  Andre Martins and
                  Vivek Srikumar},
  title        = {Realistic Evaluation of Toxicity in Large Language Models},
  booktitle    = {Findings of the Association for Computational Linguistics, {ACL} 2024,
                  Bangkok, Thailand and virtual meeting, August 11-16, 2024},
  series       = {Findings of {ACL}},
  volume       = {{ACL} 2024},
  pages        = {1038--1047},
  publisher    = {Association for Computational Linguistics},
  year         = {2024},
  doi          = {10.18653/V1/2024.FINDINGS-ACL.61},
  bibsource    = {dblp computer science bibliography, https://dblp.org}
}

@article{DBLP:journals/corr/abs-2506-07001,
  author       = {Yize Cheng and
                  Vinu Sankar Sadasivan and
                  Mehrdad Saberi and
                  Shoumik Saha and
                  Soheil Feizi},
  title        = {Adversarial Paraphrasing: {A} Universal Attack for Humanizing AI-Generated
                  Text},
  journal      = {CoRR},
  volume       = {abs/2506.07001},
  year         = {2025},
  doi          = {10.48550/ARXIV.2506.07001},
  eprinttype   = {arXiv},
  eprint       = {2506.07001},
  bibsource    = {dblp computer science bibliography, https://dblp.org}
}

@inproceedings{DBLP:conf/acl/LiLZX25,
  author       = {Hao Li and
                  Xiaogeng Liu and
                  Ning Zhang and
                  Chaowei Xiao},
  editor       = {Wanxiang Che and
                  Joyce Nabende and
                  Ekaterina Shutova and
                  Mohammad Taher Pilehvar},
  title        = {PIGuard: Prompt Injection Guardrail via Mitigating Overdefense for
                  Free},
  booktitle    = {Proceedings of the 63rd Annual Meeting of the Association for Computational
                  Linguistics (Volume 1: Long Papers), {ACL} 2025, Vienna, Austria,
                  July 27 - August 1, 2025},
  pages        = {30420--30437},
  publisher    = {Association for Computational Linguistics},
  year         = {2025},
  doi          = {10.18653/V1/2025.ACL-LONG.1468},
  bibsource    = {dblp computer science bibliography, https://dblp.org}
}

@inproceedings{DBLP:conf/acl/JhaJMC0B24,
  author       = {Prince Jha and
                  Raghav Jain and
                  Konika Mandal and
                  Aman Chadha and
                  Sriparna Saha and
                  Pushpak Bhattacharyya},
  editor       = {Lun{-}Wei Ku and
                  Andre Martins and
                  Vivek Srikumar},
  title        = {MemeGuard: An {LLM} and VLM-based Framework for Advancing Content
                  Moderation via Meme Intervention},
  booktitle    = {Proceedings of the 62nd Annual Meeting of the Association for Computational
                  Linguistics (Volume 1: Long Papers), {ACL} 2024, Bangkok, Thailand,
                  August 11-16, 2024},
  pages        = {8084--8104},
  publisher    = {Association for Computational Linguistics},
  year         = {2024},
  doi          = {10.18653/V1/2024.ACL-LONG.439},
  bibsource    = {dblp computer science bibliography, https://dblp.org}
}

@article{DBLP:journals/corr/abs-2402-07510,
  author       = {Sumeet Ramesh Motwani and
                  Mikhail Baranchuk and
                  Martin Strohmeier and
                  Vijay Bolina and
                  Philip H. S. Torr and
                  Lewis Hammond and
                  Christian Schr{\"{o}}der de Witt},
  title        = {Secret Collusion among {AI} Agents: Multi-Agent Deception via Steganography},
  journal      = {CoRR},
  volume       = {abs/2402.07510},
  year         = {2024},
  doi          = {10.48550/ARXIV.2402.07510},
  eprinttype   = {arXiv},
  eprint       = {2402.07510},
  bibsource    = {dblp computer science bibliography, https://dblp.org}
}

@inproceedings{DBLP:conf/emnlp/FonsecaBS25,
  author       = {Jo{\~{a}}o Fonseca and
                  Andrew Bell and
                  Julia Stoyanovich},
  editor       = {Christos Christodoulopoulos and
                  Tanmoy Chakraborty and
                  Carolyn Rose and
                  Violet Peng},
  title        = {{SAFENUDGE:} Safeguarding Large Language Models in Real-time with
                  Tunable Safety-Performance Trade-offs},
  booktitle    = {Proceedings of the 2025 Conference on Empirical Methods in Natural
                  Language Processing, {EMNLP} 2025, Suzhou, China, November 4-9, 2025},
  pages        = {19955--19969},
  publisher    = {Association for Computational Linguistics},
  year         = {2025},
  doi          = {10.18653/V1/2025.EMNLP-MAIN.1010},
  bibsource    = {dblp computer science bibliography, https://dblp.org}
}

@article{DBLP:journals/corr/abs-2309-15817,
  author       = {Yangjun Ruan and
                  Honghua Dong and
                  Andrew Wang and
                  Silviu Pitis and
                  Yongchao Zhou and
                  Jimmy Ba and
                  Yann Dubois and
                  Chris J. Maddison and
                  Tatsunori Hashimoto},
  title        = {Identifying the Risks of {LM} Agents with an LM-Emulated Sandbox},
  journal      = {CoRR},
  volume       = {abs/2309.15817},
  year         = {2023},
  doi          = {10.48550/ARXIV.2309.15817},
  eprinttype   = {arXiv},
  eprint       = {2309.15817},
  bibsource    = {dblp computer science bibliography, https://dblp.org}
}

@inproceedings{nghiem-etal-2024-gotta,
    title = "``You Gotta be a Doctor, Lin'' : An Investigation of Name-Based Bias of Large Language Models in Employment Recommendations",
    author = "Nghiem, Huy  and
      Prindle, John  and
      Zhao, Jieyu  and
      Daum{\'e} III, Hal",
    editor = "Al-Onaizan, Yaser  and
      Bansal, Mohit  and
      Chen, Yun-Nung",
    booktitle = "Proceedings of the 2024 Conference on Empirical Methods in Natural Language Processing",
    month = nov,
    year = "2024",
    address = "Miami, Florida, USA",
    publisher = "Association for Computational Linguistics",
    doi = "10.18653/v1/2024.emnlp-main.413",
    pages = "7268--7287"
}

@article{DBLP:journals/corr/abs-2410-06172,
  author       = {Kaiwen Zhou and
                  Chengzhi Liu and
                  Xuandong Zhao and
                  Anderson Compalas and
                  Dawn Song and
                  Xin Eric Wang},
  title        = {Multimodal Situational Safety},
  journal      = {CoRR},
  volume       = {abs/2410.06172},
  year         = {2024},
  doi          = {10.48550/ARXIV.2410.06172},
  eprinttype   = {arXiv},
  eprint       = {2410.06172},
  bibsource    = {dblp computer science bibliography, https://dblp.org}
}

@inproceedings{DBLP:conf/uss/ZhanCSS25,
  author       = {Xiao Zhan and
                  Juan Carlos Carrillo and
                  William Seymour and
                  Jose Such},
  editor       = {Lujo Bauer and
                  Giancarlo Pellegrino},
  title        = {Malicious LLM-Based Conversational {AI} Makes Users Reveal Personal
                  Information},
  booktitle    = {34th {USENIX} Security Symposium, {USENIX} Security 2025, Seattle,
                  WA, USA, August 13-15, 2025},
  pages        = {61--80},
  publisher    = {{USENIX} Association},
  year         = {2025},
  bibsource    = {dblp computer science bibliography, https://dblp.org}
}

@inproceedings{DBLP:conf/acl/HuangZWLZRYZJ26,
  author       = {Qingjia Huang and
                  Jingyu Zhang and
                  Jianguo Wu and
                  Yakai Li and
                  Weijuan Zhang and
                  Yankai Rong and
                  Junyi Yao and
                  Shengzhi Zhang and
                  Xiaoqi Jia},
  editor       = {Maria Liakata and
                  Viviane P. Moreira and
                  Jiajun Zhang and
                  David Jurgens},
  title        = {JailMeter: An Evidence-Based Evaluation Framework for Jailbreak Attacks
                  on Large Language Models},
  booktitle    = {Findings of the Association for Computational Linguistics, {ACL} 2026,
                  San Diego, California, United States, July 2-7, 2026},
  pages        = {16006--16029},
  publisher    = {Association for Computational Linguistics},
  year         = {2026},
  doi          = {10.18653/V1/2026.FINDINGS-ACL.786},
  bibsource    = {dblp computer science bibliography, https://dblp.org}
}

@inproceedings{DBLP:conf/aaai/ZhaoXLWLZ025,
  author       = {Andrew Zhao and
                  Quentin Xu and
                  Matthieu Lin and
                  Shenzhi Wang and
                  Yong{-}Jin Liu and
                  Zilong Zheng and
                  Gao Huang},
  editor       = {Toby Walsh and
                  Julie Shah and
                  Zico Kolter},
  title        = {DiveR-CT: Diversity-enhanced Red Teaming Large Language Model Assistants
                  with Relaxing Constraints},
  booktitle    = {Thirty-Ninth {AAAI} Conference on Artificial Intelligence, Thirty-Seventh
                  Conference on Innovative Applications of Artificial Intelligence,
                  Fifteenth Symposium on Educational Advances in Artificial Intelligence,
                  {AAAI} 2025, Philadelphia, PA, USA, February 25 - March 4, 2025},
  pages        = {26021--26030},
  publisher    = {{AAAI} Press},
  year         = {2025},
  doi          = {10.1609/AAAI.V39I24.34797},
  bibsource    = {dblp computer science bibliography, https://dblp.org}
}

@inproceedings{hu-etal-2026-lying,
    title = "Lying with Truths: Open-Channel Multi-Agent Collusion for Belief Manipulation via Generative Montage",
    author = "Hu, Jinwei  and
      Huang, Xinmiao  and
      Sun, Youcheng  and
      Dong, Yi  and
      Huang, Xiaowei",
    editor = "Liakata, Maria  and
      Moreira, Viviane P.  and
      Zhang, Jiajun  and
      Jurgens, David",
    booktitle = "Proceedings of the 64th Annual Meeting of the {A}ssociation for {C}omputational {L}inguistics (Volume 1: Long Papers)",
    month = jul,
    year = "2026",
    address = "San Diego, California, United States",
    publisher = "Association for Computational Linguistics",
    doi = "10.18653/v1/2026.acl-long.270",
    pages = "5979--5996",
    ISBN = "979-8-89176-390-6"
}

@inproceedings{DBLP:conf/iclr/FormentoFN25,
  author       = {Brian Formento and
                  Chuan{-}Sheng Foo and
                  See{-}Kiong Ng},
  title        = {Confidence Elicitation: {A} New Attack Vector for Large Language Models},
  booktitle    = {The Thirteenth International Conference on Learning Representations,
                  {ICLR} 2025, Singapore, April 24-28, 2025},
  publisher    = {OpenReview.net},
  year         = {2025},
  bibsource    = {dblp computer science bibliography, https://dblp.org}
}

@inproceedings{DBLP:conf/nips/LiLZXSZJPZZY25,
  author       = {Qinfeng Li and
                  Tianyue Luo and
                  Xuhong Zhang and
                  Yangfan Xie and
                  Zhiqiang Shen and
                  Lijun Zhang and
                  Yier Jin and
                  Hao Peng and
                  Xinkui Zhao and
                  Xianwei Zhu and
                  Jianwei Yin},
  editor       = {Danielle Belgrave and
                  Cheng Zhang and
                  Laura N. Montoya and
                  Hsuan{-}Tien Lin and
                  Razvan Pascanu and
                  Piotr Koniusz and
                  Marzyeh Ghassemi and
                  Nancy Chen and
                  Iv{\'{a}}n Vladimir Meza Ru{\'{\i}}z and
                  Arturo Loaiza{-}Bonilla},
  title        = {CoreGuard: Safeguarding Foundational Capabilities of LLMs Against
                  Model Stealing in Edge Deployment},
  booktitle    = {Advances in Neural Information Processing Systems 38: Annual Conference
                  on Neural Information Processing Systems 2025, NeurIPS 2025, San Diego,
                  CA, USA, December 2-7, 2025 / Mexico City, Mexico, November 30 - December
                  5, 2025},
  year         = {2025},
  bibsource    = {dblp computer science bibliography, https://dblp.org}
}

@inproceedings{DBLP:conf/iclr/LiuLCCZKH24,
  author       = {Yan Liu and
                  Yu Liu and
                  Xiaokang Chen and
                  Pin{-}Yu Chen and
                  Daoguang Zan and
                  Min{-}Yen Kan and
                  Tsung{-}Yi Ho},
  title        = {The Devil is in the Neurons: Interpreting and Mitigating Social Biases
                  in Language Models},
  booktitle    = {The Twelfth International Conference on Learning Representations,
                  {ICLR} 2024, Vienna, Austria, May 7-11, 2024},
  publisher    = {OpenReview.net},
  year         = {2024},
  bibsource    = {dblp computer science bibliography, https://dblp.org}
}

@inproceedings{DBLP:conf/naacl/XuMWXC24,
  author       = {Jiashu Xu and
                  Mingyu Derek Ma and
                  Fei Wang and
                  Chaowei Xiao and
                  Muhao Chen},
  editor       = {Kevin Duh and
                  Helena G{\'{o}}mez{-}Adorno and
                  Steven Bethard},
  title        = {Instructions as Backdoors: Backdoor Vulnerabilities of Instruction
                  Tuning for Large Language Models},
  booktitle    = {Proceedings of the 2024 Conference of the North American Chapter of
                  the Association for Computational Linguistics: Human Language Technologies
                  (Volume 1: Long Papers), {NAACL} 2024, Mexico City, Mexico, June 16-21,
                  2024},
  pages        = {3111--3126},
  publisher    = {Association for Computational Linguistics},
  year         = {2024},
  doi          = {10.18653/V1/2024.NAACL-LONG.171},
  bibsource    = {dblp computer science bibliography, https://dblp.org}
}

@inproceedings{DBLP:conf/nips/WuTLCSL25,
  author       = {Chao{-}Chung Wu and
                  Zhi Rui Tam and
                  Chieh{-}Yen Lin and
                  Yun{-}Nung Vivian Chen and
                  Shao{-}Hua Sun and
                  Hung{-}yi Lee},
  editor       = {Danielle Belgrave and
                  Cheng Zhang and
                  Laura N. Montoya and
                  Hsuan{-}Tien Lin and
                  Razvan Pascanu and
                  Piotr Koniusz and
                  Marzyeh Ghassemi and
                  Nancy Chen and
                  Iv{\'{a}}n Vladimir Meza Ru{\'{\i}}z and
                  Arturo Loaiza{-}Bonilla},
  title        = {Mitigating Forgetting in {LLM} Fine-Tuning via Low-Perplexity Token
                  Learning},
  booktitle    = {Advances in Neural Information Processing Systems 38: Annual Conference
                  on Neural Information Processing Systems 2025, NeurIPS 2025, San Diego,
                  CA, USA, December 2-7, 2025 / Mexico City, Mexico, November 30 - December
                  5, 2025},
  year         = {2025},
  bibsource    = {dblp computer science bibliography, https://dblp.org}
}

@inproceedings{perez-etal-2022-red,
    title = "Red Teaming Language Models with Language Models",
    author = "Perez, Ethan  and
      Huang, Saffron  and
      Song, Francis  and
      Cai, Trevor  and
      Ring, Roman  and
      Aslanides, John  and
      Glaese, Amelia  and
      McAleese, Nat  and
      Irving, Geoffrey",
    editor = "Goldberg, Yoav  and
      Kozareva, Zornitsa  and
      Zhang, Yue",
    booktitle = "Proceedings of the 2022 Conference on Empirical Methods in Natural Language Processing",
    month = dec,
    year = "2022",
    address = "Abu Dhabi, United Arab Emirates",
    publisher = "Association for Computational Linguistics",
    doi = "10.18653/v1/2022.emnlp-main.225",
    pages = "3419--3448"
}

@inproceedings{DBLP:conf/acl/GengHLDCHHT26,
  author       = {He Geng and
                  Yangmin Huang and
                  Lixian Lai and
                  Qianyun Du and
                  Hui Chu and
                  Zhiyang He and
                  Jiaxue Hu and
                  Xiaodong Tao},
  editor       = {Maria Liakata and
                  Viviane P. Moreira and
                  Jiajun Zhang and
                  David Jurgens},
  title        = {ProMedical: Hierarchical Fine-Grained Criteria Modeling for Medical
                  {LLM} Alignment via Explicit Injection},
  booktitle    = {Proceedings of the 64th Annual Meeting of the Association for Computational
                  Linguistics (Volume 1: Long Papers), {ACL} 2026, San Diego, California,
                  United States, July 2-7, 2026},
  pages        = {36955--36994},
  publisher    = {Association for Computational Linguistics},
  year         = {2026},
  doi          = {10.18653/V1/2026.ACL-LONG.1714},
  bibsource    = {dblp computer science bibliography, https://dblp.org}
}

@inproceedings{DBLP:conf/iclr/LiuLSVMJM00X25,
  author       = {Xiaogeng Liu and
                  Peiran Li and
                  G. Edward Suh and
                  Yevgeniy Vorobeychik and
                  Zhuoqing Mao and
                  Somesh Jha and
                  Patrick McDaniel and
                  Huan Sun and
                  Bo Li and
                  Chaowei Xiao},
  title        = {AutoDAN-Turbo: {A} Lifelong Agent for Strategy Self-Exploration to
                  Jailbreak LLMs},
  booktitle    = {The Thirteenth International Conference on Learning Representations,
                  {ICLR} 2025, Singapore, April 24-28, 2025},
  publisher    = {OpenReview.net},
  year         = {2025},
  bibsource    = {dblp computer science bibliography, https://dblp.org}
}

@inproceedings{DBLP:conf/naacl/ZhangR25,
  author       = {Bing Zhang and
                  Guang{-}Jie Ren},
  editor       = {Weizhu Chen and
                  Yi Yang and
                  Mohammad Kachuee and
                  Xue{-}Yong Fu},
  title        = {Challenges and Remedies of Domain-Specific Classifiers as {LLM} Guardrails:
                  Self-Harm as a Case Study},
  booktitle    = {Proceedings of the 2025 Conference of the Nations of the Americas
                  Chapter of the Association for Computational Linguistics: Human Language
                  Technologies, {NAACL} 2025 - Volume 3: Industry Track, Albuquerque,
                  New Mexico, USA, April 30, 2025},
  pages        = {173--182},
  publisher    = {Association for Computational Linguistics},
  year         = {2025},
  doi          = {10.18653/V1/2025.NAACL-INDUSTRY.15},
  bibsource    = {dblp computer science bibliography, https://dblp.org}
}

@inproceedings{DBLP:conf/aaai/BiYYZTTZWZYY26,
  author       = {Ting Bi and
                  Chenghang Ye and
                  Zheyu Yang and
                  Ziyi Zhou and
                  Cui Tang and
                  Zui Tao and
                  Jun Zhang and
                  Kailong Wang and
                  Liting Zhou and
                  Yang Yang and
                  Tianlong Yu},
  editor       = {Sven Koenig and
                  Chad Jenkins and
                  Matthew E. Taylor},
  title        = {On the Feasibility of Using MultiModal LLMs to Execute {AR} Social
                  Engineering Attacks},
  booktitle    = {Fortieth {AAAI} Conference on Artificial Intelligence, Thirty-Eighth
                  Conference on Innovative Applications of Artificial Intelligence,
                  Sixteenth Symposium on Educational Advances in Artificial Intelligence,
                  {AAAI} 2026, Singapore, January 20-27, 2026},
  pages        = {38252--38260},
  publisher    = {{AAAI} Press},
  year         = {2026},
  doi          = {10.1609/AAAI.V40I45.41164},
  bibsource    = {dblp computer science bibliography, https://dblp.org}
}

@inproceedings{DBLP:conf/emnlp/ChengCZJP25,
  author       = {Xiaoqing Cheng and
                  Ruizhe Chen and
                  Hongying Zan and
                  Yuxiang Jia and
                  Min Peng},
  editor       = {Christos Christodoulopoulos and
                  Tanmoy Chakraborty and
                  Carolyn Rose and
                  Violet Peng},
  title        = {BiasFilter: An Inference-Time Debiasing Framework for Large Language
                  Models},
  booktitle    = {Findings of the Association for Computational Linguistics: {EMNLP}
                  2025, Suzhou, China, November 4-9, 2025},
  pages        = {15187--15205},
  publisher    = {Association for Computational Linguistics},
  year         = {2025},
  doi          = {10.18653/V1/2025.FINDINGS-EMNLP.821},
  bibsource    = {dblp computer science bibliography, https://dblp.org}
}

@inproceedings{DBLP:conf/nips/BurgerHN24,
  author       = {Lennart B{\"{u}}rger and
                  Fred A. Hamprecht and
                  Boaz Nadler},
  editor       = {Amir Globersons and
                  Lester Mackey and
                  Danielle Belgrave and
                  Angela Fan and
                  Ulrich Paquet and
                  Jakub M. Tomczak and
                  Cheng Zhang},
  title        = {Truth is Universal: Robust Detection of Lies in LLMs},
  booktitle    = {Advances in Neural Information Processing Systems 37: Annual Conference
                  on Neural Information Processing Systems 2024, NeurIPS 2024, Vancouver,
                  BC, Canada, December 10 - 15, 2024},
  year         = {2024},
  bibsource    = {dblp computer science bibliography, https://dblp.org}
}

@inproceedings{DBLP:conf/iclr/0010ZPB24,
  author       = {Yue Deng and
                  Wenxuan Zhang and
                  Sinno Jialin Pan and
                  Lidong Bing},
  title        = {Multilingual Jailbreak Challenges in Large Language Models},
  booktitle    = {The Twelfth International Conference on Learning Representations,
                  {ICLR} 2024, Vienna, Austria, May 7-11, 2024},
  publisher    = {OpenReview.net},
  year         = {2024},
  bibsource    = {dblp computer science bibliography, https://dblp.org}
}

@inproceedings{DBLP:conf/nips/WangHLL24,
  author       = {Haoyu Wang and
                  Zhuo Huang and
                  Zhiwei Lin and
                  Tongliang Liu},
  editor       = {Amir Globersons and
                  Lester Mackey and
                  Danielle Belgrave and
                  Angela Fan and
                  Ulrich Paquet and
                  Jakub M. Tomczak and
                  Cheng Zhang},
  title        = {NoiseGPT: Label Noise Detection and Rectification through Probability
                  Curvature},
  booktitle    = {Advances in Neural Information Processing Systems 37: Annual Conference
                  on Neural Information Processing Systems 2024, NeurIPS 2024, Vancouver,
                  BC, Canada, December 10 - 15, 2024},
  year         = {2024},
  bibsource    = {dblp computer science bibliography, https://dblp.org}
}

@article{DBLP:journals/corr/abs-2404-17546,
  author       = {Stephen Zhao and
                  Rob Brekelmans and
                  Alireza Makhzani and
                  Roger B. Grosse},
  title        = {Probabilistic Inference in Language Models via Twisted Sequential
                  Monte Carlo},
  journal      = {CoRR},
  volume       = {abs/2404.17546},
  year         = {2024},
  doi          = {10.48550/ARXIV.2404.17546},
  eprinttype   = {arXiv},
  eprint       = {2404.17546},
  bibsource    = {dblp computer science bibliography, https://dblp.org}
}

@inproceedings{DBLP:conf/iclr/NasrRCHJCICTL25,
  author       = {Milad Nasr and
                  Javier Rando and
                  Nicholas Carlini and
                  Jonathan Hayase and
                  Matthew Jagielski and
                  A. Feder Cooper and
                  Daphne Ippolito and
                  Christopher A. Choquette{-}Choo and
                  Florian Tram{\`{e}}r and
                  Katherine Lee},
  title        = {Scalable Extraction of Training Data from Aligned, Production Language
                  Models},
  booktitle    = {The Thirteenth International Conference on Learning Representations,
                  {ICLR} 2025, Singapore, April 24-28, 2025},
  publisher    = {OpenReview.net},
  year         = {2025},
  bibsource    = {dblp computer science bibliography, https://dblp.org}
}

@inproceedings{DBLP:conf/aaai/KusakaSKTWA26,
  author       = {Shigeki Kusaka and
                  Keita Saito and
                  Mikoto Kudo and
                  Takumi Tanabe and
                  Akifumi Wachi and
                  Youhei Akimoto},
  editor       = {Sven Koenig and
                  Chad Jenkins and
                  Matthew E. Taylor},
  title        = {Cost-Minimized Label-Flipping Poisoning Attack to {LLM} Alignment},
  booktitle    = {Fortieth {AAAI} Conference on Artificial Intelligence, Thirty-Eighth
                  Conference on Innovative Applications of Artificial Intelligence,
                  Sixteenth Symposium on Educational Advances in Artificial Intelligence,
                  {AAAI} 2026, Singapore, January 20-27, 2026},
  pages        = {37538--37546},
  publisher    = {{AAAI} Press},
  year         = {2026},
  doi          = {10.1609/AAAI.V40I44.41087},
  bibsource    = {dblp computer science bibliography, https://dblp.org}
}

@inproceedings{DBLP:conf/iclr/RichterHMK25,
  author       = {Leo Richter and
                  Xuanli He and
                  Pasquale Minervini and
                  Matt J. Kusner},
  title        = {An Auditing Test to Detect Behavioral Shift in Language Models},
  booktitle    = {The Thirteenth International Conference on Learning Representations,
                  {ICLR} 2025, Singapore, April 24-28, 2025},
  publisher    = {OpenReview.net},
  year         = {2025},
  bibsource    = {dblp computer science bibliography, https://dblp.org}
}

@inproceedings{DBLP:conf/acl/DelavalYWQL26,
  author       = {Axel Delaval and
                  Shujian Yang and
                  Haicheng Wang and
                  Han Qiu and
                  Jialiang Lu},
  editor       = {Maria Liakata and
                  Viviane P. Moreira and
                  Jiajun Zhang and
                  David Jurgens},
  title        = {{TOXIFRENCH:} Benchmarking and Enhancing Language Models via CoT Fine-Tuning
                  for French Toxicity Detection},
  booktitle    = {Findings of the Association for Computational Linguistics, {ACL} 2026,
                  San Diego, California, United States, July 2-7, 2026},
  pages        = {21354--21375},
  publisher    = {Association for Computational Linguistics},
  year         = {2026},
  doi          = {10.18653/V1/2026.FINDINGS-ACL.1074},
  bibsource    = {dblp computer science bibliography, https://dblp.org}
}

@inproceedings{DBLP:conf/aaai/BowenMCKGP25,
  author       = {Dillon Bowen and
                  Brendan Murphy and
                  Will Cai and
                  David Khachaturov and
                  Adam Gleave and
                  Kellin Pelrine},
  editor       = {Toby Walsh and
                  Julie Shah and
                  Zico Kolter},
  title        = {Scaling Trends for Data Poisoning in LLMs},
  booktitle    = {Thirty-Ninth {AAAI} Conference on Artificial Intelligence, Thirty-Seventh
                  Conference on Innovative Applications of Artificial Intelligence,
                  Fifteenth Symposium on Educational Advances in Artificial Intelligence,
                  {AAAI} 2025, Philadelphia, PA, USA, February 25 - March 4, 2025},
  pages        = {27206--27214},
  publisher    = {{AAAI} Press},
  year         = {2025},
  doi          = {10.1609/AAAI.V39I26.34929},
  bibsource    = {dblp computer science bibliography, https://dblp.org}
}

@inproceedings{DBLP:conf/iclr/DingWYL0S0025,
  author       = {Chenlu Ding and
                  Jiancan Wu and
                  Yancheng Yuan and
                  Jinda Lu and
                  Kai Zhang and
                  Alex Su and
                  Xiang Wang and
                  Xiangnan He},
  title        = {Unified Parameter-Efficient Unlearning for LLMs},
  booktitle    = {The Thirteenth International Conference on Learning Representations,
                  {ICLR} 2025, Singapore, April 24-28, 2025},
  publisher    = {OpenReview.net},
  year         = {2025},
  bibsource    = {dblp computer science bibliography, https://dblp.org}
}

@article{DBLP:journals/corr/abs-2508-06194,
  author       = {Lai Jiang and
                  Yuekang Li and
                  Xiaohan Zhang and
                  Youtao Ding and
                  Li Pan},
  title        = {{SceneJailEval}: A Scenario-Adaptive Multi-Dimensional Framework for
                  Jailbreak Evaluation},
  journal      = {CoRR},
  volume       = {abs/2508.06194},
  year         = {2025},
  doi          = {10.48550/ARXIV.2508.06194},
  eprinttype   = {arXiv},
  eprint       = {2508.06194},
  bibsource    = {dblp computer science bibliography, https://dblp.org}
}

@inproceedings{saffari-etal-2025-beyond,
    title = "Beyond Hate Speech: {NLP}{'}s Challenges and Opportunities in Uncovering Dehumanizing Language",
    author = "Saffari, Hamidreza  and
      Shafiei, Mohammadamin  and
      Zhang, Hezhao  and
      Harris, Lasana T.  and
      Moosavi, Nafise Sadat",
    editor = "Christodoulopoulos, Christos  and
      Chakraborty, Tanmoy  and
      Rose, Carolyn  and
      Peng, Violet",
    booktitle = "Proceedings of the 2025 Conference on Empirical Methods in Natural Language Processing",
    month = nov,
    year = "2025",
    address = "Suzhou, China",
    publisher = "Association for Computational Linguistics",
    doi = "10.18653/v1/2025.emnlp-main.1370",
    pages = "26965--26980",
    ISBN = "979-8-89176-332-6"
}

@article{DBLP:journals/corr/abs-2401-05561,
  author       = {Yue Huang and
                  Lichao Sun and
                  Haoran Wang and
                  Siyuan Wu and
                  Qihui Zhang and
                  Yuan Li and
                  Chujie Gao and
                  Yixin Huang and
                  Wenhan Lyu and
                  Yixuan Zhang and
                  Xiner Li and
                  Zhengliang Liu and
                  Yixin Liu and
                  Yijue Wang and
                  Zhikun Zhang and
                  Bertie Vidgen and
                  Bhavya Kailkhura and
                  Caiming Xiong and
                  Chaowei Xiao and
                  Chunyuan Li and
                  Eric Xing and
                  Furong Huang and
                  Hao Liu and
                  Heng Ji and
                  Hongyi Wang and
                  Huan Zhang and
                  Huaxiu Yao and
                  Manolis Kellis and
                  Marinka Zitnik and
                  Meng Jiang and
                  Mohit Bansal and
                  James Zou and
                  Jian Pei and
                  Jian Liu and
                  Jianfeng Gao and
                  Jiawei Han and
                  Jieyu Zhao and
                  Jiliang Tang and
                  Jindong Wang and
                  Joaquin Vanschoren and
                  John Mitchell and
                  Kai Shu and
                  Kaidi Xu and
                  Kai{-}Wei Chang and
                  Lifang He and
                  Lifu Huang and
                  Michael Backes and
                  Neil Zhenqiang Gong and
                  Philip S. Yu and
                  Pin{-}Yu Chen and
                  Quanquan Gu and
                  Ran Xu and
                  Rex Ying and
                  Shuiwang Ji and
                  Suman Jana and
                  Tianlong Chen and
                  Tianming Liu and
                  Tianyi Zhou and
                  William Wang and
                  Xiang Li and
                  Xiangliang Zhang and
                  Xiao Wang and
                  Xing Xie and
                  Xun Chen and
                  Xuyu Wang and
                  Yan Liu and
                  Yanfang Ye and
                  Yinzhi Cao and
                  Yong Chen and
                  Yue Zhao},
  title        = {TrustLLM: Trustworthiness in Large Language Models},
  journal      = {CoRR},
  volume       = {abs/2401.05561},
  year         = {2024},
  doi          = {10.48550/ARXIV.2401.05561},
  eprinttype   = {arXiv},
  eprint       = {2401.05561},
  bibsource    = {dblp computer science bibliography, https://dblp.org}
}

@inproceedings{DBLP:conf/aaai/He0HFZ025,
  author       = {Jiaming He and
                  Wenbo Jiang and
                  Guanyu Hou and
                  Wenshu Fan and
                  Rui Zhang and
                  Hongwei Li},
  editor       = {Toby Walsh and
                  Julie Shah and
                  Zico Kolter},
  title        = {Watch Out for Your Guidance on Generation! Exploring Conditional Backdoor
                  Attacks against Large Language Models},
  booktitle    = {Thirty-Ninth {AAAI} Conference on Artificial Intelligence, Thirty-Seventh
                  Conference on Innovative Applications of Artificial Intelligence,
                  Fifteenth Symposium on Educational Advances in Artificial Intelligence,
                  {AAAI} 2025, Philadelphia, PA, USA, February 25 - March 4, 2025},
  pages        = {26220--26228},
  publisher    = {{AAAI} Press},
  year         = {2025},
  doi          = {10.1609/AAAI.V39I25.34819},
  bibsource    = {dblp computer science bibliography, https://dblp.org}
}

@inproceedings{DBLP:conf/naacl/ZhaoBHYM25,
  author       = {Weiliang Zhao and
                  Daniel Ben{-}Levi and
                  Wei Hao and
                  Junfeng Yang and
                  Chengzhi Mao},
  editor       = {Luis Chiruzzo and
                  Alan Ritter and
                  Lu Wang},
  title        = {Diversity Helps Jailbreak Large Language Models},
  booktitle    = {Proceedings of the 2025 Conference of the Nations of the Americas
                  Chapter of the Association for Computational Linguistics: Human Language
                  Technologies, {NAACL} 2025 - Volume 1: Long Papers, Albuquerque, New
                  Mexico, USA, April 29 - May 4, 2025},
  pages        = {4647--4680},
  publisher    = {Association for Computational Linguistics},
  year         = {2025},
  doi          = {10.18653/V1/2025.NAACL-LONG.238},
  bibsource    = {dblp computer science bibliography, https://dblp.org}
}

@inproceedings{DBLP:conf/naacl/YuCLRG24,
  author       = {Xiaodong Yu and
                  Hao Cheng and
                  Xiaodong Liu and
                  Dan Roth and
                  Jianfeng Gao},
  editor       = {Kevin Duh and
                  Helena G{\'{o}}mez{-}Adorno and
                  Steven Bethard},
  title        = {ReEval: Automatic Hallucination Evaluation for Retrieval-Augmented
                  Large Language Models via Transferable Adversarial Attacks},
  booktitle    = {Findings of the Association for Computational Linguistics: {NAACL}
                  2024, Mexico City, Mexico, June 16-21, 2024},
  series       = {Findings of {ACL}},
  volume       = {{NAACL} 2024},
  pages        = {1333--1351},
  publisher    = {Association for Computational Linguistics},
  year         = {2024},
  doi          = {10.18653/V1/2024.FINDINGS-NAACL.85},
  bibsource    = {dblp computer science bibliography, https://dblp.org}
}

@inproceedings{wang-etal-2025-speculative,
    title = "Speculative Safety-Aware Decoding",
    author = "Wang, Xuekang  and
      Zhu, Shengyu  and
      Cheng, Xueqi",
    editor = "Christodoulopoulos, Christos  and
      Chakraborty, Tanmoy  and
      Rose, Carolyn  and
      Peng, Violet",
    booktitle = "Proceedings of the 2025 Conference on Empirical Methods in Natural Language Processing",
    month = nov,
    year = "2025",
    address = "Suzhou, China",
    publisher = "Association for Computational Linguistics",
    doi = "10.18653/v1/2025.emnlp-main.648",
    pages = "12827--12841",
    ISBN = "979-8-89176-332-6"
}

@inproceedings{DBLP:conf/acl/LiLQXSWS25,
  author       = {Rui Li and
                  Jing Long and
                  Muge Qi and
                  Heming Xia and
                  Lei Sha and
                  Peiyi Wang and
                  Zhifang Sui},
  editor       = {Wanxiang Che and
                  Joyce Nabende and
                  Ekaterina Shutova and
                  Mohammad Taher Pilehvar},
  title        = {Towards Harmonized Uncertainty Estimation for Large Language Models},
  booktitle    = {Proceedings of the 63rd Annual Meeting of the Association for Computational
                  Linguistics (Volume 1: Long Papers), {ACL} 2025, Vienna, Austria,
                  July 27 - August 1, 2025},
  pages        = {22938--22953},
  publisher    = {Association for Computational Linguistics},
  year         = {2025},
  doi          = {10.18653/V1/2025.ACL-LONG.1118},
  bibsource    = {dblp computer science bibliography, https://dblp.org}
}

@inproceedings{DBLP:conf/emnlp/LiangWHJW25,
  author       = {Jiacheng Liang and
                  Zian Wang and
                  Spencer Hong and
                  Shouling Ji and
                  Ting Wang},
  editor       = {Christos Christodoulopoulos and
                  Tanmoy Chakraborty and
                  Carolyn Rose and
                  Violet Peng},
  title        = {Watermark under Fire: {A} Robustness Evaluation of {LLM} Watermarking},
  booktitle    = {Findings of the Association for Computational Linguistics: {EMNLP}
                  2025, Suzhou, China, November 4-9, 2025},
  pages        = {21050--21074},
  publisher    = {Association for Computational Linguistics},
  year         = {2025},
  doi          = {10.18653/V1/2025.FINDINGS-EMNLP.1148},
  bibsource    = {dblp computer science bibliography, https://dblp.org}
}

@inproceedings{DBLP:conf/nips/KimYLGYO23,
  author       = {Siwon Kim and
                  Sangdoo Yun and
                  Hwaran Lee and
                  Martin Gubri and
                  Sungroh Yoon and
                  Seong Joon Oh},
  editor       = {Alice Oh and
                  Tristan Naumann and
                  Amir Globerson and
                  Kate Saenko and
                  Moritz Hardt and
                  Sergey Levine},
  title        = {ProPILE: Probing Privacy Leakage in Large Language Models},
  booktitle    = {Advances in Neural Information Processing Systems 36: Annual Conference
                  on Neural Information Processing Systems 2023, NeurIPS 2023, New Orleans,
                  LA, USA, December 10 - 16, 2023},
  year         = {2023},
  bibsource    = {dblp computer science bibliography, https://dblp.org}
}

@inproceedings{DBLP:conf/nips/KangCL25,
  author       = {Mintong Kang and
                  Zhaorun Chen and
                  Bo Li},
  editor       = {Danielle Belgrave and
                  Cheng Zhang and
                  Laura N. Montoya and
                  Hsuan{-}Tien Lin and
                  Razvan Pascanu and
                  Piotr Koniusz and
                  Marzyeh Ghassemi and
                  Nancy Chen and
                  Iv{\'{a}}n Vladimir Meza Ru{\'{\i}}z and
                  Arturo Loaiza{-}Bonilla},
  title        = {C-SafeGen: Certified Safe {LLM} Generation with Claim-Based Streaming
                  Guardrails},
  booktitle    = {Advances in Neural Information Processing Systems 38: Annual Conference
                  on Neural Information Processing Systems 2025, NeurIPS 2025, San Diego,
                  CA, USA, December 2-7, 2025 / Mexico City, Mexico, November 30 - December
                  5, 2025},
  year         = {2025},
  bibsource    = {dblp computer science bibliography, https://dblp.org}
}

@inproceedings{DBLP:conf/uss/WangDCZLXWZZ25,
  author       = {Pengli Wang and
                  Bingyou Dong and
                  Yifeng Cai and
                  Zheng Zhang and
                  Junlin Liu and
                  Huanran Xue and
                  Ye Wu and
                  Yao Zhang and
                  Ziqi Zhang},
  editor       = {Lujo Bauer and
                  Giancarlo Pellegrino},
  title        = {Game of Arrows: On the (In-)Security of Weight Obfuscation for On-Device
                  TEE-Shielded {LLM} Partition Algorithms},
  booktitle    = {34th {USENIX} Security Symposium, {USENIX} Security 2025, Seattle,
                  WA, USA, August 13-15, 2025},
  pages        = {279--298},
  publisher    = {{USENIX} Association},
  year         = {2025},
  bibsource    = {dblp computer science bibliography, https://dblp.org}
}

@inproceedings{DBLP:conf/naacl/ChenWN0YLL024,
  author       = {Xiusi Chen and
                  Hongzhi Wen and
                  Sreyashi Nag and
                  Chen Luo and
                  Qingyu Yin and
                  Ruirui Li and
                  Zheng Li and
                  Wei Wang},
  editor       = {Kevin Duh and
                  Helena G{\'{o}}mez{-}Adorno and
                  Steven Bethard},
  title        = {IterAlign: Iterative Constitutional Alignment of Large Language Models},
  booktitle    = {Proceedings of the 2024 Conference of the North American Chapter of
                  the Association for Computational Linguistics: Human Language Technologies
                  (Volume 1: Long Papers), {NAACL} 2024, Mexico City, Mexico, June 16-21,
                  2024},
  pages        = {1423--1433},
  publisher    = {Association for Computational Linguistics},
  year         = {2024},
  doi          = {10.18653/V1/2024.NAACL-LONG.78},
  bibsource    = {dblp computer science bibliography, https://dblp.org}
}

@inproceedings{DBLP:conf/aaai/ZhouWJYYLW0H24,
  author       = {Zihao Zhou and
                  Qiufeng Wang and
                  Mingyu Jin and
                  Jie Yao and
                  Jianan Ye and
                  Wei Liu and
                  Wei Wang and
                  Xiaowei Huang and
                  Kaizhu Huang},
  editor       = {Michael J. Wooldridge and
                  Jennifer G. Dy and
                  Sriraam Natarajan},
  title        = {MathAttack: Attacking Large Language Models towards Math Solving Ability},
  booktitle    = {Thirty-Eighth {AAAI} Conference on Artificial Intelligence, {AAAI}
                  2024, Thirty-Sixth Conference on Innovative Applications of Artificial
                  Intelligence, {IAAI} 2024, Fourteenth Symposium on Educational Advances
                  in Artificial Intelligence, {EAAI} 2024, February 20-27, 2024, Vancouver,
                  Canada},
  pages        = {19750--19758},
  publisher    = {{AAAI} Press},
  year         = {2024},
  doi          = {10.1609/AAAI.V38I17.29949},
  bibsource    = {dblp computer science bibliography, https://dblp.org}
}

@inproceedings{DBLP:conf/naacl/HuangLGSSWZWTQW24,
  author       = {Kexin Huang and
                  Xiangyang Liu and
                  Qianyu Guo and
                  Tianxiang Sun and
                  Jiawei Sun and
                  Yaru Wang and
                  Zeyang Zhou and
                  Yixu Wang and
                  Yan Teng and
                  Xipeng Qiu and
                  Yingchun Wang and
                  Dahua Lin},
  editor       = {Kevin Duh and
                  Helena G{\'{o}}mez{-}Adorno and
                  Steven Bethard},
  title        = {Flames: Benchmarking Value Alignment of LLMs in Chinese},
  booktitle    = {Proceedings of the 2024 Conference of the North American Chapter of
                  the Association for Computational Linguistics: Human Language Technologies
                  (Volume 1: Long Papers), {NAACL} 2024, Mexico City, Mexico, June 16-21,
                  2024},
  pages        = {4551--4591},
  publisher    = {Association for Computational Linguistics},
  year         = {2024},
  doi          = {10.18653/V1/2024.NAACL-LONG.256},
  bibsource    = {dblp computer science bibliography, https://dblp.org}
}

@inproceedings{DBLP:conf/sp/OhLPKK24,
  author       = {Sanghak Oh and
                  Kiho Lee and
                  Seonhye Park and
                  Doowon Kim and
                  Hyoungshick Kim},
  title        = {Poisoned ChatGPT Finds Work for Idle Hands: Exploring Developers'
                  Coding Practices with Insecure Suggestions from Poisoned {AI} Models},
  booktitle    = {{IEEE} Symposium on Security and Privacy, {SP} 2024, San Francisco,
                  CA, USA, May 19-23, 2024},
  pages        = {1141--1159},
  publisher    = {{IEEE}},
  year         = {2024},
  doi          = {10.1109/SP54263.2024.00046},
  bibsource    = {dblp computer science bibliography, https://dblp.org}
}

@article{DBLP:journals/corr/abs-2311-04892,
  author       = {Shashank Gupta and
                  Vaishnavi Shrivastava and
                  Ameet Deshpande and
                  Ashwin Kalyan and
                  Peter Clark and
                  Ashish Sabharwal and
                  Tushar Khot},
  title        = {Bias Runs Deep: Implicit Reasoning Biases in Persona-Assigned LLMs},
  journal      = {CoRR},
  volume       = {abs/2311.04892},
  year         = {2023},
  doi          = {10.48550/ARXIV.2311.04892},
  eprinttype   = {arXiv},
  eprint       = {2311.04892},
  bibsource    = {dblp computer science bibliography, https://dblp.org}
}

@inproceedings{DBLP:conf/emnlp/ChenSSZ0L24,
  author       = {Zhiyuan Chen and
                  Shiqi Shen and
                  Guangyao Shen and
                  Gong Zhi and
                  Xu Chen and
                  Yankai Lin},
  editor       = {Yaser Al{-}Onaizan and
                  Mohit Bansal and
                  Yun{-}Nung Chen},
  title        = {Towards Tool Use Alignment of Large Language Models},
  booktitle    = {Proceedings of the 2024 Conference on Empirical Methods in Natural
                  Language Processing, {EMNLP} 2024, Miami, FL, USA, November 12-16,
                  2024},
  pages        = {1382--1400},
  publisher    = {Association for Computational Linguistics},
  year         = {2024},
  doi          = {10.18653/V1/2024.EMNLP-MAIN.82},
  bibsource    = {dblp computer science bibliography, https://dblp.org}
}

@inproceedings{DBLP:conf/naacl/DongLBHV25,
  author       = {Yijiang River Dong and
                  Hongzhou Lin and
                  Mikhail Belkin and
                  Ram{\'{o}}n Huerta and
                  Ivan Vulic},
  editor       = {Luis Chiruzzo and
                  Alan Ritter and
                  Lu Wang},
  title        = {{UNDIAL:} Self-Distillation with Adjusted Logits for Robust Unlearning
                  in Large Language Models},
  booktitle    = {Proceedings of the 2025 Conference of the Nations of the Americas
                  Chapter of the Association for Computational Linguistics: Human Language
                  Technologies, {NAACL} 2025 - Volume 1: Long Papers, Albuquerque, New
                  Mexico, USA, April 29 - May 4, 2025},
  pages        = {8827--8840},
  publisher    = {Association for Computational Linguistics},
  year         = {2025},
  doi          = {10.18653/V1/2025.NAACL-LONG.444},
  bibsource    = {dblp computer science bibliography, https://dblp.org}
}

@inproceedings{DBLP:conf/emnlp/LeongCWWL23,
  author       = {Chak Tou Leong and
                  Yi Cheng and
                  Jiashuo Wang and
                  Jian Wang and
                  Wenjie Li},
  editor       = {Houda Bouamor and
                  Juan Pino and
                  Kalika Bali},
  title        = {Self-Detoxifying Language Models via Toxification Reversal},
  booktitle    = {Proceedings of the 2023 Conference on Empirical Methods in Natural
                  Language Processing, {EMNLP} 2023, Singapore, December 6-10, 2023},
  pages        = {4433--4449},
  publisher    = {Association for Computational Linguistics},
  year         = {2023},
  doi          = {10.18653/V1/2023.EMNLP-MAIN.269},
  bibsource    = {dblp computer science bibliography, https://dblp.org}
}

@inproceedings{lu-etal-2023-inference,
    title = "Inference-Time Policy Adapters ({IPA}): Tailoring Extreme-Scale {LM}s without Fine-tuning",
    author = "Lu, Ximing  and
      Brahman, Faeze  and
      West, Peter  and
      Jung, Jaehun  and
      Chandu, Khyathi  and
      Ravichander, Abhilasha  and
      Ammanabrolu, Prithviraj  and
      Jiang, Liwei  and
      Ramnath, Sahana  and
      Dziri, Nouha  and
      Fisher, Jillian  and
      Lin, Bill  and
      Hallinan, Skyler  and
      Qin, Lianhui  and
      Ren, Xiang  and
      Welleck, Sean  and
      Choi, Yejin",
    editor = "Bouamor, Houda  and
      Pino, Juan  and
      Bali, Kalika",
    booktitle = "Proceedings of the 2023 Conference on Empirical Methods in Natural Language Processing",
    month = dec,
    year = "2023",
    address = "Singapore",
    publisher = "Association for Computational Linguistics",
    doi = "10.18653/v1/2023.emnlp-main.424",
    pages = "6863--6883"
}

@inproceedings{DBLP:conf/nips/DuZZMCHYX024,
  author       = {Yanrui Du and
                  Sendong Zhao and
                  Danyang Zhao and
                  Ming Ma and
                  Yuhan Chen and
                  Liangyu Huo and
                  Qing Yang and
                  Dongliang Xu and
                  Bing Qin},
  editor       = {Amir Globersons and
                  Lester Mackey and
                  Danielle Belgrave and
                  Angela Fan and
                  Ulrich Paquet and
                  Jakub M. Tomczak and
                  Cheng Zhang},
  title        = {MoGU: {A} Framework for Enhancing Safety of LLMs While Preserving
                  Their Usability},
  booktitle    = {Advances in Neural Information Processing Systems 37: Annual Conference
                  on Neural Information Processing Systems 2024, NeurIPS 2024, Vancouver,
                  BC, Canada, December 10 - 15, 2024},
  year         = {2024},
  bibsource    = {dblp computer science bibliography, https://dblp.org}
}

@inproceedings{DBLP:conf/iclr/WangAHCJ25,
  author       = {Zi Wang and
                  Divyam Anshumaan and
                  Ashish Hooda and
                  Yudong Chen and
                  Somesh Jha},
  title        = {Functional Homotopy: Smoothing Discrete Optimization via Continuous
                  Parameters for {LLM} Jailbreak Attacks},
  booktitle    = {The Thirteenth International Conference on Learning Representations,
                  {ICLR} 2025, Singapore, April 24-28, 2025},
  publisher    = {OpenReview.net},
  year         = {2025},
  bibsource    = {dblp computer science bibliography, https://dblp.org}
}

@inproceedings{lin-etal-2023-toxicchat,
    title = "{T}oxic{C}hat: Unveiling Hidden Challenges of Toxicity Detection in Real-World User-{AI} Conversation",
    author = "Lin, Zi  and
      Wang, Zihan  and
      Tong, Yongqi  and
      Wang, Yangkun  and
      Guo, Yuxin  and
      Wang, Yujia  and
      Shang, Jingbo",
    editor = "Bouamor, Houda  and
      Pino, Juan  and
      Bali, Kalika",
    booktitle = "Findings of the Association for Computational Linguistics: EMNLP 2023",
    month = dec,
    year = "2023",
    address = "Singapore",
    publisher = "Association for Computational Linguistics",
    doi = "10.18653/v1/2023.findings-emnlp.311",
    pages = "4694--4702"
}

@inproceedings{DBLP:conf/icml/0003G00RZ25,
  author       = {Mengdi Zhang and
                  Kai Kiat Goh and
                  Peixin Zhang and
                  Jun Sun and
                  Lin Xin Rose and
                  Hongyu Zhang},
  editor       = {Aarti Singh and
                  Maryam Fazel and
                  Daniel Hsu and
                  Simon Lacoste{-}Julien and
                  Felix Berkenkamp and
                  Tegan Maharaj and
                  Kiri Wagstaff and
                  Jerry Zhu},
  title        = {LLMScan: Causal Scan for {LLM} Misbehavior Detection},
  booktitle    = {Forty-second International Conference on Machine Learning, {ICML}
                  2025, Vancouver, BC, Canada, July 13-19, 2025},
  series       = {Proceedings of Machine Learning Research},
  volume       = {267},
  publisher    = {{PMLR} / OpenReview.net},
  year         = {2025},
  bibsource    = {dblp computer science bibliography, https://dblp.org}
}

@inproceedings{DBLP:conf/acl/ZhangLWSHL0L0H24,
  author       = {Zhexin Zhang and
                  Leqi Lei and
                  Lindong Wu and
                  Rui Sun and
                  Yongkang Huang and
                  Chong Long and
                  Xiao Liu and
                  Xuanyu Lei and
                  Jie Tang and
                  Minlie Huang},
  editor       = {Lun{-}Wei Ku and
                  Andre Martins and
                  Vivek Srikumar},
  title        = {SafetyBench: Evaluating the Safety of Large Language Models},
  booktitle    = {Proceedings of the 62nd Annual Meeting of the Association for Computational
                  Linguistics (Volume 1: Long Papers), {ACL} 2024, Bangkok, Thailand,
                  August 11-16, 2024},
  pages        = {15537--15553},
  publisher    = {Association for Computational Linguistics},
  year         = {2024},
  doi          = {10.18653/V1/2024.ACL-LONG.830},
  bibsource    = {dblp computer science bibliography, https://dblp.org}
}

@inproceedings{DBLP:conf/emnlp/LiPYXLZ25,
  author       = {Kun Li and
                  Lai Man Po and
                  Hongzheng Yang and
                  Xuyuan Xu and
                  Kangcheng Liu and
                  Yuzhi Zhao},
  editor       = {Christos Christodoulopoulos and
                  Tanmoy Chakraborty and
                  Carolyn Rose and
                  Violet Peng},
  title        = {AesBiasBench: Evaluating Bias and Alignment in Multimodal Language
                  Models for Personalized Image Aesthetic Assessment},
  booktitle    = {Proceedings of the 2025 Conference on Empirical Methods in Natural
                  Language Processing, {EMNLP} 2025, Suzhou, China, November 4-9, 2025},
  pages        = {7607--7620},
  publisher    = {Association for Computational Linguistics},
  year         = {2025},
  doi          = {10.18653/V1/2025.EMNLP-MAIN.386},
  bibsource    = {dblp computer science bibliography, https://dblp.org}
}

@inproceedings{DBLP:conf/naacl/BelkadiRMHN25,
  author       = {Samuel Belkadi and
                  Libo Ren and
                  Nicolo Micheletti and
                  Lifeng Han and
                  Goran Nenadic},
  editor       = {Abteen Ebrahimi and
                  Samar Haider and
                  Emmy Liu and
                  Sammar Haider and
                  Maria Leonor Pacheco and
                  Shira Wein},
  title        = {Generating Synthetic Free-text Medical Records with Low Re-identification
                  Risk using Masked Language Modeling},
  booktitle    = {Proceedings of the 2025 Conference of the Nations of the Americas
                  Chapter of the Association for Computational Linguistics: Human Language
                  Technologies, {NAACL} 2025 - Volume 4: Student Research Workshop,
                  Albuquerque, NM, USA, April 30 - May 1, 2025},
  pages        = {200--206},
  publisher    = {Association for Computational Linguistics},
  year         = {2025},
  doi          = {10.18653/V1/2025.NAACL-SRW.20},
  bibsource    = {dblp computer science bibliography, https://dblp.org}
}

@inproceedings{DBLP:conf/acl/FangTHPWWFL26,
  author       = {Xinyue Fang and
                  Zhiliang Tian and
                  Zhen Huang and
                  Ziyi Pan and
                  Zhihua Wen and
                  Xi Wang and
                  Quntian Fang and
                  Dongsheng Li},
  editor       = {Maria Liakata and
                  Viviane P. Moreira and
                  Jiajun Zhang and
                  David Jurgens},
  title        = {Knowledge Injection Exists in MoE? Exploring Expert-Aware Contrast
                  Decoding in MoE for Mitigating LLMs' Hallucinations},
  booktitle    = {Proceedings of the 64th Annual Meeting of the Association for Computational
                  Linguistics (Volume 1: Long Papers), {ACL} 2026, San Diego, California,
                  United States, July 2-7, 2026},
  pages        = {39326--39343},
  publisher    = {Association for Computational Linguistics},
  year         = {2026},
  doi          = {10.18653/V1/2026.ACL-LONG.1824},
  bibsource    = {dblp computer science bibliography, https://dblp.org}
}

@inproceedings{li-etal-2026-agencybench,
    title = "{A}gency{B}ench: Benchmarking the Frontiers of Autonomous Agents in 1{M}-Token Real-World Contexts",
    author = "Li, Keyu  and
      Shi, Junhao  and
      Xiao, Yang  and
      Jiang, Mohan  and
      Sun, Jie  and
      Wu, Yunze  and
      Fu, Dayuan  and
      Xia, Shijie  and
      Cai, Xiaojie  and
      Xu, Tianze  and
      Si, Weiye  and
      Li, Wenjie  and
      Wang, Dequan  and
      Liu, Pengfei",
    editor = "Liakata, Maria  and
      Moreira, Viviane P.  and
      Zhang, Jiajun  and
      Jurgens, David",
    booktitle = "Proceedings of the 64th Annual Meeting of the {A}ssociation for {C}omputational {L}inguistics (Volume 1: Long Papers)",
    month = jul,
    year = "2026",
    address = "San Diego, California, United States",
    publisher = "Association for Computational Linguistics",
    doi = "10.18653/v1/2026.acl-long.337",
    pages = "7422--7440",
    ISBN = "979-8-89176-390-6"
}

@inproceedings{DBLP:conf/nips/LeHCLC25,
  author       = {Fayi Le and
                  Wenwu He and
                  Chentao Cao and
                  Dong Liang and
                  Zhuo{-}Xu Cui},
  editor       = {Danielle Belgrave and
                  Cheng Zhang and
                  Laura N. Montoya and
                  Hsuan{-}Tien Lin and
                  Razvan Pascanu and
                  Piotr Koniusz and
                  Marzyeh Ghassemi and
                  Nancy Chen and
                  Iv{\'{a}}n Vladimir Meza Ru{\'{\i}}z and
                  Arturo Loaiza{-}Bonilla},
  title        = {DualCnst: Enhancing Zero-Shot Out-of-Distribution Detection via Text-Image
                  Consistency in Vision-Language Models},
  booktitle    = {Advances in Neural Information Processing Systems 38: Annual Conference
                  on Neural Information Processing Systems 2025, NeurIPS 2025, San Diego,
                  CA, USA, December 2-7, 2025 / Mexico City, Mexico, November 30 - December
                  5, 2025},
  year         = {2025},
  bibsource    = {dblp computer science bibliography, https://dblp.org}
}

@inproceedings{DBLP:conf/nips/PangHGLW25,
  author       = {Xiaoyi Pang and
                  Xuanyi Hao and
                  Song Guo and
                  Qi Luo and
                  Zhibo Wang},
  editor       = {Danielle Belgrave and
                  Cheng Zhang and
                  Laura N. Montoya and
                  Hsuan{-}Tien Lin and
                  Razvan Pascanu and
                  Piotr Koniusz and
                  Marzyeh Ghassemi and
                  Nancy Chen and
                  Iv{\'{a}}n Vladimir Meza Ru{\'{\i}}z and
                  Arturo Loaiza{-}Bonilla},
  title        = {ICLScan: Detecting Backdoors in Black-Box Large Language Models via
                  Targeted In-context Illumination},
  booktitle    = {Advances in Neural Information Processing Systems 38: Annual Conference
                  on Neural Information Processing Systems 2025, NeurIPS 2025, San Diego,
                  CA, USA, December 2-7, 2025 / Mexico City, Mexico, November 30 - December
                  5, 2025},
  year         = {2025},
  bibsource    = {dblp computer science bibliography, https://dblp.org}
}

@inproceedings{DBLP:conf/iclr/ZhouZL0Z25,
  author       = {Xiaoling Zhou and
                  Mingjie Zhang and
                  Zhemg Lee and
                  Wei Ye and
                  Shikun Zhang},
  title        = {HaDeMiF: Hallucination Detection and Mitigation in Large Language
                  Models},
  booktitle    = {The Thirteenth International Conference on Learning Representations,
                  {ICLR} 2025, Singapore, April 24-28, 2025},
  publisher    = {OpenReview.net},
  year         = {2025},
  bibsource    = {dblp computer science bibliography, https://dblp.org}
}

@inproceedings{DBLP:conf/acl/0008G25,
  author       = {Hongyu Chen and
                  Seraphina Goldfarb{-}Tarrant},
  editor       = {Wanxiang Che and
                  Joyce Nabende and
                  Ekaterina Shutova and
                  Mohammad Taher Pilehvar},
  title        = {Safer or Luckier? LLMs as Safety Evaluators Are Not Robust to Artifacts},
  booktitle    = {Proceedings of the 63rd Annual Meeting of the Association for Computational
                  Linguistics (Volume 1: Long Papers), {ACL} 2025, Vienna, Austria,
                  July 27 - August 1, 2025},
  pages        = {19750--19766},
  publisher    = {Association for Computational Linguistics},
  year         = {2025},
  doi          = {10.18653/V1/2025.ACL-LONG.970},
  bibsource    = {dblp computer science bibliography, https://dblp.org}
}
\endgroup

\section{Proofs and Attaining Constructions}
\label{app:deployment-theory}

This appendix proves the results in Section~\ref{sec:deployment-theory} and
gives the constructions that attain them, which makes each row of
Table~\ref{tab:exchange-rate} sharp.  Probability laws are defined on the
declared trajectory space; all spaces are standard Borel and all declared
events are measurable.  Finite or countable structure is assumed only where a
result states it.  Write $P_T$ for the pushforward of $P$ under the declared
value-relevant projection and $[x]_+=\max\{x,0\}$.

Two elementary facts are used throughout.  For any measurable $f$ with range
in $[0,1]$,
\begin{equation}
  |\mathbb E_Pf-\mathbb E_Qf|
  \leq\operatorname{TV}(P,Q),
  \label{eq:tv-bounded-expectation}
\end{equation}
and projecting a law cannot increase total variation.

\subsection{Witness and Reachable-Set Bounds}
\label{app:direct-bounds}

Proposition~\ref{prop:attack-witness} holds because
$P_{\mathrm m,\sigma}^{g}\in\mathcal C_g$, so the supremum defining $W_Z(g)$
is at least the value at that law.  Proposition
\ref{prop:direct-reachable-certificate} holds because a supremum over a
superset of $\mathcal C_g$ dominates the supremum over $\mathcal C_g$, with
equality when the two classes coincide.

The corresponding negative entry in Table~\ref{tab:exchange-rate} is equally
immediate.  A finite set of tested modifications
$\{\sigma_1,\ldots,\sigma_n\}\subseteq\Sigma$ generates laws lying inside
$\mathcal C_g$.  For any candidate ceiling $c<1$, consider any $\Sigma$ that
also contains a strategy whose law places all mass on a trace with $v_Z=1$
and which agrees with $\sigma_1$ on every coordinate the enumeration
observed.  The enumeration is unchanged and $W_Z(g)=1$.  Hence no supremum
over an inner sample bounds $W_Z(g)$ above at any sample size, unless the
sample is shown to exhaust $\Sigma$, in which case it is no longer a sample
and the equality case of Proposition~\ref{prop:direct-reachable-certificate}
applies.

\subsection{Frontier Bounds}
\label{app:frontier-bounds}

\begin{proof}[Proof of Proposition~\ref{prop:frontier-order-restriction}]
The inclusion
$\mathcal K_D^{\mathrm{real}}\subseteq\mathcal K_D^{\mathrm{out}}$
implies that every law feasible for the realizable frontier is feasible for
the outer frontier.  Minimizing the same objective over the superset gives
the frontier order.  The restriction statement follows because minimizing
over a subset cannot lower the optimum.  This argument applies to either law
set without identifying the two sets.
\end{proof}

Under the uniform dual-use relation of Equation~\ref{eq:uniform-dual-use},
every feasible law $\mu$ obeys
\[
  \mathbb E_\mu v_Z
  \geq\rho\mathbb E_\mu b
  \geq\rho q,
\]
and minimization proves the stated floor for both frontiers.  The uniform
premise cannot be replaced on an infinite space by pointwise positivity.  For
$\mathcal T_Z=\{t_n:n\geq1\}$, $b(t_n)=1$, and $v_Z(t_n)=1/n$, every trace
has positive adverse value but the frontier at $q=1$ has infimum zero and no
optimizer.

The floor $\rho q$ is attained.  On a two-point space with
$b(t_1)=1,\ v_Z(t_1)=\rho$ and $b(t_0)=v_Z(t_0)=0$, the feasible law placing
mass $q$ on $t_1$ has legitimate value exactly $q$ and adverse value exactly
$\rho q$.

When a bound on $\delta_T(g)$ accompanies the dual-use premise, the pointwise
relation gives more than the composed frontier route.  Fix
$g\in\mathcal G_D(q)$, write $\mu=(P_{\mathrm b}^{g})_T$, and for any
$Q\in\mathcal C_g$ let $\mu\wedge Q_T$ be the common part of the two laws,
the largest positive measure dominated by both; its total mass is
$1-\operatorname{TV}(\mu,Q_T)$.  Because $v_Z\geq0$, because
$\mu\wedge Q_T\leq\mu$ inherits the relation $v_Z\geq\rho b$, and because
$b\leq1$ with $\mathbb E_\mu b\geq q$,
\begin{align*}
  \mathbb E_Qv_Z
  &\geq\int v_Z\,d(\mu\wedge Q_T)
  \geq\rho\int b\,d(\mu\wedge Q_T)\\
  &\geq\rho\bigl[q-\operatorname{TV}(\mu,Q_T)\bigr].
\end{align*}
Selecting $Q$ with $\operatorname{TV}(\mu,Q_T)$ arbitrarily near
$\delta_T(g)$ and using $W_Z(g)\geq0$ prove the endpoint of
Table~\ref{tab:exchange-rate}, $W_Z(g)\geq\rho[q-\delta_T(g)]_+$.  It is
attained: moving mass $\delta_T(g)$ from $t_1$ to $t_0$ in the construction
above yields a law at total variation exactly $\delta_T(g)$ from the benign
law whose adverse value is exactly $\rho[q-\delta_T(g)]$, so no larger lower
bound follows from $\rho$, $q$, and a bound on $\delta_T(g)$.

\subsection{Value-Relevant Simulation}
\label{app:simulation-proof}

\begin{proof}[Proof of Proposition~\ref{prop:committed-policy-simulation}]
Fix a committed policy $g$.  For every $\xi>0$, select
$Q_\xi\in\mathcal C_g$ such that
\[
  \operatorname{TV}((P_{\mathrm b}^{g})_T,(Q_\xi)_T)
  \leq\delta_T(g)+\xi.
\]
Applying Equation~\ref{eq:tv-bounded-expectation} to $v_Z$ gives
\[
  W_Z(g)
  \geq\mathbb E_{Q_\xi}v_Z
  \geq\mathbb E_{P_{\mathrm b}^{g}}v_Z-\delta_T(g)-\xi.
\]
Letting $\xi$ vanish and using $W_Z(g)\geq0$ proves the bound.
\end{proof}

The coefficient of $\delta_T(g)$ is sharp.  On a two-point $T$ space, move
mass $\delta$ from a point with $v_Z=1$ to one with $v_Z=0$.  The expectation
difference and the total variation are both $\delta$, so no coefficient
smaller than one is valid.

\begin{proof}[Proof of Theorem~\ref{thm:simulation-collapse}]
Fix a policy $g\in\mathcal G_D(q)$.  Its benign $T$ law belongs to
$\mathcal K_D^{\mathrm{real}}$ and is feasible at $q$, hence
\[
  \mathbb E_{P_{\mathrm b}^{g}}v_Z
  \geq\Gamma_{D,Z}^{\mathrm{real}}(q)
  \geq\Gamma_{D,Z}^{\mathrm{out}}(q).
\]
Combining these inequalities with
Proposition~\ref{prop:committed-policy-simulation} and monotonicity of the
positive part proves Equation~\ref{eq:value-simulation-bound}.  If
$W_Z(g)\leq\beta$ and the realizable-frontier positive part is active, that
bound rearranges to $\beta+\delta_T(g)\geq\Gamma_{D,Z}^{\mathrm{real}}(q)$.
If it is inactive, the same inequality already holds because
$\delta_T(g)\geq\Gamma_{D,Z}^{\mathrm{real}}(q)$.

If every feasible policy has $\delta_T(g)=0$, taking the infimum over $g$
gives $R_{D,Z}(q)\geq\Gamma_{D,Z}^{\mathrm{real}}(q)$.  Now suppose
$g^\star$ induces a realizable-frontier optimizer and satisfies the value
equalizer condition
$\sup_{Q\in\mathcal C_{g^\star}}\mathbb E_Qv_Z
 =\mathbb E_{P_{\mathrm b}^{g^\star}}v_Z$.  Then
\[
  W_Z(g^\star)
  =\mathbb E_{P_{\mathrm b}^{g^\star}}v_Z
  =\Gamma_{D,Z}^{\mathrm{real}}(q),
\]
which supplies the reverse inequality for $R_{D,Z}(q)$ and proves equality.
No equality of malicious laws was used.
\end{proof}

\begin{proof}[Proof of Proposition~\ref{prop:sequential-simulation}]
Condition on the two processes having identical histories before update $t$.
Maximal coupling of their next evidence kernels makes the updates disagree
with probability at most $\eta_t$.  Couple each deployment-controlled or
remaining random draw through its common conditional kernel; such a draw
cannot create the first disagreement while its inputs agree.  The chain rule
over successive updates therefore leaves the processes coupled with
probability at least $\prod_t(1-\eta_t)$, so a first disagreement occurs with
probability at most $1-\prod_t(1-\eta_t)$.  The coupling characterization of
total variation then bounds
$\operatorname{TV}(P_{\mathrm b}^{g},P_{\mathrm m,\sigma}^{g})$ by the same
quantity.  Projecting a law cannot increase total variation, and the selected
malicious law is one candidate in the infimum defining $\delta_T(g)$.  These
two facts prove Equation~\ref{eq:sequential-simulation-bound}.
\end{proof}

The bound is attained: let every deployment transition be trivial, let the
malicious process deviate to a fresh symbol independently with probability
exactly $\eta_t$ at each update, and let $\Sigma$ contain that strategy
alone; each conditional distance is then exactly $\eta_t$ and the identity
projection gives $\delta_T(g)=1-\prod_t(1-\eta_t)$.

The premise is conditional by necessity, which is the content of the
corresponding negative row.  Split the histories into a set
$\mathcal H_{\mathrm{dis}}$ after which the benign and malicious evidence
kernels disagree maximally, so that each $h\in\mathcal H_{\mathrm{dis}}$
admits an event $A_h$ of benign conditional probability one and malicious
conditional probability zero, and a complement $\mathcal H_{\mathrm{agr}}$
after which the two kernels coincide.  Let the benign law give
$P_{\mathrm b}^{g}(\mathcal H_{\mathrm{dis}})=\bar\eta$ and
$P_{\mathrm b}^{g}(\mathcal H_{\mathrm{agr}})=1-\bar\eta$, so a suite drawn
from the benign workload reports marginal disagreement rate exactly
$\bar\eta$.  Let $\Sigma$ contain a single strategy $\sigma$, and let that
strategy drive the process into $\mathcal H_{\mathrm{dis}}$ with probability
one.  Let $A$ be the event that the history lies in
$\mathcal H_{\mathrm{agr}}$, or lies in $\mathcal H_{\mathrm{dis}}$ and the
next update falls in $A_h$.  Then
$P_{\mathrm b}^{g}(A)=(1-\bar\eta)+\bar\eta=1$ while
$P_{\mathrm m,\sigma}^{g}(A)=0$, so the two trajectory laws are mutually
singular and the identity projection gives $\delta_T(g)=1$ however small
$\bar\eta$ is.  Proposition~\ref{prop:sequential-simulation} asks for a
distance that holds after every shared history, and the smallest such value
here is one; $\bar\eta$ is the benign average of those distances, and the
attacker selects the histories the average treats as rare.  An average over
histories therefore constrains neither the kernel after any particular history
nor $\delta_T(g)$.

\begin{proof}[Proof of Proposition~\ref{prop:value-law-invariance}]
Each of $B(g)$, $W_Z(g)$, and $\delta_T(g)$ is a functional of the benign
$T$ law and of the set $\mathcal M_g^T$ alone: $B(g)$ integrates $b$ against
the former, $W_Z(g)$ maximizes $\mathbb E v_Z$ over the latter, and
$\delta_T(g)$ minimizes total variation between the former and members of
the latter.  Two deployments agreeing on both objects therefore agree on all
three.
\end{proof}

\subsection{Success-Region Sharpness}
\label{app:success-region-proof}

\begin{proof}[Proof of Proposition~\ref{prop:success-region-certificate}]
For any $Q\in\mathcal C_g$, the reachable envelope gives
$Q(\mathcal R_g)=1$, so $v_Z\leq r_G$ holds $Q$-almost surely on $G$.
Boundedness of $v_Z$ and $Q(G)\geq\lambda_G$ with $1-r_G\geq0$ give
\[
  \mathbb E_Qv_Z
  \leq r_GQ(G)+1-Q(G)
  \leq1-\lambda_G(1-r_G).
\]
Maximization over $Q$ proves the bound.  For sharpness, a two-region law
placing mass $\lambda_G$ on value $r_G$ inside $G$ and the remaining mass on
value one outside $G$ attains it in a model consistent with these two
parameters.  No smaller uniform bound therefore follows from $\lambda_G$ and
$r_G$ alone.
\end{proof}

\subsection{Closed-Mediation Tightness}
\label{app:closed-mediation-proof}

\begin{proof}[Proof of Theorem~\ref{thm:local-contract-lift}]
For any $Q\in\mathcal C_g$, boundedness of $v_Z$ and the continuation premise give
\[
  \mathbb E_Qv_Z
  \leq
  rQ(A\cap F^c)+1-Q(A\cap F^c)
  =1-(1-r)Q(A\cap F^c).
\]
The robust contract and coverage definition imply
\[
  Q(A\cap F^c)
  =Q(A)-Q(A\cap F)
  \geq(1-\epsilon)Q(A)
  \geq\alpha(1-\epsilon).
\]
Substitution and maximization over $Q$ prove
Equation~\ref{eq:closed-mediation-bound}.

The bound is sharp from exactly these three quantities.  On a three-atom
space, assign mass $\alpha(1-\epsilon)$ to $A\cap F^c$ with value $r$, mass
$\alpha\epsilon$ to $A\cap F$ with value one, and mass $1-\alpha$ to $A^c$
with value one.  The coverage, contract, and continuation inequalities all hold
with equality, and the expectation is $1-\alpha(1-\epsilon)(1-r)$.  A
singleton attacker class containing this law therefore attains the bound.

The sharp value is below one exactly when $\alpha(1-\epsilon)(1-r)>0$, which
is equivalent to the three strict conditions in
Equation~\ref{eq:three-open-gates}.  This proves both the if-and-only-if
statement and the impossibility of a stronger uniform bound.
\end{proof}

Equation~\ref{eq:accuracy-gated-value} follows by direct subtraction of the
two sharp bounds, so the stated contribution of detection accuracy is exact rather
than an estimate.  Corollary~\ref{cor:local-perfection} follows by setting
$\epsilon=0$ in the tight construction.  The $r=1$ construction may take $A$
to be the whole space and $F$ empty: the local fact then holds without error
on every invocation, yet $v_Z=1$ on the success region and the deployment
value remains one.

At the strict boundary, $1-\alpha(1-\epsilon)(1-r)=0$ requires
$\alpha(1-\epsilon)(1-r)=1$, and since each factor lies in $[0,1]$ this holds
exactly when $\alpha=1$, $\epsilon=0$, and $r=0$.  If any one of the three
fails, the three-atom construction has strictly positive value, so no
zero-residual certificate follows from these quantities.

\subsection{Robust Trusted State}
\label{app:robust-state-proof}

This subsection states the robust frontier summarized in
Section~\ref{subsec:reachability-state} and proves its bound.  Write
$T=(N,\bar T)$, let $p_{\mathrm b}$ be the benign marginal of trusted state
$N$ and $\mathfrak P_{\mathrm m}$ the set of maliciously reachable state
marginals, both common to all feasible policies, and let
$\mathcal K_D^{N}$ be the set of kernels realized by some $g\in\mathcal G_D$
as the benign conditional law of $\bar T$ given $N$.  The robust frontier is
\begin{equation}
  \Gamma_{D,Z}^{N}(q)=
  \inf_{\substack{k\in\mathcal K_D^{N}:\
        \mathbb E_{p_{\mathrm b}k}b\geq q}}
  \ \sup_{p\in\mathfrak P_{\mathrm m}}
  \mathbb E_{pk}v_Z.
  \label{eq:robust-state-frontier}
\end{equation}

\begin{theorem}[Robust conditional simulation]
\label{thm:conditional-simulation}
If for every feasible $g$ and every $p\in\mathfrak P_{\mathrm m}$ the attacker
can induce $p(dn)P_{\mathrm b}^{g}(d\bar t\mid n)\in\mathcal M_g^T$, then
$R_{D,Z}(q)\geq\Gamma_{D,Z}^{N}(q)$.  Moreover, with
$d_\star=\inf_{p\in\mathfrak P_{\mathrm m}}
 \operatorname{TV}(p_{\mathrm b},p)$,
\begin{equation}
  \Gamma_{D,Z}^{N}(q)
  \geq
  [\Gamma_{D,Z}^{\mathrm{real}}(q)-d_\star]_+.
  \label{eq:robust-state-data-processing}
\end{equation}
\end{theorem}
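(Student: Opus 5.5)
The plan has two parts, mirroring the two claims of Theorem~\ref{thm:conditional-simulation}: first a policy-by-policy lower bound on $W_Z(g)$ through the robust frontier, then a comparison of the robust frontier with the realizable frontier by the data-processing inequality.

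\textbf{Part one: $R_{D,Z}(q)\geq\Gamma_{D,Z}^{N}(q)$.}
\begin{itemize}
  \item Fix $g\in\mathcal G_D(q)$ and let $k_g=P_{\mathrm b}^{g}(d\bar t\mid n)$ be its benign kernel, so $k_g\in\mathcal K_D^{N}$.
  \item Since $(P_{\mathrm b}^{g})_T=p_{\mathrm b}k_g$ and $B(g)\geq q$, the kernel satisfies $\mathbb E_{p_{\mathrm b}k_g}b\geq q$ and is therefore feasible in Equation~\ref{eq:robust-state-frontier}.
  \item By hypothesis, for each $p\in\mathfrak P_{\mathrm m}$ the law $pk_g$ lies in $\mathcal M_g^T$. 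It is the $T$-marginal of some $Q\in\mathcal C_g$, so $W_Z(g)\geq\mathbb E_{pk_g}v_Z$.
  \item Taking the supremum over $p$ gives $W_Z(g)\geq\sup_p\mathbb E_{pk_g}v_Z\geq\Gamma_{D,Z}^{N}(q)$.
  \item Taking the infimum over $g$ gives $R_{D,Z}(q)\geq\Gamma_{D,Z}^{N}(q)$. This step only uses that feasible kernels come from feasible policies, and that $\mathcal C_g$ is the class whose $T$-marginals define $\mathcal M_g^T$.
\end{itemize}

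\textbf{Part two: Equation~\ref{eq:robust-state-data-processing}.}
\begin{itemize}
  \item Fix a feasible kernel $k$, realized by some $g$ with $\mathbb E_{p_{\mathrm b}k}b\geq q$.
  \item Then $p_{\mathrm b}k=(P_{\mathrm b}^{g})_T$ lies in $\mathcal K_D^{\mathrm{real}}$ and is $q$-feasible, so $\mathbb E_{p_{\mathrm b}k}v_Z\geq\Gamma_{D,Z}^{\mathrm{real}}(q)$.
  \item For any $p\in\mathfrak P_{\mathrm m}$, both $pk$ and $p_{\mathrm b}k$ are pushed through the same Markov kernel $k$. Data processing gives $\operatorname{TV}(pk,p_{\mathrm b}k)\leq\operatorname{TV}(p,p_{\mathrm b})$; in fact the two are equal, since the $N$-coordinate is retained.
  \item Equation~\ref{eq:tv-bounded-expectation} then gives $\mathbb E_{pk}v_Z\geq\Gamma_{D,Z}^{\mathrm{real}}(q)-\operatorname{TV}(p_{\mathrm b},p)$.
  \item Taking the supremum over $p$ yields $\sup_p\mathbb E_{pk}v_Z\geq\Gamma_{D,Z}^{\mathrm{real}}(q)-d_\star$, using $\sup_p(-\operatorname{TV})=-\inf_p\operatorname{TV}$.
  \item Since $v_Z\geq0$, the left side is also nonnegative, which produces the positive part. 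Taking the infimum over feasible $k$ gives Equation~\ref{eq:robust-state-data-processing}.
\end{itemize}

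\textbf{Expected obstacle: measure-theoretic bookkeeping.}
\begin{itemize}
  \item The compositions $pk$ need regular conditional distributions. These exist because the spaces are declared standard Borel.
  \item The realized kernel $k$ must be identified with a policy $g$ whose benign law is exactly $p_{\mathrm b}k$. This holds because $p_{\mathrm b}$ is common to all feasible policies, so the kernel together with $p_{\mathrm b}$ determines the benign $T$-law. Kernels that agree $p_{\mathrm b}$-almost surely but differ elsewhere do not affect $p_{\mathrm b}k$. They can, however, change $pk$ when $p\not\ll p_{\mathrm b}$.
  \item I would handle this by fixing, for each $g$, the version of the kernel the hypothesis refers to. Part two then holds for every version, since the data-processing step is valid for any fixed kernel.
\end{itemize}
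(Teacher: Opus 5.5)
Your proposal is correct and follows the paper's proof essentially step for step. You first bound $W_Z(g)\geq\sup_{p}\mathbb E_{pk_g}v_Z\geq\Gamma_{D,Z}^{N}(q)$ through the copy premise, and you then compare the frontiers by applying Equation~\ref{eq:tv-bounded-expectation} under the common kernel and letting the malicious marginal approach $d_\star$; your justification of the positive part from $v_Z\geq0$ and your remark on kernel versions when $p\not\ll p_{\mathrm b}$ are details the paper leaves implicit, absorbing the latter into its definition of $\mathcal K_D^{N}$ as the set of realized kernels.
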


The definition of $\mathcal K_D^{N}$ supplies both inclusions the argument
uses.  On a standard Borel trace space, every feasible $g$ has a benign
disintegration $p_{\mathrm b}(dn)k_g(d\bar t\mid n)$, and
$k_g\in\mathcal K_D^{N}$ by definition.  Conversely, every
$k\in\mathcal K_D^{N}$ is the benign disintegration of some policy in
$\mathcal G_D$, so its benign law $p_{\mathrm b}k$ belongs to
$\mathcal K_D^{\mathrm{real}}$.

\begin{proof}[Proof of Theorem~\ref{thm:conditional-simulation}]
Fix a feasible $g$ and disintegrate its benign $T$ law as
$p_{\mathrm b}(dn)k_g(d\bar t\mid n)$.  By the robust conditional copy
premise, for every $p\in\mathfrak P_{\mathrm m}$ the law $pk_g$ is
attacker-reachable.  Hence
\[
  W_Z(g)
  \geq
  \sup_{p\in\mathfrak P_{\mathrm m}}
  \mathbb E_{pk_g}v_Z
  \geq\Gamma_{D,Z}^{N}(q),
\]
because $k_g\in\mathcal K_D^{N}$ and the legitimate value is at least $q$.
Taking the infimum over feasible $g$ proves the lower bound.  If a robust
optimizer $k^\star$ has a realizing policy $g^\star$ for which no attacker
law yields value exceeding
$\sup_{p\in\mathfrak P_{\mathrm m}}\mathbb E_{pk^\star}v_Z$, the
bound is attained at $g^\star$.

For the data-processing bound, fix a feasible $k$.  Its benign law is feasible
for the realizable frontier, so
$\mathbb E_{p_{\mathrm b}k}v_Z\geq\Gamma_{D,Z}^{\mathrm{real}}(q)$.  For every
$p\in\mathfrak P_{\mathrm m}$, the common kernel $k$ and
Equation~\ref{eq:tv-bounded-expectation} give
\[
  \mathbb E_{pk}v_Z
  \geq
  \Gamma_{D,Z}^{\mathrm{real}}(q)
  -\operatorname{TV}(p_{\mathrm b},p).
\]
Choosing a sequence of malicious marginals whose distance tends to $d_\star$
and taking the infimum over $k$ proves
Equation~\ref{eq:robust-state-data-processing}.
\end{proof}

The bound depends on $d_\star$ rather than an average, which is the content
of the corresponding table row.  Let acquisition succeed on a single allowed
path with probability one and fail on $m$ other paths.  The average success
rate over paths tends to zero as $m$ grows, while $d_\star$ and therefore the
bound are unchanged.

\subsection{Composition Bounds and Tightness}
\label{app:composition-proofs}

\begin{proof}[Proof of Proposition~\ref{prop:composition-bounds}]
For a serial path, the probability chain rule gives
\[
  \Pr\!\left(\bigcap_{j=1}^{m}F_j\right)
  =\Pr(F_1)
   \prod_{j=2}^{m}
   \Pr\!\left(F_j\mid\bigcap_{i<j}F_i\right).
\]
The history-uniform premise bounds every factor, proving the product bound
without independence.  Independent layers attain it.

With marginal bounds only, $\Pr(\cap_jF_j)\leq\Pr(F_i)\leq\epsilon_i$ for
each $i$, and taking the smallest bound proves the minimum bound.  It is
attained when all events share a common subevent of probability
$\min_i\epsilon_i$, so a stack of any depth whose layers fail together
supports no bound better than its single strongest layer.

The union bound proves the alternative-path bound, and disjoint events attain
it until total mass reaches one.  For retries, the chain rule gives
\[
  \Pr\!\left(\bigcap_jE_j^c\right)
  =\prod_j\Pr\!\left(E_j^c\mid\bigcap_{i<j}E_i^c\right).
\]
Each factor lies between $1-\bar p_j$ and $1-\ell_j$.  Multiplication and
subtraction from one prove
Equation~\ref{eq:retry-cumulative-bound}; sequential Bernoulli trials attain
both endpoints.
\end{proof}

For additive outcomes, suppose $T=(T_1,\ldots,T_n)$,
$b(T)=\sum_i b_i(T_i)$, and $v_Z(T)=\sum_i v_i(T_i)$, with a separate target
$q_i$ for each coordinate.  Write $\Gamma_i$ for the frontier of coordinate $i$.  Every feasible joint
law then has marginal value at least $\Gamma_i(q_i)$, so linearity gives the
sum as a lower bound on the joint frontier.  If the
marginal optimizers have an admissible joint coupling, that coupling
attains the sum.  Correlation is unrestricted; the conclusion fails for a
union or intersection payoff because such a payoff is not additive.

\subsection{Combining Candidates}
\label{app:combination}

If $L_1,\ldots,L_m$ and $U_1,\ldots,U_n$ are supported endpoints for the same
$W_Z(g)$ under one anchor, then $W_Z(g)\geq\max_iL_i$ and
$W_Z(g)\leq\min_jU_j$, since each inequality holds separately.  Neither
combination need be sharp under the joint premises: a law attaining one
candidate can violate another's.  If $\max_iL_i>\min_jU_j$, no law satisfies
all the stated premises simultaneously, so at least one premise fails under
the fixed anchor and neither endpoint is available until the conflict is
resolved.

A candidate established for a subclass $\Sigma'\subsetneq\Sigma$ bounds only
the supremum over $\mathcal C_g$ restricted to $\Sigma'$.  Since
$\Sigma=\bigcup_k\Sigma_k$ gives
$W_Z(g)=\max_k\sup_{Q\in\mathcal C_g(\Sigma_k)}\mathbb E_Qv_Z$, a family of
subclass upper bounds combines into a bound for $\Sigma$ only when the
subclasses cover $\Sigma$, and the combined value is then their maximum
rather than any single one.

\section{Complete Search and Screening Protocol}
\label{app:search}

This appendix states the protocol under which the corpus is assembled and every
denominator in the paper is produced.  The supplementary materials provide the
common full-text review task, the final source-status list, and both channels'
per-source assessment records.

The search target follows Section~\ref{subsec:matrix-review-method}.  We look for
sources that make, or
directly adjudicate, at least one locatable claim that some deployed or
deployable intervention reduces LLM-enabled misuse or a closely specified
adverse deployment outcome.  Attack, red-teaming, and adaptive-evaluation
sources are therefore included: they assess deployment-safety claims on the
lower-bound side of the schedule.

The review task was frozen before full-text coding began.

\subsection{Information Sources and Exact Queries}

Three sources are used, reported in the PRISMA-S search-reporting
structure~\cite{Rethlefsen2021PRISMAS}.  \textbf{S1} arXiv, via the API, restricted to
\texttt{cs.CR}, \texttt{cs.CL}, \texttt{cs.AI}, \texttt{cs.LG},
\texttt{cs.SE}, \texttt{cs.MA}, and \texttt{stat.ML}.  \textbf{S2} the ACL
Anthology, screened offline by regular expression over a frozen full BibTeX
dump.  \textbf{S3} dblp venue enumeration for the major security and
machine-learning venues and their workshop volumes, with a deliberately
liberal title screen.  The arXiv and dblp windows run from 2023-01-01 to the freeze date.

\paragraph{Query terms.}
The query structure is derived from the language of the claim.  A deployment-safety claim in the sense of
Section~\ref{subsec:matrix-review-method} has a recurring surface form:
\emph{[intervention] reduces / prevents / bounds [adverse outcome] for
[deployed system] under [attacker or usage conditions]}.  Three facets are
read off that frame: facet~\textbf{A}, the deployed system (\texttt{LLM},
\texttt{foundation model}, \texttt{LLM agent}, and variants);
facet~\textbf{B}, the claim verb or evidential noun (\texttt{defense},
\texttt{mitigate}, \texttt{safeguard}, \texttt{guarantee}, plus adjudication
nouns such as \texttt{benchmark} and \texttt{red teaming}); and
facet~\textbf{C}, the adverse outcome (\texttt{misuse}, \texttt{jailbreak},
\texttt{prompt injection}, \texttt{exfiltration}, and related terms).  A
record is a candidate when it matches \texttt{A AND (B OR C)}.  A fourth
facet~\textbf{D}, the mechanism lexicon, is added by union solely to recover
sources matching neither \textbf{B} nor \textbf{C}.
The source-specific implementations apply this facet logic under the category
and date window stated above.

\subsection{Deduplication and Version Families}

A \emph{version family} is the set of records sharing a work identity, keyed by
a DOI-to-arXiv cross-link, or by normalized title similarity together with an
author-set Jaccard overlap above a fixed threshold, or by an explicit
``extended version of'' statement.  Candidate pairs may be generated
automatically, including by model, but every merge is human-confirmed and
logged with both identifiers.

Preprint, conference, and journal extension form one family.  A workshop paper
and its later full version form one family when the contribution is the same,
and two families when the later claim set differs materially.  System cards and
policies are never merged.  Each dated release is a separate source because
changes in deployment claims between releases are part of the analysis.

The coded claim comes by default from the latest peer-reviewed version
available at freeze.  If a claim exists only in the preprint and is weakened or
removed in the camera-ready, a separate instance is bound to the preprint
version and flagged \texttt{version\_divergence}.  Divergence between versions
is a reportable finding.  Every quoted or coded claim records the
identifier, version label, date, and a page or section locator.

\subsection{Screening, Eligibility, and Machine Assistance}

The executed sequence has five ordered stages (Figure~\ref{fig:pipeline}):
keyword-based identification; the hard authority gate; independent
title-and-abstract screening, with advancement requiring \texttt{include}
from both channels; human quality spot checks, with any detected quality
problem triggering a rerun of the preceding screening step; and randomized
full-text sampling and analysis.  Version-family deduplication reconciles
records between the authority gate and screening but does not add another
eligibility criterion.

\paragraph{Title-and-abstract screening.}
Screening is conservative: a record advances to the full-text pool only when
both channels independently record \texttt{include}.  Any disagreement, or an
\texttt{unsure} verdict from either channel, keeps the record out.  Human spot
checks assess the quality and rule compliance of the channel outputs.  They do
not replace the both-include rule with per-record adjudication.  When a spot
check detects a quality problem, the title-and-abstract screening step is run
again before the pool is finalized.

\paragraph{Full-text analysis layers.}
Within the randomized full-text sample, Layer~1 decides whether a source enters
the corpus; Layer~2 decides whether a codable claim instance exists.

\paragraph{Layer 1: source level.}
A source is included only if \textbf{I1} and \textbf{I2} both hold and at
least one of \textbf{I3} or \textbf{I4} holds.  \textbf{I1}
English full text is obtainable by the freeze date.  \textbf{I2} the work
concerns an LLM-based, LLM-integrated, or LLM-agentic deployed or deployable
system.  \textbf{I3} the work contains at least one \emph{locatable
sentence} that is citable to a section, page, or paragraph and that asserts or
directly tests whether an intervention changes an adverse deployment outcome,
or demonstrates that such an assertion fails.  A pure attack paper satisfies I3
because it adjudicates the existing claim that current deployments resist that
attack class.  \textbf{I4} the evidence-apparatus clause admits sources that
supply the instruments used to adjudicate such claims, including benchmarks,
evaluation-validity critiques, auditing-access analyses, and safety-case
templates; these sources are tagged \texttt{role=apparatus}.  Methodological sources that
describe how we work rather than what we study are tagged \texttt{role=method}
and are excluded from every corpus denominator.

Six exclusion codes are used: \textbf{E1} capability-only reporting with no
adverse-outcome claim; \textbf{E2} normative-only argument with no
intervention and no adverse-outcome evidence; \textbf{E3} non-LLM subject;
\textbf{E4} non-substantive item, with vendor system cards and policies never
excluded on length; \textbf{E5} secondary literature without primary claims; and \textbf{E6} duplicate within a
version family.

\paragraph{Layer 2: claim-instance level and the minimum anchoring threshold.}
For each included source, channels attempt to instantiate the anchor of
Equation~\ref{eq:systematization-anchor}.  An instance is created if and only
if \textbf{C1} and \textbf{C2} both hold and at least one of \textbf{C3} or
\textbf{C4} holds.  \textbf{C1}, the intervention is locatable: the source
identifies a concrete evaluated policy or configuration and where it acts in
the deployment.  \textbf{C2}, the adverse outcome is locatable: a named
outcome with an attached operationalization, not ``unsafe behaviour.''
\textbf{C3}, a comparison world is reported: guarded versus unguarded, guarded
versus a baseline defense, or before versus after.  \textbf{C4}, a formal
statement is given: a theorem, invariant, or architectural non-bypassability
argument, even without measurement.

Every remaining anchor coordinate the source does not state is recorded
\texttt{unknown}.  Coders never supply a missing coordinate by inference.  A
source that passes Layer~1 but fails C1 or C2 is recorded as
\texttt{included, no codable instance} and retained: these sources are
counted in every corpus denominator and must not be dropped.

Full-text exclusion requires an explicit criterion and a locator.

\paragraph{Machine assistance.}
Title-and-abstract screening and sampled full-text analysis are performed by
two independent model channels using Claude
Fable~5~\cite{Anthropic2026Fable} and GPT-5.6
Sol~\cite{OpenAI2026GPT56}.  At the title-and-abstract stage, advancement is
determined mechanically by the intersection of their \texttt{include}
decisions, subject to the human quality check and rerun rule above.  At full
text, each channel applies the same frozen review task, samples the full-text pool
independently, and remains blind to the other.  No \texttt{unknown} anchor
coordinate is filled from model background knowledge.

\paragraph{Flow accounting.}
Table~\ref{tab:app-funnel} reconciles the executed pipeline.  Identified
records pass a hard authority gate before title-and-abstract screening:
\textbf{G1} admits
records peer-reviewed at a fixed list of major security and machine-learning
venues, \textbf{G2} admits the remainder at a citations-per-year threshold
checked against an open bibliographic index, and \textbf{G3} covers
first-party vendor material, of which these database pools
contain none.  Gate-eligible records are consolidated into version families,
and only families included by both title-and-abstract channels enter the
full-text pool after the human quality check.  Full-text analysis then samples
that pool in two independently randomized sequences.  Twelve papers
receive full ten-slot depth coding and 187 receive endpoint-route wide
coding.  The two
strata overlap in one paper, and the coded set holds 198 distinct
papers.  The
wide-coding counts reconcile: confirmed records for each
source and channel pair split into
\texttt{corpus} and \texttt{apparatus}, and \texttt{corpus} records split into
those yielding at least one instance and those with none.  The wide-coded set
comprises 187 distinct papers~\cite{DBLP:conf/emnlp/PelrineITRGCGR23,DBLP:conf/emnlp/ChengSYHLTXL25,xing-etal-2026-llms,DBLP:conf/emnlp/00010ZL24,DBLP:conf/aaai/JiangLLM25,DBLP:conf/emnlp/LiSZFSLXYTJGZH25,DBLP:conf/acl/MoLWJAZZC26,DBLP:journals/corr/abs-2307-01458,DBLP:conf/acl/WangWLN26,DBLP:conf/uss/Liu0LDCYYSLZ0025,DBLP:conf/acl/ParkCKYK26,DBLP:conf/iclr/Wang00TXDC25,DBLP:conf/nips/HuSWWT25,DBLP:conf/uss/WangWJL00L0LR25,DBLP:conf/ndss/AbloveCQE26,DBLP:conf/aaai/WangXJYL26,DBLP:conf/aaai/ChangSSW26,DBLP:conf/naacl/ZhangXWMC24,DBLP:conf/acl/LiDWW23,DBLP:conf/nips/BasaniZ25,niess-kern-2025-ensemble,DBLP:conf/ccs/CohenBN25,DBLP:conf/ndss/ZhuangG0JXYY0H25,DBLP:conf/acl/LuoDL000X25,DBLP:conf/emnlp/SonKKHKJYP25,DBLP:conf/naacl/MaheshwaryYNMM25,DBLP:conf/emnlp/Wang0GTKLBE024,DBLP:conf/acl/0001LSL25,DBLP:conf/ccs/ChenQYZFDX24,DBLP:conf/acl/LuLZZWLZLYZ25,DBLP:conf/nips/ShenHW25,DBLP:conf/ndss/ZhangLZMC26,inan2023llamaguard,bhardwaj2024resta,DBLP:conf/icml/PengK024,DBLP:conf/emnlp/LiuXWS24,DBLP:conf/nips/Xu0L24,DBLP:conf/nips/WangHSYLLTHT25,DBLP:conf/acl/WuWCBZ26,DBLP:conf/emnlp/ChenHC25,DBLP:conf/acl/ChakrabortyPOD26,filandrianos-etal-2025-bias,DBLP:conf/acl/Chen0LLYCH025,DBLP:journals/corr/abs-2506-10805,jeoung-etal-2023-stereomap,DBLP:conf/emnlp/ChandlerSS24,DBLP:conf/iclr/KoCDMDKCPD25,pang2026paladin,DBLP:conf/acl/LuongLNN24,DBLP:journals/corr/abs-2506-07001,DBLP:conf/acl/LiLZX25,DBLP:conf/acl/JhaJMC0B24,DBLP:journals/corr/abs-2402-07510,DBLP:conf/emnlp/FonsecaBS25,DBLP:journals/corr/abs-2309-15817,nghiem-etal-2024-gotta,DBLP:journals/corr/abs-2410-06172,DBLP:conf/uss/ZhanCSS25,DBLP:conf/acl/HuangZWLZRYZJ26,DBLP:conf/aaai/ZhaoXLWLZ025,hu-etal-2026-lying,DBLP:conf/iclr/FormentoFN25,DBLP:conf/nips/LiLZXSZJPZZY25,DBLP:conf/iclr/LiuLCCZKH24,DBLP:conf/naacl/XuMWXC24,DBLP:conf/nips/WuTLCSL25,perez-etal-2022-red,DBLP:conf/acl/GengHLDCHHT26,DBLP:conf/iclr/LiuLSVMJM00X25,DBLP:conf/naacl/ZhangR25,DBLP:conf/aaai/BiYYZTTZWZYY26,DBLP:conf/emnlp/ChengCZJP25,DBLP:conf/nips/BurgerHN24,DBLP:conf/iclr/0010ZPB24,DBLP:conf/nips/WangHLL24,DBLP:journals/corr/abs-2404-17546,DBLP:conf/iclr/NasrRCHJCICTL25,DBLP:conf/aaai/KusakaSKTWA26,DBLP:conf/iclr/RichterHMK25,DBLP:conf/acl/DelavalYWQL26,DBLP:conf/aaai/BowenMCKGP25,DBLP:conf/iclr/DingWYL0S0025,DBLP:journals/corr/abs-2508-06194,saffari-etal-2025-beyond,DBLP:journals/corr/abs-2401-05561,DBLP:conf/aaai/He0HFZ025,zhou2024emulateddisalignment,DBLP:conf/naacl/ZhaoBHYM25,DBLP:conf/naacl/YuCLRG24,wang-etal-2025-speculative,DBLP:conf/acl/LiLQXSWS25,DBLP:conf/emnlp/LiangWHJW25,DBLP:conf/nips/KimYLGYO23,DBLP:conf/nips/KangCL25,DBLP:conf/uss/WangDCZLXWZZ25,DBLP:conf/naacl/ChenWN0YLL024,DBLP:conf/aaai/ZhouWJYYLW0H24,DBLP:conf/naacl/HuangLGSSWZWTQW24,DBLP:conf/sp/OhLPKK24,DBLP:journals/corr/abs-2311-04892,DBLP:conf/emnlp/ChenSSZ0L24,DBLP:conf/naacl/DongLBHV25,DBLP:conf/emnlp/LeongCWWL23,lu-etal-2023-inference,DBLP:conf/nips/DuZZMCHYX024,DBLP:conf/iclr/WangAHCJ25,lin-etal-2023-toxicchat,DBLP:conf/icml/0003G00RZ25,DBLP:conf/acl/ZhangLWSHL0L0H24,DBLP:conf/emnlp/LiPYXLZ25,DBLP:conf/naacl/BelkadiRMHN25,DBLP:conf/acl/FangTHPWWFL26,li-etal-2026-agencybench,DBLP:conf/nips/LeHCLC25,DBLP:conf/nips/PangHGLW25,DBLP:conf/iclr/ZhouZL0Z25,DBLP:conf/acl/ZhengCY00H25,DBLP:conf/acl/0008G25,DBLP:conf/uss/GongLZRCC0L25,DBLP:conf/acl/SiLBZ26,DBLP:conf/iclr/LiLH025,DBLP:conf/naacl/NalbandyanSB25,DBLP:conf/aaai/LiYWYWL25,DBLP:conf/acl/UddinB26,DBLP:conf/iclr/ShiAXHLB0Z24,zhang-etal-2024-extracting,DBLP:conf/uss/Wang0SW00TD025,DBLP:conf/uss/ZhangZGHG0Z00025,DBLP:conf/emnlp/ZhaoGHDZSHZQCL25,DBLP:conf/acl/Shahroz0YF025,DBLP:journals/corr/abs-2306-03341,DBLP:conf/emnlp/GuWHYZYC25,DBLP:conf/emnlp/MovvaKP24,DBLP:conf/emnlp/WuWXCOHD25,DBLP:conf/acl/DuttaMKYCK26,DBLP:conf/acl/HuangWBWLWNWQ26,DBLP:conf/naacl/PatelADKPPRKC25,hou-etal-2026-beyond,DBLP:conf/acl/FuPZWWX26,DBLP:conf/naacl/ZhangXCSKW25,DBLP:conf/emnlp/ParkK25,DBLP:conf/icml/SagarTS24,wang-etal-2025-model-unlearning,DBLP:conf/aaai/HuangJWMXYHZ26,DBLP:conf/aaai/LiangYZYH26,DBLP:conf/emnlp/JeonOLL25,DBLP:journals/corr/abs-2405-16720,zhao-etal-2026-gradient,DBLP:conf/aaai/NagireddyCSB24,DBLP:conf/ndss/DengLLWZLW0L24,DBLP:conf/acl/LiuLH0LCHT25,DBLP:conf/acl/KimL26,DBLP:conf/aaai/XiaD26,DBLP:conf/emnlp/Zhang0J24,DBLP:conf/nips/HayesGSSPG25,DBLP:conf/acl/ShahariarNBS26,DBLP:conf/nips/GuZHLWZYQWYTQW24,DBLP:conf/acl/CaiGLHCGH25,DBLP:conf/aaai/LuLZYWWZ26,DBLP:journals/corr/abs-2410-12949,DBLP:conf/aaai/LiCH25,DBLP:conf/aaai/ZhangZLZDTZ26,zhang-etal-2026-towards-trustworthy,DBLP:conf/emnlp/ZhaoJLPW24,DBLP:conf/naacl/CaoCC24,DBLP:conf/acl/MiaoLWSTDPW26,DBLP:journals/corr/abs-2403-01251,DBLP:conf/acl/SonJK25,DBLP:conf/naacl/SpliethoverKFMHHW25,DBLP:conf/emnlp/MasudSH0024,DBLP:conf/ccs/ChuWL0Q024,DBLP:conf/acl/PanL0LH0KY25,DBLP:conf/icml/WuJHL0SJ24,DBLP:conf/iclr/XuPZSC25,DBLP:conf/acl/BakmanYKZBAK25,gligoric-etal-2024-nlp,zhang-etal-2026-trendfact,DBLP:conf/emnlp/BinkowskiJSGK25,DBLP:conf/iclr/SarkarEEBAP25,isch-jennings-2026-narrative,DBLP:conf/acl/GuoSMGWC25,yue-etal-2024-evidence,DBLP:conf/emnlp/LeeJJY25a,DBLP:conf/acl/HuangZW26a,DBLP:journals/corr/abs-2310-02743,DBLP:conf/acl/BiH0YZHMFLWDSZL25,DBLP:journals/corr/abs-2405-13967}.

\begin{table}[t]
\centering
\footnotesize
  \caption{Identification, screening, and claim-instance accounting for the
  executed pipeline, in the PRISMA reporting
  structure~\cite{page2021prisma}.}
\label{tab:app-funnel}
\begin{tabular}{@{}lr@{}}
\toprule
Stage & Count \\
\midrule
\multicolumn{2}{@{}l}{\emph{Identification (full scope)}} \\
arXiv, five queries, deduped $\rightarrow$ scope-filtered & 42{,}029 $\rightarrow$ 10{,}387 \\
ACL Anthology, regex-registered $\rightarrow$ scoped & 3{,}899 \\
dblp, 11 venues (2023--2026), title prefilter & 4{,}990 \\
\textbf{Records entering the authority gate} & \textbf{19{,}276} \\
\midrule
\multicolumn{2}{@{}l}{\emph{Screening (full scope)}} \\
Gate-eligible (G1 5{,}662; G2 142; G3 0) & 5{,}804 \\
After version-family dedup & 4{,}810 \\
\textbf{Both-channel \texttt{include}} & \textbf{1{,}872} \\
\midrule
\multicolumn{2}{@{}l}{\emph{Full-text analysis (coded set)}} \\
\textbf{Distinct papers coded} & \textbf{198} \\
\quad Depth-coded papers & 12 \\
\qquad Depth-coded claim instances & 24 \\
\quad Wide-coded papers (one also depth-coded) & 187 \\
\qquad Source--channel wide-coding records & 190 \\
\qquad Excluded at Layer 1 & 17 \\
\qquad Included, \texttt{role=corpus} & 104 \\
\qquad\quad Corpus records yielding $\geq 1$ instance & 88 \\
\qquad\quad Included, \texttt{no codable instance} (C1 / C2) & 16 \\
\qquad Included, \texttt{role=apparatus} & 69 \\
\textbf{Wide-coded claim instances extracted} & \textbf{152} \\
Wide-coded instances on channel-overlap papers & 7 \\
\bottomrule
\end{tabular}
\end{table}

\subsection{Stopping Rule and Truncation Stability}

Full-text coding samples the full-text pool rather than coding it
exhaustively.  Sources are coded in randomized batches, and coding stops
when the reported quantities stop moving as batches are added.  Saturation
is defined on the estimates this paper reports.  It is not defined on the
supply of new boundary cases: each batch separately logs new proof-relevant
anchor coordinate values or slot distinctions, new boundary cases that would
require review task revision, and new claim-relation types, and those
ledgers continued to record new items through the final batch of both
channels.  Category novelty and estimate convergence are different
quantities, and the reported proportions are the ones the conclusions rest
on.

Truncating the coding sequences shows that convergence directly.  Dropping
the final batch of each channel, then the final two, then the final three,
moves the positive-residual share from 108 of 152 to 100 of 141, then 89 of
124, then 80 of 113: 71.1, 70.9, 71.8, and 70.8 percent.  The scale for that
movement is the estimate's own sampling error: clustering instances within
their source papers (87 clusters, design effect 1.51) gives a standard error
of 4.5 points, so truncation moves the share by about a quarter of one
standard error.  The two channels, drawing separately randomized samples,
reach 68 of 93 and 40 of 59, a difference inside that same error.  Further
batches would move the share within its noise rather than toward a different
value.

\makeatletter
\setlength{\@dblfptop}{0pt}
\makeatother
\section{Review Records, Coding Aggregates, and Cases}
\label{app:full-results}

This appendix documents the two coding strata of one full-text coding
pass.  The depth-coded records provide the slot-level aggregates.  The
wide-coded records provide the quantities required by each endpoint route.

\subsection{Claim-Instance Record Format}
\label{subsec:record-format}

A claim instance is the unit of coding.  Each record is a structured document
with five blocks.  Section~\ref{sec:theory-guided-systematization} defines the
slots and states, while the coding instrument in the supplementary materials
gives the per-slot rules both channels applied.

The \emph{anchor block} records the source identifiers, the verbatim anchored
claim $c_x$ and its locator, the claim-scope extension flag, the promotion
basis, and the coded anchor $\theta_x$ of
Equation~\ref{eq:systematization-anchor}.  It lists one anchor coordinate per
row, each with a locator or an explicit \textsc{absent} record, together with
the channel-assigned \texttt{harm\_locus} and the anchor-completeness
count.  The case tables below list the nine coordinates of
$\theta_x$ and count completeness over those nine; records add two context
rows, the outside-resource baseline and the external constraint set
$\mathcal L_x$, so their own counts use eleven as the denominator.

The \emph{slot block} lists the ten slots of
Table~\ref{tab:field-schedule} in fixed order.  Each slot includes a payload
from its allowed value set, an annotated evidence state, a locator, and the
required slot-specific sub-entries.  The \emph{derivation and cross-source
blocks} record channel-added inferences, including their premises, inference
rule, result, and residual uncertainty.  They also record attachment rows
from attack-evaluation sources; these rows are keyed by the attacked defense
and carry the attacking source's evidence.

The \emph{conclusion block} records the endpoints $L_x$ and $U_x$ of
Equation~\ref{eq:claim-instance-residual-interval}, the row used to compute
each endpoint or the reason it is unavailable, any separately declared
tolerance $\tau_x$, the residual conclusion and its
unresolved reasons, the independent \texttt{claim\_verdict}, and the
consistency flag.  The residual conclusion and the claim verdict are recorded
separately and can diverge; Section~\ref{subsec:case-p2} is the case where
they do.

Source-reported deployment costs are retained in each record alongside the
slots.  They do not enter any endpoint, and they are reported here only where
a case discussion uses them, as in the utility cost of the zero-continuation
design in Section~\ref{subsec:case-fides}.

\subsection{Supplementary Evidence Files}
\label{subsec:data-layer}

The supplementary materials contain four evidence files: the common review
task, the final source-status list, and one final assessment compilation for
each model channel.  All four cover the wide-coded portion of the full-text coding pass.  The
list records channel, source identifier, full-text eligibility status,
claim-instance status, and instance count.  The two assessment compilations
preserve the evidence and source locators behind the wide-coded
aggregates in Section~\ref{subsec:fulltext-aggregates}.  The depth-coded
subset is documented in this appendix instead.

Each assessment record carries the anchor coordinates, the slot evidence with
its locators, and the channel's structured conclusion block.  The endpoints and
residual conclusions tabulated below are those recorded in the conclusion
blocks, computed under Equation~\ref{eq:claim-instance-residual-interval},
and can be re-derived from it.  A positive
residual rules out $\tau_x=0$ under the anchor.

\subsection{Depth-Coded Subset Aggregates}
\label{subsec:devcorpus-aggregates}

Table~\ref{tab:devcorpus-fieldstates} tabulates the depth-coded subset's
slot-level evidence states, and Table~\ref{tab:fulltext-aggregates} gives its
instance-level distributions beside the wide-coded set's.  The subset is 24
claim instances, plus the two attack-evaluation source records it also covers, which yield no instances of their own and instead contribute 45 attachment rows keyed by the
defenses they attack.  The two channels agreed on 24 of 24
residual conclusions and on 20 of 24 claim verdicts.  The
table abbreviates supported as su, derived as de, claimed as cl, not-reported
as nr, and not-applicable as na.  The 45 attachment rows are excluded from the
table and
are all \texttt{supported}.

\begin{table}[t]
  \centering
  \footnotesize
  \caption{Evidence states by slot in the depth-coded subset (24
  instances; 240 slot assessments).}
  \label{tab:devcorpus-fieldstates}
  \begin{tabular}{@{}lrrrrr@{}}
    \toprule
    Slot & su & de & cl & nr & na \\
    \midrule
    \multicolumn{6}{@{}l}{\emph{Lower-endpoint evidence}} \\
    LB1          & 21 & 0 & 1  & 2  & 0 \\
    LB2          & 18 & 0 & 3  & 3  & 0 \\
    LB3          & 19 & 0 & 2  & 3  & 0 \\
    LB4          & 15 & 0 & 2  & 4  & 3 \\
    \midrule
    \multicolumn{6}{@{}l}{\emph{Upper-endpoint evidence}} \\
    UB0          & 8  & 0 & 0  & 13 & 3 \\
    UB1          & 23 & 0 & 1  & 0  & 0 \\
    UB2          & 19 & 0 & 2  & 3  & 0 \\
    UB3          & 21 & 0 & 2  & 1  & 0 \\
    UB4          & 4  & 1 & 2  & 17 & 0 \\
    UB5          & 19 & 0 & 2  & 0  & 3 \\
    \midrule
    Total        & 167 & 1 & 17 & 46 & 9 \\
    \bottomrule
  \end{tabular}
\end{table}

\subsection{Wide-Coded Set Aggregates}
\label{subsec:fulltext-aggregates}

Table~\ref{tab:fulltext-aggregates} reports the two coding strata together.  The two channels sampled the 1{,}872-paper full-text pool under different
random seeds.  In the wide-coded stratum, they coded 80 and 110 papers (187
distinct) and extracted 152 claim instances between them.
Because the channels assessed the overlap papers independently, instance
counts are pooled across channels rather than deduplicated at the paper
level; the three channel-overlap papers yield seven instances across the two channels.  Wide-coded claim verdicts are single-channel judgments; they stay
on the individual records and are not aggregated, so the verdict rows below
cover the depth-coded subset only.
The nine positive residuals occur with upheld and unresolved verdicts, and the
zero-residual instance has an upheld verdict.  All three \texttt{refuted}
verdicts occur in instances whose residual conclusion remains
\texttt{unresolved}.

\begin{center}
  \centering
  \footnotesize
  \captionsetup{hypcap=false}
  \captionof{table}{Instance-level outcomes in the depth-coded and wide-coded strata.}
  \label{tab:fulltext-aggregates}
  \begin{tabular}{@{}lrr@{}}
    \toprule
                                   & Depth-coded & Wide-coded \\
                                   & ($n=24$)    & ($n=152$) \\
    \midrule
    \multicolumn{3}{@{}l}{\emph{Residual conclusion}} \\
    \midrule
    zero residual established      & 1  & 0 \\
    unresolved                     & 14 & 44 \\
    positive residual established  & 9  & 108 \\
    \midrule
    \multicolumn{3}{@{}l}{\emph{Claim verdict}} \\
    \midrule
    upheld                         & 4  &  \\
    refuted                        & 3  &  \\
    unresolved                     & 17 &  \\
    \midrule
    \multicolumn{2}{@{}l}{\emph{Harm locus}} & \emph{Wide-coded} \\
    \midrule
    service-integrity              & & 81 \\
    external-world                 & & 52 \\
    mixed                          & & 19 \\
    \midrule
    \multicolumn{3}{@{}l}{\emph{Endpoint availability}} \\
    \midrule
    upper endpoint below one       & 1 & 0 \\
    both endpoints computable      & 1 & 0 \\
    \bottomrule
  \end{tabular}
\end{center}

\subsection{Case Record I: All Three Gates Closed}
\label{subsec:case-fides}

Instance \texttt{costa2025fides-01} anchors the by-design integrity
noninterference claim of the Fides information-flow-control
planner~\cite{costa2025fides}.  It is the depth-coded subset's only
zero-residual instance and its only instance with both endpoints computable.  The
anchored claim is that attacker-controlled untrusted data cannot influence
the agent's consequential tool actions.  Its locator is Section~1 and
Proposition~1 in Section~4.4 of the source.  The record has
\texttt{promotion\_basis} = explicit-deployment-claim,
\texttt{harm\_locus} = service-integrity, and claim-scope extension \textsc{yes}
because the source separately uses a broader prompt-injection headline.  The
broader empirical and confidentiality readings are represented by sibling
instances.

Table~\ref{tab:case-fides} gives all nine coordinates of the coded anchor $\theta_x$
and the complete ten-slot projection with payload, state, and locator, so the
conclusion below can be replayed without consulting the narrative in
Section~\ref{sec:deployment}.  The deployment-specific constraint set $\mathcal L_x$ is not a coordinate
of $\theta_x$.  The completeness count is
6/9: $S_x$, $D_x$, $\mathcal E_x$, $Z_x$, $\Sigma_x$, and $b_x$ are filled,
while $q_x$, $T_x$, and the general continuation value $v_{Z_x}$ remain
unknown.  The cross-episode part of the horizon is also unreported.  An
\textsc{absent} locator supplies no default, and a $(+)$ state applies to the
slot's sub-entry.

\begin{table*}[t]
  \centering
  \footnotesize
  \renewcommand{\arraystretch}{0.85}
  \caption{Complete coded record for \texttt{costa2025fides-01}: anchor,
  slots, states, and source locators.}
  \label{tab:case-fides}
  \begin{tabular}{@{}>{\raggedright\arraybackslash}p{0.09\textwidth}
    >{\raggedright\arraybackslash}p{0.44\textwidth}
    >{\raggedright\arraybackslash}p{0.12\textwidth}
    >{\raggedright\arraybackslash}p{0.27\textwidth}@{}}
    \toprule
    \multicolumn{4}{@{}l}{\emph{Coded anchor $\theta_x$}} \\
    Coordinate & Coded value & Status & Source locator \\
    \midrule
    $S_x$
      & ReAct-style LLM agent loop with native tool calling and no
        information-flow control
      & filled & Section~2; Section~7.2 Basic-planner baseline \\
    $D_x$
      & Fides taint-tracking planner, P-T/P-F policy engine, selective
        hide/reveal, \texttt{query\_llm}, and constrained decoding
      & filled & Sections~4 and~5 \\
    $\mathcal E_x$
      & Agentic tasks over email, calendar, banking, Slack, and travel tools;
        untrusted tool outputs; one AgentDojo user-task episode
      & filled & Section~2; Section~7; cross-episode persistence
        \textsc{absent} \\
    $Z_x$
      & Probability of the binary episode event in which untrusted tool data
        causes this agent to execute an unintended consequential tool action
      & filled & Section~2.1; Section~4.3 P-T; Section~7.2 ASR metric \\
    $\Sigma_x$
      & Defense-aware attacker controlling arbitrary untrusted tool content,
        knowing the configuration, and observing some tool effects;
        configuration compromise excluded
      & filled & Section~2.1 threat model \\
    $q_x$
      & Task-completion rate measured on 97 AgentDojo tasks, but no minimum
        normal-utility threshold declared
      & unknown & Section~8.2; threshold \textsc{absent} \\
    $T_x$
      & No explicit value-relevant trajectory projection
      & unknown & \textsc{absent} \\
    $b_x$
      & AgentDojo task-completion rate under a programmatic user-goal check
      & filled & Section~7.2 \\
    $v_{Z_x}$
      & Binary integrity semantics are stated, but a general continuation
        value is not separately reported
      & unknown & Section~8.1; route-specific noninterference in
        Proposition~1 \\
    \midrule
    \multicolumn{4}{@{}l}{\emph{Slots}} \\
    Slot & Payload & State & Source locator \\
    \midrule
    LB1 & present, clean attribution; injections 163(156) to 1(0),
      parenthetical figures excluding two tasks the source rules outside its
      policies; task-completion-rate improvement 16.7\% for o1
      & supported & Table~1; Section~8.2 Figure~4 \\
    LB2 & restriction-only; \texttt{query\_llm} is an additive candidate but
      the matched-utility comparator is absent
      & supported, claimed$(+)$ & Sections~4.3 and~5; comparator
      \textsc{absent} \\
    LB3 & reproducible, with zero benign adverse value
      & supported & Proposition~1 and noninterference definition in
      Section~4.4; Algorithm~5 lines~7 and~9; source Appendix~A \\
    LB4 & separation shown
      & supported & Section~4.1 default-untrusted labeling; Section~4.3 P-T;
      Proposition~1; Section~2.1 \\
    UB0 & no artifact-transfer or capability-removal route
      & not-applicable & Mediation architecture in Sections~4 and~5 \\
    UB1 & session-grain, decidable policy-success event
      & supported & Section~4.3; Section~4.4 complete-execution
      noninterference; Algorithm~5 \\
    UB2 & deployment coverage, $\alpha_x=1$
      & supported & Algorithm~5 line~7; Section~8.1 response channel outside
      this integrity outcome \\
    UB3 & robust integrity bound, $\epsilon_x=0$
      & supported & Proposition~1; source Appendix~A small-step semantics;
      Algorithm~5 \\
    UB4 & bounded continuation, $r_x=0$ on covered integrity successes
      & derived & Section~2.1 binary episode event behind $Z_x$;
      Proposition~1; Algorithm~5 \\
    UB5 & serial per-call stateful topology; whole-trace property proved
      directly rather than composed from marginal rates
      & supported & Algorithm~5; Section~4.4 \\
    \bottomrule
  \end{tabular}
\end{table*}

The conclusion procedure executes in four steps.  \textbf{C1} records the
general continuation coordinate $v_{Z_x}$ as unknown while the route-specific
integrity value is fully identified by the binary adverse event behind $Z_x$,
Proposition~1, and the derived $r_x=0$.  The endpoints therefore rest on supported or derived premises: the attacker controls untrusted tool
data, the configuration remains trusted, and the proof uses the same
consequential-action semantics as the guarded comparison.  \textbf{C2}
combines supported coverage $\alpha_x=1$, supported robust failure
$\epsilon_x=0$, and derived continuation $r_x=0$:
\begin{equation}
  U_x
  =1-\alpha_x(1-\epsilon_x)(1-r_x)
  =1-1(1-0)(1-0)
  =0 .
\end{equation}
With $Z_x$ normalized to $[0,1]$ this gives $0\leq Z_x(S_x[D_x])\leq0$, so
\textbf{C3} establishes a zero-residual certificate at the strict boundary
and clears the consistency flag.  \textbf{C4} assigns \texttt{upheld} because
the coverage, failure, and continuation evidence matches the integrity
claim's own attacker class.

The decisive coding judgment is whether UB3 is supported or merely claimed.
The final assessment assigns supported.  The source provides Proposition~1
and the small-step proof in its Appendix~A.  The integrity result does not
depend on deterministic model behavior, and within the declared $\Sigma_x$,
external data receives the conservative untrusted label by construction.
This ruling supports $\epsilon_x=0$.  The route-specific continuation value
has an independent basis.  $Z_x$ is the probability of the binary event that
untrusted data causes one unintended consequential action within the episode.
For this outcome, a covered success leaves no further adverse value.  The
record derives $r_x=0$ from this premise.  The outcome definition, not the
accuracy of the check, makes this endpoint available.

The record also retains the source-reported cost of this design: policy-on
task-completion loss up to 24.5\% with a data-dependent-task ceiling, and two
to three times the Basic planner's token use plus \texttt{query\_llm}
latency, at Section~8.2 Figure~4 and source Appendix~E Figure~7.  These
values do not enter the endpoint.  They quantify the cost of the design that
yields $r_x=0$, and a concrete deployment weighs them against its declared $q$.

The certificate covers the integrity-scoped outcome with trusted
configuration and correct conservative labeling.  Text response manipulation,
implicit confidentiality leakage, and the broader statement that the system
stops all prompt-injection attacks belong to the two sibling instances.

\subsection{Case Record II: Refuted Under the Claim's Own Class}
\label{subsec:case-p2}

Instance \texttt{zou2024circuitbreakers\allowbreak-01} anchors the
representation-rerouting claim of~\cite{zou2024circuitbreakers} for text-only
models.  Table~\ref{tab:case-p2} gives the condensed record.  It is one of the
three depth-coded instances whose claim is refuted while the residual
conclusion stays unresolved, so it illustrates the separation of residual
conclusion and claim verdict.  It also shows the role of attachment rows
contributed by an attack-evaluation
source~\cite{Nasr2026AttackerMovesSecond}, which yields no instance of its own.

\begin{table*}[b]
  \centering
  \scriptsize
  \renewcommand{\arraystretch}{0.92}
  \caption{\texttt{zou2024circuitbreakers-01}, condensed.  Coded anchor $\theta_x$
  completeness 5/9; \texttt{harm\_locus} = external-world;
  claim-scope extension \textsc{yes} (the anchored abstract claim reaches beyond
  a single-turn scope declaration).  The two \texttt{attachment} rows
  are contributed by the attack source and inherit this instance's anchor.}
  \label{tab:case-p2}
  \begin{tabular}{@{}>{\raggedright\arraybackslash}p{0.15\textwidth}
    >{\raggedright\arraybackslash}p{0.50\textwidth}
    >{\raggedright\arraybackslash}p{0.27\textwidth}@{}}
    \toprule
    Slot & Payload & State \\
    \midrule
    LB1 & present, clean attribution & supported \\
    LB2 & restriction-only & supported, claimed$(+)$ \\
    LB3 & simulable-output & supported \\
    LB4 & structural none & supported \\
    UB0 & silent none & not-reported \\
    UB1 & event-defined & supported \\
    UB2 & artifact-coverage & supported \\
    UB3 & reference-rate, class check \texttt{no-open-quantifier} & supported, claimed$(+)$ \\
    UB4 & none & not-reported \\
    UB5 & none, single component & supported \\
    \midrule
    attachment:LB3 & simulable-output at $\approx$100\% ASR & supported \\
    attachment:UB3 & \textbf{refuted}, $\epsilon\approx1$ & supported \\
    \midrule
    \multicolumn{3}{@{}l}{\emph{Residual conclusion and claim verdict}} \\
    \midrule
    \multicolumn{3}{@{}p{0.96\textwidth}}{%
      On the upper side there is no robust $\epsilon$ and no $r$, so
      $U_x=1$, the bound available before measurement.  On the lower side the
      attachment rows report a conditional failure rate and output
      simulability rather than an adverse value measured on the anchored
      $Z_x$ scale.  The benign continuation value needed by the simulation
      row is not reported either, so no nontrivial $L_x$ follows.
      State \texttt{unresolved} (\texttt{no-nontrivial-endpoint});
      verdict \texttt{refuted}.} \\
    \bottomrule
  \end{tabular}
\end{table*}

The verdict is \texttt{refuted} because supported attachment evidence
contradicts the anchored claim within its own declared class: the adaptive
attack reaches $\epsilon\approx1$, which also reclassifies the in-source suite
averages as a reference rate.  The residual conclusion nonetheless remains
\texttt{unresolved}.  Refuting the robustness claim invalidates an upper-bound
operand, while the lower-bound rows contributed by the attack source are not
measured on the anchored scale and therefore supply no lower endpoint.  This
difference is why the two outputs are recorded separately.

\end{document}